
\documentclass[12pt]{article}

\usepackage{amsmath,amsthm}

\usepackage{tikz}
\usetikzlibrary{arrows,matrix,positioning}

\usepackage{amssymb,latexsym}
\usepackage[all]{xy}
\usepackage{booktabs}
\usepackage{array,fullpage,verbatim}

\newtheorem{theorem}{Theorem}[section]
\newtheorem{lemma}[theorem]{Lemma}
\newtheorem{corollary}[theorem]{Corollary}
\newtheorem{proposition}[theorem]{Proposition}

\theoremstyle{definition}
\newtheorem{definition}[theorem]{Definition}

\newtheorem{example}[theorem]{Example}


\DeclareMathOperator{\rk}{rank}
\DeclareMathOperator{\row}{row}


\newcommand{\dt}{\cdot}

\newcommand{\tl}{\pmb{\circ}}
\newcommand{\zz}{\cdot}
 
\begin{document}

\title{Matrix Theory for Minimal Trellises} 
\author{Iwan M. Duursma \thanks{University of Illinois at Urbana-Champaign, Dept of Mathematics and Coordinated Science Laboratory. Partially supported by a grant from the Simons Foundation (\#280107). \rm{duursma@math.uiuc.edu}.} }

\date{September 28, 2015}

\maketitle

\begin{abstract}
Trellises provide a graphical representation for the row space of a matrix. 
The product construction of Kschischang and Sorokine builds minimal conventional trellises from matrices in minimal span form. Koetter and Vardy showed that  minimal tail-biting trellises can be obtained by applying the product construction to submatrices of a characteristic matrix. 

We introduce the unique reduced minimal span form of a matrix and we obtain an expression for the unique reduced characteristic matrix.
Among new properties of characteristic matrices we prove that characteristic matrices are in duality if and only if they have orthogonal column spaces, and that the transpose of a characteristic matrix is again a characteristic matrix if and only if the characteristic matrix is reduced. These properties have clear interpretations for the unwrapped unit memory convolutional code of a tail-biting trellis, they explain the duality for the class of Koetter and Vardy trellises, and they give a natural relation between the characteristic matrix based Koetter-Vardy construction and the displacement matrix based Nori-Shankar construction. 

For a pair of reduced characteristic matrices in duality, one is lexicographically first in a forward direction and the other is lexicographically first in the reverse direction. This confirms a conjecture by Koetter and Vardy after taking into account the different directions for the lexicographical ordering.

\end{abstract}

\section{Introduction} \label{S:intr}

A trellis is a directed graph with edge-labeled paths that represent the possible output sequences of an encoder. 
Trellises facilitate the decoding, via shortest path algorithms, of sequences that have been received over a noisy communication channel. Important examples of such algorithms are the Viterbi maximum likelihood decoder and the BCJR maximum a posteriori decoder. 
For linear block codes, encoding is given by matrix multiplication and a trellis represents the vectors in the row space of the matrix.  

We assume throughout that trellises have no multiple edges and in particular that a path in the trellis is uniquely determined by specifying the vertices along the path.
A conventional trellis comes with a partition $V_0 \cup V_1 \cup \cdots \cup V_n$ of the vertices, such that both $V_0$ and $V_n$ are singletons. In the more general setting of tail-biting trellises, $V_0$ and $V_n$ are in bijection. 
A path $(v_0,v_1,\ldots,v_n)$ is tail-biting if the bijection between $V_0$ and $V_n$ matches the head $v_n$ of the path to its tail $v_0$. It is common use to identify $V_0$ and $V_n$ and to refer to a tail-biting path as a cycle with origin $v_0$ or cycle for short.
A tail-biting path with edge labels $c_0,c_1,\ldots,c_{n-1}$ is said to represent the row vector $(c_0,c_1,\ldots,c_{n-1})$. 

\begin{figure}
\begin{center}
\begin{tikzpicture}
[scale=.5,auto=center,every node/.style={minimum size=0em}]


\newcommand{\y}{0}
\newcommand{\x}{2}
\newcommand{\z}{4}
\node[] at (\z+4,\y) {$V_0$};
\node[] at (\z+8,\y) {$V_1$};
\node[] at (\z+12,\y) {$V_2$};
\node[] at (\z+18,\y) {$V_{i}$};
\node[] at (\z+22,\y) {$V_{i+1}$};
\node[] at (\z+28,\y) {$V_{n-1}$};
\node[] at (\z+32,\y) {$V_n$};

\draw [dashed] (\z+2.9,\y+1) -- (\z+5.0,\y+1) -- (\z+5.0,\y+5) -- (\z+2.9,\y+5) -- (\z+2.9,\y+1); 
\draw [dashed]  (\z+6.9,\y+1) -- (\z+9.0,\y+1) -- (\z+9.0,\y+5) -- (\z+6.9,\y+5) -- (\z+6.9,\y+1); 
\draw [dashed]  (\z+10.9,\y+1) -- (\z+13.0,\y+1) -- (\z+13.0,\y+5) -- (\z+10.9,\y+5) -- (\z+10.9,\y+1); 
\draw [dashed]  (\z+16.9,\y+1) -- (\z+19.0,\y+1) -- (\z+19.0,\y+5) -- (\z+16.9,\y+5) -- (\z+16.9,\y+1); 
\draw [dashed]  (\z+20.9,\y+1) -- (\z+23.0,\y+1) -- (\z+23.0,\y+5) -- (\z+20.9,\y+5) -- (\z+20.9,\y+1); 
\draw [dashed]  (\z+26.9,\y+1) -- (\z+29.0,\y+1) -- (\z+29.0,\y+5) -- (\z+26.9,\y+5) -- (\z+26.9,\y+1); 
\draw [dashed]  (\z+30.9,\y+1) -- (\z+33.0,\y+1) -- (\z+33.0,\y+5) -- (\z+30.9,\y+5) -- (\z+30.9,\y+1); 

\node (s0) at (\z+4,\y+3) {$\bullet$}; \node [below] at (\z+4,\y+2.8) {$v_0$};
\node (s1) at (\z+8,\y+3) {$\bullet$}; \node [below] at (\z+8,\y+2.8) {$v_1$};
\node (s2) at (\z+12,\y+3) {$\bullet$}; \node [below] at (\z+12,\y+2.8) {$v_2$};
\node (s4) at (\z+18,\y+3) {$\bullet$}; \node [below] at (\z+18,\y+2.8) {$v_{i}$};
\node (s5) at (\z+22,\y+3) {$\bullet$}; \node [below] at (\z+22,\y+2.8) {$v_{i+1}$};
\node (s7) at (\z+28,\y+3) {$\bullet$}; \node [below] at (\z+28,\y+2.8) {$v_{n-1}$};
\node (s8) at (\z+32,\y+3) {$\bullet$}; \node [below] at (\z+32,\y+2.8) {$v_n$};

 \path
(s0) edge [->,line width=1.5pt,color=black] (s1) 
(s1) edge [->,line width=1.5pt,color=black] (s2)
(s2) edge [->,line width=1.5pt,color=black,dashed] (s4)
(s4) edge [->,line width=1.5pt,color=black] (s5)
(s5) edge [->,line width=1.5pt,color=black,dashed] (s7)
(s7) edge [->,line width=1.5pt,color=black] (s8);

\node [above] at (\z+6,\y+3.2) {$c_0$};
\node [above] at (\z+10,\y+3.2) {$c_1$};
\node [above] at (\z+20,\y+3.2) {$c_i$};
\node [above] at (\z+30,\y+3.2) {$c_{n-1}$};
\end{tikzpicture}
\end{center}
\caption{Trellis path for a vertex partition $V_0 \cup V_1 \cup \cdots \cup V_n$}
\end{figure}
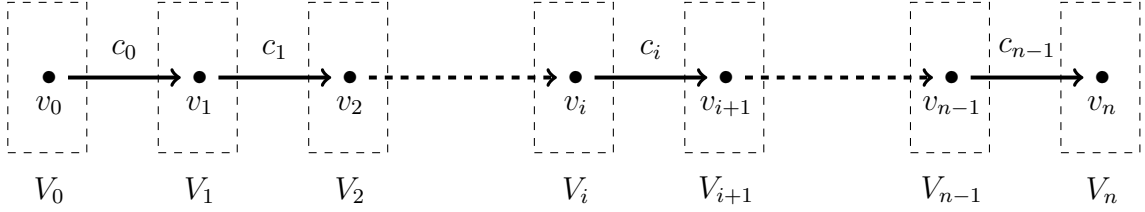

Let $T \subset V_0 \times V_1 \times \cdots \times V_n$ be the full collection of tail-biting paths in a trellis. A trellis is said to represent the row space $C$ of a matrix if each row vector represented by a path in $T$ belongs to $C$ and if each row vector in $C$ is represented by a path in $T$. The representation is one-to-one if each row vector in $C$ is represented by a unique path in $T$.
The trellis construction problem is to find minimal trellises that represent the row space of a given matrix one-to-one while minimizing the vertex count or any other relevant complexity measure for the trellis. 

\subsection*{Constructions for linear trellises}

All trellises considered in this paper are linear trellises. For a linear trellis that represents the row space of a matrix with coefficients in a field $F$, the vertex subsets $V_i$ are $F$-linear spaces. A tail-biting path has a vertex labeling $\nu=(v_0,v_1,\ldots,v_n) \in T$ and an edge labeling $c=(c_0,c_1,\ldots,c_{n-1}) \in C$. The vertex-edge labelings $(\nu,c)$ form a linear subspace $S(T) \subset T \times C$ called the label code of the trellis. A linear trellis is determined by specifying a basis for the label code, that is by specifying the vertex-edge labelings for a subset of paths that generates the trellis as $F$-linear space. Two trellises are linearly isomorphic if there exist linear isomorphisms $\alpha_i : V_i \longrightarrow V'_i$, for $0 \leq i \leq n$, that are compatible with edges labelings, i.e. that map edges $(v_i,c_i,v_{i+1})$ to edges $(\alpha_i(v_i),c_i,\alpha_{i+1}(v_{i+1}))$. Figure~\ref{F:exF3} illustrates for two different trellis constructions how they lead to different but isomorphic trellises (over the ternary field $F= \{0,1,2\}$). 
We briefly describe each of the two constructions.

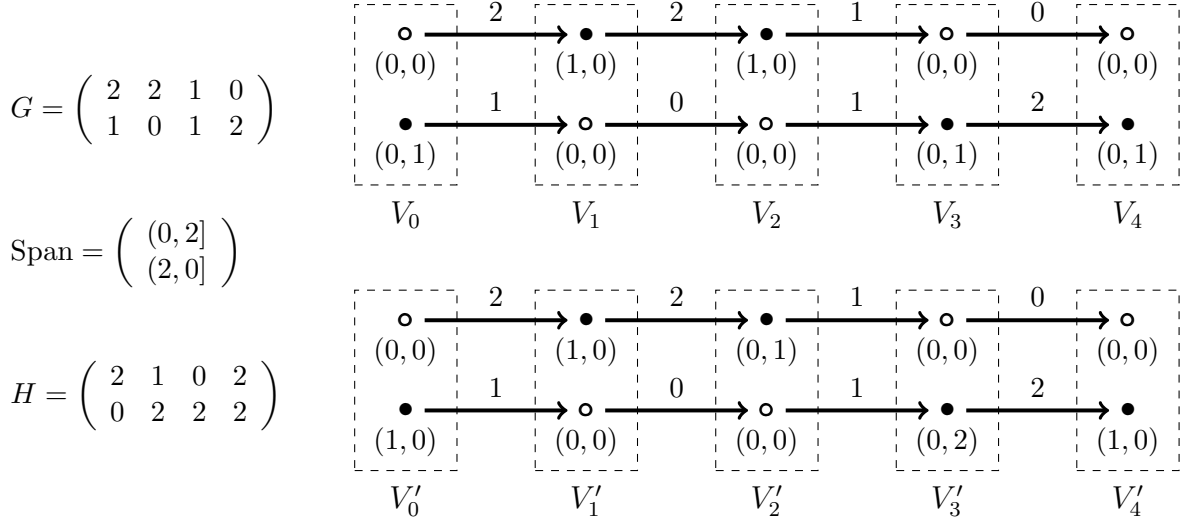
\begin{figure} 
\begin{center}
\begin{tikzpicture}
[scale=.4,auto=center,every node/.style={minimum size=0em}]

\newcommand{\y}{-3}
\newcommand{\x}{2}
\newcommand{\z}{3.5}

\node[right] at (-4,\y+2) {\small $G =
\left( \begin{array}{cccc} 2 &2 &1 &0 \\ 1 &0 &1 &2  \end{array} \right)$};

\node[] at (\z+6,\y-1.5) {$V_0$};
\node[] at (\z+12,\y-1.5) {$V_1$};
\node[] at (\z+18,\y-1.5) {$V_2$};
\node[] at (\z+24,\y-1.5) {$V_3$};
\node[] at (\z+30,\y-1.5) {$V_4$};

\draw [dashed] (\z+4.3,\y-.5) -- (\z+7.7,\y-0.5) -- (\z+7.7,\y+5.5) -- (\z+4.3,\y+5.5) -- (\z+4.3,\y-.5); 
\draw [dashed]  (\z+10.3,\y-.5) -- (\z+13.7,\y-.5) -- (\z+13.7,\y+5.5) -- (\z+10.3,\y+5.5) -- (\z+10.3,\y-.5); 
\draw [dashed]  (\z+16.3,\y-.5) -- (\z+19.7,\y-.5) -- (\z+19.7,\y+5.5) -- (\z+16.3,\y+5.5) -- (\z+16.3,\y-.5); 
\draw [dashed]  (\z+22.3,\y-.5) -- (\z+25.7,\y-.5) -- (\z+25.7,\y+5.5) -- (\z+22.3,\y+5.5) -- (\z+22.3,\y-.5); 
\draw [dashed]  (\z+28.3,\y-.5) -- (\z+31.7,\y-.5) -- (\z+31.7,\y+5.5) -- (\z+28.3,\y+5.5) -- (\z+28.3,\y-.5); 

\node (s0) at (\z+6,\y+4.5) {$\tl$}; \node [below] at (\z+6,\y+4.3) {\small $(0,0)$};
\node (s1) at (\z+12,\y+4.5) {$\bullet$}; \node [below] at (\z+12,\y+4.3) {\small $(1,0)$};
\node (s2) at (\z+18,\y+4.5) {$\bullet$}; \node [below] at (\z+18,\y+4.3) {\small $(1,0)$};
\node (s3) at (\z+24,\y+4.5) {$\tl$}; \node [below] at (\z+24,\y+4.3) {\small $(0,0)$};
\node (s4) at (\z+30,\y+4.5) {$\tl$}; \node [below] at (\z+30,\y+4.3) {\small $(0,0)$};

 \path
(s0) edge [->,line width=1.5pt,color=black] (s1) 
(s1) edge [->,line width=1.5pt,color=black] (s2)
(s2) edge [->,line width=1.5pt,color=black] (s3)
(s3) edge [->,line width=1.5pt,color=black] (s4);

\node [above] at (\z+9,\y+4.6) {\small $2$};
\node [above] at (\z+15,\y+4.6) {\small $2$};
\node [above] at (\z+21,\y+4.6) {\small $1$};
\node [above] at (\z+27,\y+4.6) {\small $0$};

\node (s0) at (\z+6,\y+1.5) {$\bullet$}; \node [below] at (\z+6,\y+1.3) {\small $(0,1)$};
\node (s1) at (\z+12,\y+1.5) {$\tl$}; \node [below] at (\z+12,\y+1.3) {\small $(0,0)$};
\node (s2) at (\z+18,\y+1.5) {$\tl$}; \node [below] at (\z+18,\y+1.3) {\small $(0,0)$};
\node (s3) at (\z+24,\y+1.5) {$\bullet$}; \node [below] at (\z+24,\y+1.3) {\small $(0,1)$};
\node (s4) at (\z+30,\y+1.5) {$\bullet$}; \node [below] at (\z+30,\y+1.3) {\small $(0,1)$};

 \path
(s0) edge [->,line width=1.5pt,color=black] (s1) 
(s1) edge [->,line width=1.5pt,color=black] (s2)
(s2) edge [->,line width=1.5pt,color=black] (s3)
(s3) edge [->,line width=1.5pt,color=black] (s4);

\node [above] at (\z+9,\y+1.6) {\small $1$};
\node [above] at (\z+15,\y+1.6) {\small $0$};
\node [above] at (\z+21,\y+1.6) {\small $1$};
\node [above] at (\z+27,\y+1.6) {\small $2$};

\renewcommand{\y}{-12.5}

\node[right] at (-4,-7.75+2) {\small $\text{Span} =
\left( \begin{array}{cccc} (0,2] \\ (2,0]  \end{array} \right)$};

\node[right] at (-4,\y+2) {\small $H =
\left( \begin{array}{cccc} 2 &1 &0 &2 \\ 0 &2 &2 &2  \end{array} \right)$};

\node[] at (\z+6,\y-1.5) {$V'_0$};
\node[] at (\z+12,\y-1.5) {$V'_1$};
\node[] at (\z+18,\y-1.5) {$V'_2$};
\node[] at (\z+24,\y-1.5) {$V'_3$};
\node[] at (\z+30,\y-1.5) {$V'_4$};

\draw [dashed] (\z+4.3,\y-.5) -- (\z+7.7,\y-0.5) -- (\z+7.7,\y+5.5) -- (\z+4.3,\y+5.5) -- (\z+4.3,\y-.5); 
\draw [dashed]  (\z+10.3,\y-.5) -- (\z+13.7,\y-.5) -- (\z+13.7,\y+5.5) -- (\z+10.3,\y+5.5) -- (\z+10.3,\y-.5); 
\draw [dashed]  (\z+16.3,\y-.5) -- (\z+19.7,\y-.5) -- (\z+19.7,\y+5.5) -- (\z+16.3,\y+5.5) -- (\z+16.3,\y-.5); 
\draw [dashed]  (\z+22.3,\y-.5) -- (\z+25.7,\y-.5) -- (\z+25.7,\y+5.5) -- (\z+22.3,\y+5.5) -- (\z+22.3,\y-.5); 
\draw [dashed]  (\z+28.3,\y-.5) -- (\z+31.7,\y-.5) -- (\z+31.7,\y+5.5) -- (\z+28.3,\y+5.5) -- (\z+28.3,\y-.5); 

\node (s0) at (\z+6,\y+4.5) {$\tl$}; \node [below] at (\z+6,\y+4.3) {\small $(0,0)$};
\node (s1) at (\z+12,\y+4.5) {$\bullet$}; \node [below] at (\z+12,\y+4.3) {\small $(1,0)$};
\node (s2) at (\z+18,\y+4.5) {$\bullet$}; \node [below] at (\z+18,\y+4.3) {\small $(0,1)$};
\node (s3) at (\z+24,\y+4.5) {$\tl$}; \node [below] at (\z+24,\y+4.3) {\small $(0,0)$};
\node (s4) at (\z+30,\y+4.5) {$\tl$}; \node [below] at (\z+30,\y+4.3) {\small $(0,0)$};

 \path
(s0) edge [->,line width=1.5pt,color=black] (s1) 
(s1) edge [->,line width=1.5pt,color=black] (s2)
(s2) edge [->,line width=1.5pt,color=black] (s3)
(s3) edge [->,line width=1.5pt,color=black] (s4);

\node [above] at (\z+9,\y+4.6) {\small $2$};
\node [above] at (\z+15,\y+4.6) {\small $2$};
\node [above] at (\z+21,\y+4.6) {\small $1$};
\node [above] at (\z+27,\y+4.6) {\small $0$};

\node (s0) at (\z+6,\y+1.5) {$\bullet$}; \node [below] at (\z+6,\y+1.3) {\small $(1,0)$};
\node (s1) at (\z+12,\y+1.5) {$\tl$}; \node [below] at (\z+12,\y+1.3) {\small $(0,0)$};
\node (s2) at (\z+18,\y+1.5) {$\tl$}; \node [below] at (\z+18,\y+1.3) {\small $(0,0)$};
\node (s3) at (\z+24,\y+1.5) {$\bullet$}; \node [below] at (\z+24,\y+1.3) {\small $(0,2)$};
\node (s4) at (\z+30,\y+1.5) {$\bullet$}; \node [below] at (\z+30,\y+1.3) {\small $(1,0)$};

 \path
(s0) edge [->,line width=1.5pt,color=black] (s1) 
(s1) edge [->,line width=1.5pt,color=black] (s2)
(s2) edge [->,line width=1.5pt,color=black] (s3)
(s3) edge [->,line width=1.5pt,color=black] (s4);

\node [above] at (\z+9,\y+1.6) {\small $1$};
\node [above] at (\z+15,\y+1.6) {\small $0$};
\node [above] at (\z+21,\y+1.6) {\small $1$};
\node [above] at (\z+27,\y+1.6) {\small $2$};


\end{tikzpicture}
\end{center}
\caption{Two isomorphic trellis representations for the row space of the matrix $G$. 
Product construction (above) and BCJR construction (below)}\label{F:exF3}
\end{figure}

\bigskip

The product construction uses a basis of step functions for the vertex labeling. For a path $(\nu,c)$ in the basis of the label code every vertex along the path is labeled either $0$ or $e_\nu$, such that the set of labels $\{ e_\nu \}$ for the paths in the basis is  linearly independent. There are precisely two changes along the path: $v_i = 0$ and $v_{i+1} = e_\nu$ for a unique $i$, and $v_j = e_\nu$ and $v_{j+1} = 0$ for a unique $j$ (an allzero path will not be part of the basis and the special case of a never zero path is needed only for disconnected trellises). The transitions are determined by choosing a span $(i,j]$ for the path such that $c_i, c_j \neq 0$ and $c_k = 0$ whenever $k$ is \emph{not in the span}.\footnote{$k \not \in (i,j]$ if the ordered triple $i,j,k$ is even, i.e., $i < j < k$, $k < i < j$ or $j < k < i$.} The span of a path is in general not unique. 
A Koetter-Vardy trellis is the special case where the selected generators have a span $(i,j]$ that is minimal (among all spans of the form $(i',j]$ or $(i,j']$).  
In general a trellis has precisely $n$ such minimal spans $(i,j]$, each with a unique $i \in \{0,1,\ldots,n-1\}$ and a unique $j \in \{0,1,\ldots,n-1\}$. 
The product construction was introduced for conventional trellises in \cite{KS95}. 

\bigskip

The vertex labeling in the tail-biting version of the BCJR construction depends only on the labeling of the vertices in $V_0$. The labeling of the other vertices along a path follows the rule $v_{i+1} = v_i + c_i h_i$ of the classical BCJR construction, for parity-check vectors $h_i$. Note that $v_n = v_0 + \sum c_i h_i$. To guarantee that the trellis represents the linear space $C$, the $h_i$ are chosen such that $\sum c_i h_i =0$ if and only if $c \in C$. A span based BCJR trellis is the special case where the label $v_0$ in a path is decided by the rule $c_k = 0$ whenever $k$ is \emph{not in the span} together with the rule $v_{i+1} = v_i + c_i h_i$. The tail-biting versions of the BCJR construction were introduced by Nori and Shankar. Their formulation emphasizes coset decompositions of the row space. Coset decompositions were introduced for conventional trellises in \cite{F88}. 

\bigskip

The two constructions give two ways to describe minimal trellises.
\[
\{ \text{minimal trellises} \} \subset \{ \text{KV-trellises} \} \subset \{ \text{span based BCJR trellises} \} 
\]
The inclusions hold for trellises up to isomorphism. Thus a KV-trellis can be constructed as a span based BCJR trellis up to isomorphism, as in Figure \ref{F:exF3}. The first inclusion is proved in \cite{KV03}, \cite{GW11a}, and the second in \cite{NS06}, \cite{GW11a}. The definitions in \cite{GW11a}, which we follow for this paper, differ slightly from those in \cite{KV03} and \cite{NS06}. 

\bigskip

Koetter and Vardy \cite{KV03} define a characteristic matrix $X$ for a given matrix $G$ as a square matrix with the same row space as $G$ such that the row spans start and end in different positions. The matrices $G$ and $H$ in Figure~\ref{F:exF3} have characteristic matrices $X$ and $Y$ (the rows are ordered such that spans of rows in $X$ start on the diagonal and spans of rows in $Y$ end on the diagonal).  
\begin{equation} \label{E:exF3}
X = \left[ \begin{array}{cccc}
2 &2 &1 &0 \\
0 &1 &1 &1 \\
1 &0 &1 &2 \\
2 &1 &0 &2 
\end{array} \right] \quad \text{and} \quad 
Y = \left[ \begin{array}{cccc}
1 &0 &1 &2 \\
1 &2 &0 &1 \\
1 &1 &2 &0 \\
0 &1 &1 &1 
\end{array} \right]
\end{equation}
In general, the characteristic matrix is not unique (even after ordering and normalizing). For any minimal trellis, there exists a characteristic matrix such that the trellis can be constructed as a product trellis from rows in the characteristic matrix (\cite{KV03}, \cite{GW11a}). The product trellis in Figure~\ref{F:exF3} is a minimal trellis and was constructed using rows $1$ and $3$, and their spans $(0,2]$ and $(2,0]$, in the matrix $X$. While characteristic matrices are not unique, there exists a bijection between the characteristic matrices for a pair of matrices $G$ and $H$ with maximal orthogonal row spaces such that trellises constructed from matching characteristic matrices $X$ and $Y$ are in duality \cite{GW11b}.

\subsection*{Main results}

The main results are presented in Sections \ref{S:char} and \ref{S:dual}. There we define a characteristic matrix in a novel way and from the definition we easily obtain its main properties including a straighforward description of the relation between a characteristic matrix and its dual. An important part of our definition is the notion of a reduced minimal span form, which is developed in Sections \ref{S:mins} and \ref{S:reds}. 

\bigskip      

Minimal span matrices (the subject of Sections \ref{S:mins} and \ref{S:reds}) are charactrerized by the left-right property: no two rows start or end in the same positions. In other words, only row exchanges are needed to bring the matrix in either left or right echelon form. Their connection to trellises is two-fold. The product construction applied to the rows of a matrix yield a minimal trellis if and only if the matrix is in minimal span form. Secondly, the vertex count of a minimal conventional trellis can be read off easily from the shape of a matrix in minimal span form. Throughout the literature, in many examples and trellis descriptions, it is noted that the minimal span form is in general not unique and it appears that some arbitrary choices need to be made to settle for a particular minimal span form. In this paper we bring a matrix in reduced minimal span form in the same way that back substitution brings a matrix from echelon form into reduced echelon form. This reduction is equivalent to reducing an arbitrary $LPU$ decomposition of a nonsingular matrix 
to a unique reduced $LPU$ decomposition. In this form our proposed reduction is the same as the one used for reduced Bruhat decompositions in \cite{YK87}.

\bigskip

In Section \ref{S:char} we adopt a definition of characteristic matrices that refers to a matrix in minimal span form. The notion of reduced minimal span form thus allows us to define a unique reduced characteristic matrix. We use two different forms for a characteristic matrix, a left-ordered form (controller form) and a right-ordered form (observer form). Let $G$ and $H$ be the pair of orthogonal matrices from Figure~\ref{F:exF3}, with characteristic matrices given by (\ref{E:exF3}). For the controller form of the characteristic matrix $X$ for $G$, we write $X$ as the sum of an upper triangular matrix and a lower triangular matrix. 
\[ 
X = X_0 + X_1
= \left[ \begin{array}{cccc}
2 &2 &1 &0 \\
\dt &1 &1 &1 \\
\dt &\dt &1 &2 \\
\dt &\dt &\dt &2 
\end{array} \right] +
\left[ \begin{array}{cccc}
\dt &\dt &\dt &\dt \\
0 &\dt &\dt &\dt \\
1 &0 &\dt &\dt \\
2 &1 &0 &\dt 
\end{array} \right]
\]
The matrices $G$ and $X$ share the same row space. Since $X H^T =0$, $(X_0 | X_1)(H | H)^T = 0$. In general we derive $X_0$ and $X_1$ from a reduced minimal span form for the null space of $(H|H)$.
The dual characteristic matrix $Y$ is decomposed as follows. 
\[
Y = Y_1 + Y_0 = \left[ \begin{array}{cccc}
\dt &0 &1 &2 \\
\dt &\dt &0 &1 \\
\dt &\dt &\dt &0 \\
\dt &\dt &\dt &\dt 
\end{array} \right] +
\left[ \begin{array}{cccc}
1 &\dt &\dt &\dt \\
1 &2 &\dt &\dt \\
1 &1 &2 &\dt \\
0 &1 &1 &1 
\end{array} \right]
\]
It satisfies $Y G^T =0$ and thus $(Y_1 | Y_0) (G | G)^T =0$. Similar to the above, we derive $Y_1$ and $Y_0$ using a reduced minimal span form for the null space of $(G|G)$. 
In Section \ref{S:dual}, we express duality for characteristic matrices as
\[
\left( \begin{array}{cc} X_0 &X_1 \\ 0 &X_0 \end{array} \right) \left( \begin{array}{cc} Y_0 &0 \\ Y_1 &Y_0 \end{array} \right)^T = \left( \begin{array}{rr} -I &I \\ 0 &-I \end{array} \right).
\]
In Theorem \ref{T:trdual} we prove, avoiding some of the technical details of earlier proofs, that KV-trellises defined with $X$ and with $Y$ are in duality. Furthermore, in Theorem \ref{T:col} we show that two characteristic matrices are in duality if and only if they have orthogonal column spaces. And in Theorem \ref{T:trans} that the transpose of a characteristic matrix is again a characteristic matrix if and only if the characteristic matrix is reduced.  

Originally Koetter and Vardy proposed that a duality between characteristic matrices holds for the lexicographically first characteristic matrices. In \cite{GW11a} examples were found where this was not the case. In Theorem \ref{T:conj} we adjust the original conjecture and prove that reduced characteristic matrices are in duality. Reduced characteristic matrices are lexicographically first when read out from right to left (for $X$) or left to right (for $Y$). After taking into account the different directions of the time axis in a trellis and its dual the original conjecture appears correct.   

\bigskip

To describe the KV-trellises both as a product trellis and as a BCJR trellis, we use the column space of $Y$as labels for trellises constructed with rows in $X$. 
\[
Y^T = \left[ \begin{array}{cccc}
1 &1 &1 &0 \\
0 &2 &1 &1 \\
1 &0 &2 &1 \\
2 &1 &0 &1
\end{array} \right]
\]
This leads to label codes of the form 
\[
S(Y^T|X) = \left[ \begin{array}{c|c|c|c|c|c|c|c|c}
0~0~0~0   &2   &1~1~1~0   &2   &1~0~2~1   &1   &0~0~0~0   &0   &0~0~0~0  \\
0~0~0~0   &0   &0~0~0~0   &1   &0~1~2~2   &1   &2~1~0~1   &1   &0~0~0~0  \\
1~1~1~0   &1   &0~0~0~0   &0   &0~0~0~0   &1   &2~0~1~2   &2   &1~1~1~0  \\
2~1~0~1   &2   &0~2~1~1   &1   &0~0~0~0   &0   &0~0~0~0   &2   &2~1~0~1    
\end{array} \right]
\]
We prove in Section \ref{S:disp} that the following properties hold in general. 
The single columns in the matrix lie in the null space of the square matrices (Lemma \ref{T:col}).
Any two consecutive square matrices differ by a rank-one matrix (Lemma \ref{L:disp2}).
And each of the square matrices contains a basis in echelon form for the row space $Y^T$ (Theorem \ref{T:disp}).
Moreover, by Theorem \ref{T:trdual} the square matrices in the dual label code $S(X^T|Y)$ relate to those of  $S(Y^T|X)$ by transposition (up to a sign change and only off the diagonal).

\section{Minimal span form of a matrix} \label{S:mins}

 The span (or support interval) of a nonzero vector $c = (c_0, c_1, \ldots, c_{n-1})$ is the largest interval $[i_0,i_1] \subset [0,n-1]$ with $c_{i_0} c_{i_1} \neq 0$. We say the vector starts in $i_0$ and ends in $i_1$. The size $i_1 - i_0$ of the span is called the spanlength of the vector. The spanlength of a matrix is the sum of the spanlengths of its rows. A matrix is in minimal span form if its spanlength is minimal among all row equivalent matrices. An easy way to decide if a matrix is in minimal span form is via the \emph{left-right property}.  

\begin{lemma}[\cite{KS95}, \cite{McE96}] \label{L:sp1}
The following are equivalent.
\begin{itemize}
\item[(1)] A matrix is in minimal span form.
\item[(2)] No two distinct rows in the matrix start in the same position or end in the same position. 
\end{itemize}
\end{lemma}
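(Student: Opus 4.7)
The plan is to prove the two implications separately: (1)$\Rightarrow$(2) by a direct row-reduction argument and (2)$\Rightarrow$(1) by showing that the spanlength of any matrix with the left-right property depends only on the row space. For the contrapositive of (1)$\Rightarrow$(2), suppose two distinct rows $r_a, r_b$ of the matrix both start at a position $i$, with spans $[i,j_a]$ and $[i,j_b]$ ordered so that $j_a \leq j_b$. Setting $\mu = (r_b)_i / (r_a)_i$, the row $r_b - \mu r_a$ is supported in $[i+1, j_b]$, so its spanlength is at most $j_b - i - 1$, strictly less than $j_b - i$. Replacing $r_b$ by $r_b - \mu r_a$ produces a row-equivalent matrix of strictly smaller total spanlength, contradicting minimality. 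The symmetric case of two rows sharing a common end position is handled by the mirror operation: subtract from the row with the smaller starting position a multiple of the other, killing its last nonzero coordinate.

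For (2)$\Rightarrow$(1), let $M$ satisfy the left-right property with row space $C$, and set $C_{\geq i} = \{c \in C : c_j = 0 \text{ for all } j < i\}$. The crucial claim is that the rows of $M$ with starting position at least $i$ form a basis of $C_{\geq i}$. Linear independence is immediate from the distinctness of starting positions: in any nonzero combination the coordinate at the smallest starting position present is nonzero. For spanning, suppose $c = \sum_\ell a_\ell r_\ell$ lies in $C_{\geq i}$ and some row $r_m$ with starting position $i_m < i$ has $a_m \neq 0$; by the left-right property no other row starts at $i_m$, so $c_{i_m} = a_m (r_m)_{i_m} \neq 0$, contradicting $c \in C_{\geq i}$. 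Consequently the number of rows of $M$ starting at exactly position $i$ equals the code invariant $\dim C_{\geq i} - \dim C_{\geq i+1}$, and a symmetric argument applied to vectors supported in $[0,j]$ shows the multiset of end positions is determined by $C$ as well. Hence $\sum_\ell (j_\ell - i_\ell)$ is a function of $C$ alone. Iterating the reduction of the first paragraph converts any generator matrix into one with the left-right property without ever increasing spanlength, so this common invariant value is in fact the minimum, and $M$ is in minimal span form.

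The main obstacle is the spanning step in the key claim: the clean way is to evaluate a purported counterexample at the smallest starting position present and invoke the left-right hypothesis there exactly once, as above; an attempt to induct on starting positions tends to hide where the hypothesis actually gets used. All other pieces---linear independence from distinct starts, termination of the iterative row-reduction (spanlength is a nonnegative integer), and the symmetric contrapositive argument for end positions---are routine bookkeeping with support intervals.
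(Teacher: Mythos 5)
Your proof is correct and follows essentially the same route as the paper: the contrapositive row-reduction for (1)$\Rightarrow$(2), and for (2)$\Rightarrow$(1) the observation that the start and end positions of a left-right matrix are invariants of the row space (the paper phrases this via echelon form and pivot columns, you make it explicit via the filtration $C_{\geq i}$), so that all left-right matrices share one spanlength which the iterated reduction shows is the minimum. The only nit is that in the spanning step you should explicitly take $i_m$ minimal among the starting positions of rows with nonzero coefficient, as your closing paragraph indicates you intend.
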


\begin{proof} The proof is short and elementary. If (2) fails for two rows in a matrix then we can reduce the spanlength of one of the two rows and (1) fails. If (2) holds for a matrix then the rows of the matrix are in row echelon form after reordering and the starting positions agree with the column indices for the left pivots. By the same argument the ending positions agree with the column indices for the right pivots. Thus the spanlength is the same for all matrices with (2). By the first part of the proof this spanlength is that of a matrix in minimal span form.
\end{proof}

Let $I_0$ be the collection of left pivot positions in a matrix and let $I_1$ be the collection of right pivots positions. From the proof of the lemma we see that the spanlength for a matrix in minimal span form is given by $\sum_{I_1} i_1 - \sum_{I_0} i_0.$

\begin{lemma}[\cite{KS95}, \cite{McE96}] \label{L:sp2}
Any two minimal span forms for a matrix have the same collection of spans.
\end{lemma}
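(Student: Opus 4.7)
The plan is to show that the multiset of spans in any minimal span form is determined by the row space $C$ alone. For each interval $[i,j]\subseteq[0,n-1]$, let $C_{[i,j]}$ denote the subspace of codewords in $C$ whose support lies in $[i,j]$; the dimension $\dim C_{[i,j]}$ is manifestly an invariant of $C$. I will argue that in any minimal span form of $C$, the number of rows with span contained in $[i,j]$ equals $\dim C_{[i,j]}$. Once this identity is in hand, Möbius inversion on the poset of subintervals recovers the number $f(i_0,i_1)$ of rows with span exactly $[i_0,i_1]$ as
\[
f(i_0,i_1) = \dim C_{[i_0,i_1]} - \dim C_{[i_0+1,i_1]} - \dim C_{[i_0,i_1-1]} + \dim C_{[i_0+1,i_1-1]},
\]
with the convention that $\dim C_{[i,j]}=0$ when $i>j$; this quantity is intrinsic to $C$ and so is the same for every minimal span form.

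For the key identity, let $M$ be a matrix in minimal span form with row space $C$, and let $R$ be the set of rows of $M$ whose span is contained in $[i,j]$. The rows of $M$ are linearly independent, because by Lemma \ref{L:sp1} their distinct starting positions put them in left echelon form after reordering; the rows in $R$ all have support in $[i,j]$, so $\spn R\subseteq C_{[i,j]}$. For the reverse inclusion, write any $c\in C_{[i,j]}$ as $c=\sum_r\alpha_r v_r$ summed over the rows $v_r$ of $M$, and consider the leftmost starting position $p$ among those $v_r$ with $\alpha_r\neq 0$. By Lemma \ref{L:sp1}, $v_r$ is the unique row of $M$ starting at column $p$; every other active row starts strictly to the right of $p$ and is therefore zero in column $p$, so $c[p]=\alpha_r v_r[p]\neq 0$. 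The support condition on $c$ then forces $p\geq i$. A symmetric argument applied to the rightmost ending position among active rows forces all such rows to end no later than $j$, so every active $v_r$ lies in $R$, and $c\in\spn R$.

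Combining the two inclusions gives $\dim C_{[i,j]}=|R|$, and applying the Möbius inversion above completes the proof. The only substantive step is the left-right elimination argument in the middle paragraph; everything else is bookkeeping by inclusion-exclusion on the interval dimensions $\dim C_{[i,j]}$.
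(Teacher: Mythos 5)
Your proof is correct, but it takes a genuinely different route from the paper's. The paper argues by contradiction directly from minimality of the spanlength: it first notes (from the proof of Lemma \ref{L:sp1}) that the starting positions are the left pivot columns and hence agree for any two minimal span forms, and then shows that if two rows starting at the same $i_0$ had different ending positions $i_1 < i'_1$, one could subtract off the shorter row to shrink the span of the longer one while preserving its ending pivot, contradicting minimality. You instead make the span multiset manifestly intrinsic to the row space $C$: using only the left--right property from Lemma \ref{L:sp1}, you show that the rows with support in an interval $[i,j]$ span exactly $C_{[i,j]}$ (the leftmost-start/rightmost-end argument correctly forces every active row into $R$), so their number equals $\dim C_{[i,j]}$, and inclusion--exclusion over the four intervals $[i_0,i_1]$, $[i_0{+}1,i_1]$, $[i_0,i_1{-}1]$, $[i_0{+}1,i_1{-}1]$ recovers the exact span counts. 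Your inversion formula checks out, including the boundary conventions. What your approach buys is an explicit, constructive formula for the number of generators with each span purely in terms of the invariants $\dim C_{[i,j]}$ --- essentially the state-space dimension profile of the code --- and it shows the conclusion holds for any matrix with the left--right property, not just those known a priori to minimize spanlength; the paper's proof is shorter and stays closer to the exchange argument already set up in Lemma \ref{L:sp1}. Both are complete.
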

 \begin{proof} The proof of the previous lemma shows that two minimal span forms share the same set $I_0$ of starting indices.
For a given starting index $i_0 \in I_0$ both minimal span forms contain a row that starts in $i_0$. Let the two rows have spans $[i_0,i_1]$ and $[i_0,i'_1]$ and assume that $i_1 < i'_1$. Then the row with span $[i_0,i'_1]$ can be reduced to a row with smaller span $[i'_0,i'_1]$. The reduced row is still independent of the other rows since $i'_1$ has not changed and this contradicts that the matrix was in minimal span form. Thus $i_1 = i'_1$ as required. 
\end{proof}

The left-right property in Lemma \ref{L:sp1} expresses that after ordering the rows of a matrix in minimal span form we may assume that the minor in the positions $I_0$ (the leading nonsingular full minor in the matrix) is upper triangular and that the minor in the positions $I_1$ (the trailing nonsingular full minor in the matrix) is lower triangular up to row permutations. Denoting these two minors by $U$ and $PL$, respectively, the spans are determined by the permutation matrix $P$. 
Lemma \ref{L:sp2} expresses that $P=P'$ for any two row equivalent matrices $[U | PL] \sim [U' | P'L']$. We relate this last property to the $LPU$-decomposition of an invertible matrix $A$.

\begin{lemma} ($LPU$-decomposition) \label{L:lpu}
Let $A$ be an invertible matrix. Then there exists a decomposition $A = LPU$ with $L$ lower triangular, $P$ a permutation matrix and $U$ upper triangular. The decomposition is in general not unique but the matrix $P$ is uniquely determined by the matrix $A$.
\end{lemma}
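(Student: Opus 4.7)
The plan is to deduce both parts of the lemma by applying Lemmas~\ref{L:sp1} and~\ref{L:sp2} to the $n \times 2n$ block matrix $[A \mid I_n]$. The underlying idea is that a minimal span form of $[A \mid I_n]$ packages exactly the data of an $LPU$ decomposition of $A$.

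The first step is to pin down where the left and right pivots of a minimal span form of $[A \mid I_n]$ must lie. Because $A$ is invertible, every nonzero row-space element of $[A \mid I_n]$ has the form $(xA, x)$ with both $x \neq 0$ and $xA \neq 0$; in particular its span starts somewhere in the first block of $n$ columns and ends somewhere in the second block. Since the row space has dimension $n$, the $n$ distinct starting columns produced by Lemma~\ref{L:sp1} must exhaust $\{0, \ldots, n-1\}$ and the $n$ distinct ending columns must exhaust $\{n, \ldots, 2n-1\}$.

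For existence I would bring $[A \mid I_n]$ to a minimal span form by left multiplication by an invertible matrix $M$. Combining the previous paragraph with the description preceding the lemma, the result takes the block form $[U \mid PL]$ with $U$ upper triangular, $L$ lower triangular, and $P$ a permutation matrix. Equating the two halves of $M[A \mid I_n] = [U \mid PL]$ yields $MA = U$ and $M = PL$, hence $A = L^{-1}P^{-1}U$, which is an $LPU$ decomposition after renaming.

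For uniqueness I would run the correspondence in reverse. Given any decomposition $A = LPU$, the identity $L^{-1}[A \mid I_n] = [PU \mid L^{-1}]$ exhibits a minimal span form of $[A \mid I_n]$ whose row $j$ starts at column $\sigma(j)$ — where $\sigma$ is the permutation with $P_{j,\sigma(j)}=1$ — and ends at column $n+j$. By Lemma~\ref{L:sp2} the collection of spans $\{[\sigma(j), n+j] : 0 \le j < n\}$ is an invariant of $A$; since the right endpoints $n+j$ are all distinct, the value $\sigma(j)$ is read off as the left endpoint of the unique span ending at $n+j$, so $\sigma$, and hence $P$, is determined by $A$. The only step requiring genuine thought is the pivot-location argument in the second paragraph, and there the argument is a two-line invocation of invertibility; the main practical nuisance I anticipate is keeping the conventions for $P$ and $\sigma$ aligned between the existence and uniqueness directions.
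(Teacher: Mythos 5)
Your proof is correct and follows essentially the same route as the paper: uniqueness of $P$ is reduced to Lemma~\ref{L:sp2} applied to a width-$2n$ minimal span form whose row space is that of $[A \mid I]$ (the paper works with $[\,U \mid P^{-1}L^{-1}\,]$, which is just a row permutation of your $[\,PU \mid L^{-1}\,]$). You additionally spell out the existence direction and the pivot-location argument, which the paper dismisses as well known.
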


\begin{proof} The result is well known.\footnote{But should not be confused with the different decomposition $PA=LU$.}
To see how the uniqueness claim
\begin{equation} \label{eq:2}
L P U = L' P' U'  ~\Rightarrow~ P=P'
\end{equation} 
follows from the previous lemmas, note that
\[
L P U = L' P' U'   ~\Leftrightarrow~  [ \; U | P^{-1}L^{-1}\; ] \sim [ \;U' | P'^{-1} L'^{-1}\; ] 
\]
and apply Lemma \ref{L:sp2},
\begin{equation} \label{eq:1}
 [ \; U | P^{-1}L^{-1} \; ] \sim  [ \;U' | P'^{-1} L'^{-1}\; ] ~\Rightarrow~ P=P'.
\end{equation} 
\end{proof}

\subsection{Bruhat decomposition} \label{S:bruh}

The uniqueness claim (\ref{eq:2}) is an instance of a modified Bruhat decomposition for the general linear group. In general, a Bruhat decomposition of a (connected, reductive) linear algebraic group is of the form $G = \cup_{w \in W} BwB$, for a Borel subgroup $B$ and a Weyl group $W$. 
The double coset decomposition $\cup_{w \in W} BwB$ partitions the group $G$ and associates to each group element $g$ a unique element $w \in W$ (e.g . \cite[Theorem 10.2.7 (``Bruhat's lemma'')]{Sp09}). For the general linear group, the Bruhat decomposition uses for $B$ the group of invertible upper triangular matrices and for $W$ the group of permutation matrices. 

\bigskip

The standard Bruhat decomposition assigns to a matrix $A$ a unique permutation matrix $U_1^{-1} A U_2^{-1} = P$. In general it is possible to reduce an invertible matrix $A$ to a unique permutation matrix $P$ using row operations and column operations of any fixed triangular type. The outcome $P$ depends on the chosen types but not on the sequence of reduction operations. To illustrate that the outcome depends on the chosen types we reduce the same binary matrix $A$ in two different ways to a permutation matrix, via a standard Bruhat decomposition and via a $LPU$-decomposition.
\begin{align*}
U_1^{-1} A &= \left( \begin{array}{rrr} 1 &1 &0 \\ 0 &1 &0 \\ 0 &0 &1 \end{array} \right) 
\left( \begin{array}{rrr} 1 &1 &0 \\ 1 &1 &1 \\ 0 &1 &1 \end{array} \right) =
\left( \begin{array}{rrr} 0 &0 &1 \\ 1 &1 &1 \\ 0 &1 &1 \end{array} \right) = 
\left( \begin{array}{rrr} 0 &0 &1 \\ 1 &0 &0 \\ 0 &1 &0 \end{array} \right) U_2. \\ 
L^{-1} A &= \left( \begin{array}{rrr} 1 &0 &0 \\ 1 &1 &0 \\ 0 &0 &1 \end{array} \right) 
\left( \begin{array}{rrr} 1 &1 &0 \\ 1 &1 &1 \\ 0 &1 &1 \end{array} \right) =
\left( \begin{array}{rrr} 1 &1 &0 \\ 0 &0 &1 \\ 0 &1 &1 \end{array} \right) = 
\left( \begin{array}{rrr} 1 &0 &0 \\ 0 &0 &1 \\ 0 &1 &0 \end{array} \right) U. 
\end{align*}

\bigskip

To interpret the matrix $P$ assigned to $A$ by the $LPU$ decomposition, define a leading submatrix of a matrix as a submatrix positioned in the north-west corner of the matrix. The following is clear and stated without proof.

\begin{lemma}
Let $A$ be any matrix. For any lower triangular invertible matrix $L$ and any upper triangular invertible matrix $U$, leading submatrices in $A$ and $LAU$ of the same size are of the same rank.  
\end{lemma}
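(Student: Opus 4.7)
The statement is elementary once one unpacks the block structure, so my plan is a direct block-matrix computation. First I would fix the leading submatrix size $p \times q$ and partition $L$, $A$, and $U$ conformally into $2 \times 2$ blocks at row $p$ and column $q$, writing
\[
L = \begin{pmatrix} L_{11} & 0 \\ L_{21} & L_{22} \end{pmatrix}, \qquad
A = \begin{pmatrix} A_{11} & A_{12} \\ A_{21} & A_{22} \end{pmatrix}, \qquad
U = \begin{pmatrix} U_{11} & U_{12} \\ 0 & U_{22} \end{pmatrix},
\]
where $L_{11}$ is $p \times p$, $A_{11}$ is the leading $p \times q$ submatrix of $A$, and $U_{11}$ is $q \times q$. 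Triangularity of $L$ forces the upper-right block to be zero, and triangularity of $U$ forces the lower-left block to be zero. Moreover, $L_{11}$ and $U_{11}$ are themselves triangular (lower and upper, respectively) with the same diagonal entries as $L$ and $U$ restricted to their first $p$ and $q$ positions, so they are invertible.

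Next I would perform the block product $LAU$. Because the upper-right block of $L$ and the lower-left block of $U$ both vanish, the leading $p \times q$ block of $LAU$ collapses to the single term
\[
(LAU)_{11} = L_{11}\,A_{11}\,U_{11}.
\]
All the off-block contributions involving $A_{12}$, $A_{21}$, $A_{22}$, $L_{21}$, or $U_{12}$ land outside the leading $p \times q$ window.

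Finally, since $L_{11}$ and $U_{11}$ are invertible, multiplying $A_{11}$ on the left by $L_{11}$ and on the right by $U_{11}$ preserves rank. Hence the leading $p \times q$ submatrices of $A$ and $LAU$ have the same rank, as claimed. There is no real obstacle: the whole content is the observation that triangularity decouples the north-west block in the product, so the proof requires only the conformal partition and one line of block arithmetic.
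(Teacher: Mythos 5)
Your block-matrix argument is correct and complete: the conformal partition, the vanishing of the off-triangular blocks, and the invertibility of $L_{11}$ and $U_{11}$ together give exactly the claimed rank equality. The paper declares this lemma ``clear and stated without proof,'' and your computation is precisely the standard argument the author is implicitly relying on, so there is nothing to add or correct.
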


It follows that the matrix $P = L^{-1}AU^{-1}$, that is assigned to $A$ by an $LPU$-decomposition, describes the rank of any leading submatrix in $A$ by counting the number of nonzero entries in the corresponding leading submatrix of $P$. In particular the matrix $P$ is unique, which confirms in a different way the claim (\ref{eq:2}). 
Similarly, for each of the four corners of the matrix, we can reduce $A$ to a permutation matrix $P$ that gives the rank of submatrices positioned in that corner:
\[
\begin{array}{lcl}
 P_{NW} = L^{-1}AU^{-1}, &~~~~ &P_{NE} = L^{-1}AL^{-1} \\
 P_{SW} = U^{-1}AU^{-1}, &~~~~ &P_{SE} = U^{-1}AL^{-1}.
\end{array}
\]

\begin{lemma} \label{L:T}
Let $PT= T'P'$, for invertible triangular matrices $T$ and $T'$ and for permutation matrices $P$ and $P'$. Then $P=P'$.
\end{lemma}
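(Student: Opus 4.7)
The plan is to reduce the matrix identity $PT = T'P'$ to a pointwise relation on entries and then extract the equality of permutations from the diagonal positions. First I would let $\sigma,\sigma'$ denote the permutations associated to $P,P'$ (with the convention $Pe_j = e_{\sigma(j)}$) and rewrite the hypothesis as $T' = PT(P')^{-1}$. A direct computation gives
\[
T'_{ij} \;=\; T_{\sigma^{-1}(i),\,\sigma'^{-1}(j)}
\]
for every pair $(i,j)$, so the triangular structures of $T$ and $T'$ translate into support conditions on $T$ after relabeling by $\sigma^{-1}$ and $\sigma'^{-1}$.

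Next I would use that $T'$ is invertible and triangular, so its diagonal entries $T'_{ii}$ are nonzero for every $i$. The displayed identity then forces $T_{\sigma^{-1}(i),\,\sigma'^{-1}(i)} \neq 0$ for every $i$. Since $T$ is triangular with nonzero diagonal, all of its nonzero entries lie weakly on one side of the diagonal; assuming $T$ is upper triangular (the lower triangular case is identical with the inequality reversed) this yields
\[
\sigma^{-1}(i) \;\leq\; \sigma'^{-1}(i) \qquad \text{for all } i.
\]
If $T$ and $T'$ happen to be of opposite triangular types the same argument still applies, since only the direction of the inequality changes.

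The finish is a standard counting trick: summing over $i$ gives $\sum_i \sigma^{-1}(i) \leq \sum_i \sigma'^{-1}(i)$, and both sums equal $1 + 2 + \cdots + n$ because $\sigma^{-1}$ and $\sigma'^{-1}$ are permutations of $\{1,\ldots,n\}$. Equality of the sums forces equality in each term, so $\sigma^{-1} = \sigma'^{-1}$ and hence $P = P'$.

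I do not expect any real obstacle; the only subtle point is bookkeeping with $\sigma^{-1}$ versus $\sigma$ in the formula for $T'_{ij}$, and making sure the proof treats both triangular types uniformly. Note that the argument uses only triangularity of $T$ together with nonvanishing of the diagonal of $T'$, but by taking transposes of the equation one can equally well run the argument using $T'$, which confirms the symmetric nature of the statement.
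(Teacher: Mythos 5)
Your proof is correct, but it takes a genuinely different route from the paper. The paper rewrites the hypothesis as $P' = T'^{-1}PT$ and invokes the immediately preceding observation that multiplying by invertible triangular matrices preserves the ranks of all submatrices anchored in a fixed corner; since a permutation matrix is determined by these corner ranks, $P=P'$ follows in one line. You instead work entrywise: from $T'_{ij} = T_{\sigma^{-1}(i),\,\sigma'^{-1}(j)}$, the nonvanishing diagonal of the invertible triangular $T'$ forces $T_{\sigma^{-1}(i),\,\sigma'^{-1}(i)}\neq 0$, the one-sided support of $T$ turns this into $\sigma^{-1}(i)\leq\sigma'^{-1}(i)$ for all $i$, and the summation trick ($\sum_i\sigma^{-1}(i)=\sum_i\sigma'^{-1}(i)$) forces termwise equality. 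Your argument is self-contained and more elementary, needing nothing but the index computation and the fact that permutations of $\{1,\dots,n\}$ have a fixed coordinate sum; the paper's is shorter given its prior lemma and fits the Bruhat-decomposition narrative of the section. One cosmetic quibble: the direction of your inequality is governed solely by whether $T$ is upper or lower triangular, not by whether $T$ and $T'$ have the same or opposite types, so the sentence about "opposite triangular types" reversing the inequality is misstated — but since all you need from $T'$ is its nonzero diagonal, this does not affect the validity of the proof.
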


\begin{proof}
Submatrices of $P$ and $P'=T'^{-1}PT$ of the same size in one of the four corners (which corner depends on the type of $T$ and $T'$, upper or lower triangular) are of the same rank and thus $P=P'$. 
\end{proof}

\section{Reduced minimal span form of a matrix} \label{S:reds}

Let $G$ be a given matrix.
As in the previous section, let $I_0$ be the collection of left pivot positions and let $I_1$ be the collection of right pivot positions. As pointed out after Lemma \ref{L:sp2} the rows in a minimal span form can be ordered such that the matrix is in echelon form on the left.
\begin{equation} \label{msf}
G|_{I_0} = U ~~~\text{and}~~~ G|_{I_1} = PL.
\end{equation}
We will call a minimal span form of this type left-ordered.  
An example is the binary matrix
\begin{center}
    \begin{tikzpicture}
        [scale=.5,auto=center,every node/.style={minimum size=0em}]
        \node at (1,11.6) {$G ~= $};      
        \matrix [matrix of math nodes,left delimiter=(,right delimiter=)] (m) at (6,11.6)
        {
             1 &1 &1 &0 &0 \\ 0 &1 &1 &0 &1 \\ 0 &0 &1 &1 &0 \\
        };  
        \draw[color=black] (m-1-1.north west) -- (m-1-3.north east) -- (m-3-3.south east) -- (m-3-1.south west) -- (m-1-1.north west);
        \node at (10.5,11.6) {$ = $};
        \matrix [matrix of math nodes,left delimiter=(,right delimiter=)] (m) at (15,11.6)
        {
             1 &1 &1 &0 &0 \\ 0 &1 &1 &0 &1 \\ 0 &0 &1 &1 &0 \\
        };  
        \draw[color=black] (m-1-3.north west) -- (m-1-5.north east) -- (m-3-5.south east) -- (m-3-3.south west) -- (m-1-3.north west);
    \end{tikzpicture}
\end{center}

Among all minimal span forms (\ref{msf}) we select a unique reduced minimal span form.

\begin{lemma} \label{L:3}
There exists a minimal span form of the form (\ref{msf}) with the additional property that the trailing nonzero element in each row is the first nonzero element in its column.
\end{lemma}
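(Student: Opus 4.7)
The plan is to perform a back-substitution style reduction starting from any left-ordered minimal span form as in (\ref{msf}). Order the rows so that the starting position $i_0$ increases with row index (making $G|_{I_0}$ upper triangular), and process the right pivots in \emph{decreasing} order of column position $j \in I_1$. For each such $j = i_1(s)$, sweep column $j$ upward: for every row $r$ above $s$ (that is, with $i_0(r) < i_0(s)$) having $G_{r,j} \neq 0$, subtract $(G_{r,j}/G_{s,j})$ times row $s$ from row $r$.

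The key verification is that each sweep preserves the minimal span form. Row $s$ is not altered, so only the span of the modified row $r$ is at stake. Since row $s$ has span $[i_0(s),j]$, the subtraction only changes entries of $r$ inside this interval. The leftmost nonzero coordinate of $r$ sits at $i_0(r) < i_0(s)$ and is therefore untouched. For the rightmost coordinate, $G_{r,j}\neq 0$ forces $j \in [i_0(r),i_1(r)]$, and the left-right property from Lemma~\ref{L:sp1} rules out $j = i_1(r)$, so $j < i_1(r)$ and position $i_1(r)$ in row $r$ is untouched as well. Thus every row retains its original span, and the form (\ref{msf}) is maintained throughout the reduction.

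Processing right pivots from largest to smallest then guarantees that a swept column stays clean: any subsequent operation subtracts a multiple of a row whose span lies in columns $\leq j'$ for some smaller right pivot $j' < j$, and such a row is zero in column $j$. After all right pivots have been processed, each column $j = i_1(s)$ contains exactly one nonzero entry at or above row $s$, namely $G_{s,j}$ itself, so the trailing nonzero entry of every row is the first nonzero entry in its column, as required.

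The main obstacle is ensuring that the clearing sweeps do not inflate any row's span; this is handled by the two strict inequalities $i_0(r) < i_0(s)$ and $j < i_1(r)$, both traced back to the left-right property of minimal span forms. The overall procedure is the minimal-span analogue of reducing an echelon-form matrix to reduced echelon form.
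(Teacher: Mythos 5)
Your proposal is correct and follows essentially the same route as the paper: an upward back-substitution sweep at each trailing pivot column, processed from the rightmost pivot column to the leftmost, with the left-right property of Lemma~\ref{L:sp1} guaranteeing (via the strict inequalities $i_0(r)<i_0(s)$ and $j<i_1(r)$) that every row operation is upper triangular and preserves all spans. The paper phrases the same argument in terms of inversions of the permutation $P$; your span-based justification of why the sweeps cannot enlarge or shrink any span is a slightly more explicit version of the same step.
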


\begin{proof}
Let $(i,\pi(i))$, $1 \leq i \leq n$, be the nonzero entries in $P$, for a permutation $\pi$ of $\{1,2,\ldots,n\}$. Thus $\pi(i)$ marks the last column in row $i$ of $PL$ with a nonzero entry. The nonzero entries in column $\pi(i)$ with
row index $i' < i$ have $\pi(i') > \pi(i)$. The pair $i,i'$ corresponds to an inversion of $\pi$. Using a sequence of elementary row operations in upper triangular form we can clear all nonzero entries in column $i$ with row index $i' < i$. To assure that columns that have been cleared are not affected by clearing other columns, columns should be cleared in reverse order, from last to first column.
\end{proof}

A matrix $PL$ has the property in the lemma if and only if $PL = L'P'$ for a lower triangular matrix $L'$ and a permutation matrix $P'$. By Lemma \ref{L:T} we may assume that $P'=P$.
 
\begin{definition} \label{D:rmsf}
A minimal span form is called left-ordered and right-reduced if 
\begin{equation} \label{rmsf}
G|_{I_0} = U ~~~\text{and}~~~ G|_{I_1} = PL = L'P.
\end{equation}
It is upper triangular in the leading pivot columns and lower triangular up to a row permutation in the trailing pivot columns. Moreover it is lower triangular up to a column permutation in the trailing pivots. 
Less formal, a minimal span form is reduced if it is upper triangular on the leading pivots (and thus zero in the positions west and south of the leading pivots) and zero in the positions east and north of the trailing pivots.
\end{definition}
 
\begin{proposition}
Every matrix has a unique reduced minimal span form.
\end{proposition}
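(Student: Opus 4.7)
The plan is to establish existence from the preceding lemmas and prove uniqueness by a direct linear-algebra argument. For existence I would take an arbitrary matrix, use elementary row operations (as in the proof of Lemma \ref{L:sp1}) to bring it to a left-ordered minimal span form with $G|_{I_0}=U$ and $G|_{I_1}=PL$, and then apply the back-substitution described in the proof of Lemma \ref{L:3}: clearing trailing pivots from right to left produces zeros north of every trailing pivot without disturbing earlier clearings. The resulting matrix keeps $G|_{I_0}$ upper triangular (only upper-triangular row operations are used in the clearing step), and by Lemma \ref{L:T} the new trailing-pivot pattern is exactly the condition $PL=L'P$.

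For uniqueness, suppose $G$ and $G'$ are two reduced minimal span forms of the same row space $C$. By Lemma \ref{L:sp2} they share the same collection of spans, so I would order rows of both matrices so that row $j$ has span $[i_0^{(j)}, i_1^{(j)}]$ with $i_0^{(1)} < \cdots < i_0^{(k)}$. The heart of the proof is to show row-by-row that $g'_j$ is a scalar multiple of $g_j$, giving uniqueness up to the usual scaling of individual rows. I would do this by identifying a one-dimensional subspace of $C$ that must contain both $g_j$ and $g'_j$. Set $V_j := \{c \in C : c_l = 0 \text{ for } l < i_0^{(j)}\}$; a standard leading-pivot argument (using that rows with leading pivot past $i_0^{(l)}$ vanish at position $i_0^{(l)}$) shows $V_j = \mathrm{span}(g_j, g_{j+1}, \ldots, g_k)$, of dimension $k - j + 1$.

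Now write $c = \sum_{m \geq j} \alpha_m g_m \in V_j$ and impose the additional conditions $c_{i_1^{(l)}}=0$ for $l=j+1,\ldots,k$. The coefficient of $\alpha_j$ in each such condition is $g_{j,i_1^{(l)}}$, which vanishes by the ``zero north of trailing pivots'' condition (since $j<l$), so $\alpha_j$ is unconstrained. The remaining coefficient matrix $\bigl(g_{m,i_1^{(l)}}\bigr)_{l,m > j}$ is upper triangular — its entries vanish for $m < l$ by the same reducedness condition — with nonzero trailing pivots $g_{l,i_1^{(l)}}$ on the diagonal, hence invertible. This forces $\alpha_{j+1}=\cdots=\alpha_k=0$, so the joint solution space is exactly the line $F g_j$; since $g'_j$ satisfies all the same defining conditions (being the row of span $[i_0^{(j)}, i_1^{(j)}]$ in another reduced minimal span form), we conclude $g'_j = \lambda_j g_j$.

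The main obstacle is recognizing that the reducedness conditions translate index-by-index into exactly $k-j$ linear constraints on $V_j$ whose coefficient matrix is upper triangular with nonzero diagonal; once the right subspace $V_j$ is set up and one checks this triangular structure — which is where both the ``zero north'' condition and the trailing-pivot property feed in — the rest is standard linear algebra.
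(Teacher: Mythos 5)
Your proof is correct, and its uniqueness half takes a genuinely different route from the paper's. The existence part coincides with the paper's (Lemma \ref{L:3} together with Lemma \ref{L:T}). For uniqueness, however, the paper stays at the level of the pivot minors $U = G|_{I_0}$ and $L'P = G|_{I_1}$: two reduced forms $G$ and $\tilde G$ of the same row space satisfy $\tilde G = MG$, and since $P$ is common to both (Lemma \ref{L:sp2}) one gets $\tilde U U^{-1} = M = \tilde L' L'^{-1}$, a matrix that is simultaneously upper and lower triangular, hence diagonal -- so the two forms differ only by row scaling. This is precisely the uniqueness of the reduced $LPU$/Bruhat decomposition and continues the thread of Section \ref{S:bruh}. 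Your argument instead characterizes each row $g_j$ intrinsically as a spanning vector of the one-dimensional subspace $\left\{ c \in C : \text{$c_l = 0$ for $l < i_0^{(j)}$ and $c_{i_1^{(l)}} = 0$ for $l > j$} \right\}$; the verification that the constraint matrix $\bigl(g_{m,i_1^{(l)}}\bigr)_{l,m>j}$ is triangular with nonzero diagonal is exactly where the ``zero north of trailing pivots'' condition and the trailing-pivot property enter, and it checks out. What your approach buys is an explicit, basis-free description of the rows of the reduced form as distinguished codewords cut out by support conditions (very much in the spirit of the subspace-separation viewpoint of Section \ref{S:char}), at the cost of more index bookkeeping; the paper's approach buys brevity and the direct link to the Bruhat picture. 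Both yield uniqueness up to scaling of individual rows, which is the paper's conclusion as well before the normalization of Definition \ref{D:rmsf2}.
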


\begin{proof}
Lemma \ref{L:3} proves the existence of a minimal span form $G$ that satisfies (\ref{rmsf}). Lemma \ref{L:sp2} shows that the matrix $P$ is the same for any two minimal span forms. Thus, the matrix $U^{-1} L'$ is the same for any two reduced minimal span forms. It follows that $U$ and $L'$ are unique (up to a constant diagonal matrix).
\end{proof}

\subsection{Dual minimal span forms}

Definition \ref{D:rmsf} characterizes the reduced minimal span form of a matrix via the properties left-ordered and right-reduced. This follows the convention for a reduced row echelon form where the echelon form appears on the left. In both cases forward elimination brings the matrix in echelon form on the left. The back substitution phase of Gaussian elimination clears columns on the left to obtain the reduced row echelon form. For the reduced minimal span form the second phase of clearing columns is carried out on the opposite right side of the matrix. 
  
As in \cite[Section IV]{McE96}, the standard row reduced echelon form might be called a ``left'' reduced row echelon form. If pivot columns are selected from the right, one obtains what might be called a ``right'' reduced row echelon form. We denote a left reduced echelon form for $G$ by $G_0$ and a right reduced echelon form by $G_1$.
\[
G_0 |_{I_0} = I, ~~~~G_1 |_{I_1} = I.
\]
For a pair of matrices $G$ and $H$ with maximal orthogonal row spaces, the matrix $H$ has leading pivots $J_0 = \{ 0,1,\ldots, n-1 \} \backslash I_1$ and trailing pivots $J_1 = \{ 0,1,\ldots, n-1 \} \backslash I_0$. The echelon forms $G_0$ and $H_1$ are in duality via
\[
G_0 |_{I_0} = I, ~~~~G_0 |_{J_1} = A, ~~~~ H_1 |_{I_0} = -A^T, ~~~~H_1 |_{J_1} = I. 
\]
To obtain a duality for reduced minimal span forms we add to Definition \ref{D:rmsf} a matching reduced minimal span form for the matrix $H$.

\begin{definition} \label{D:rmsf2}
A matrix $G$ is in left-ordered right-reduced minimal span form if
\[
G|_{I_0} = U ~~~\text{and}~~~ G|_{I_1} = PL = L'P.       
\]
A matrix $H$ is in right-ordered left-reduced minimal span form if
\[
H|_{I_0} = PU = U'P ~~~\text{and}~~~ H|_{J_1} = L.          
\]
The minimal span form for $G$ is normalized by choosing $L$ with $1$s on the diagonal. The minimal span form for $H$ is normalized by choosing $U$ with $1$s on the diagonal.
\end{definition}
  
We write $G_{01}$ for a reduced minimal span form of the first type and $H_{10}$ for a reduced minimal span form of the second type. We will rarely use the full long form terminology of the definition. Instead we write $G \perp H$ for a pair of matrices with maximal orthogonal row spaces and call the pair reduced if $G=G_{01}$ and $H=H_{10}$. This is illustrated for the pair $G_{01} \perp H_{10}$ below.

\begin{center}
    \begin{tikzpicture}
        [scale=.5,auto=center,every node/.style={minimum size=0em}]
        \node [left] at (-4,11.6) {$G_0 = $};      
        \matrix [matrix of math nodes,left delimiter=(,right delimiter=),right] (m) at (-3.5,11.6)
        {
             1 &0 &0 &0 &1 \\ 0 &1 &0 &1 &1 \\ 0 &0 &1 &1 &0 \\
        };  
        \draw[color=black] (m-1-1.north west) -- (m-1-3.north east) -- (m-3-3.south east) -- (m-3-1.south west) -- (m-1-1.north west);
        \node [left] at (9,11.6) {$G_{01} ~= $};      
        \matrix [matrix of math nodes,left delimiter=(,right delimiter=),right] (m) at (9.5,11.6)
        {
             1 &1 &1 &0 &0 \\ 0 &1 &1 &0 &1 \\ 0 &0 &1 &1 &0 \\
        };  
        \draw[color=black] (m-1-1.north west) -- (m-1-3.north east) -- (m-3-3.south east) -- (m-3-1.south west) -- (m-1-1.north west);
        \node [left] at (17,11.6) {$ = $};
        \matrix [matrix of math nodes,left delimiter=(,right delimiter=),right] (m) at (17.5,11.6)
        {
             1 &1 &1 &0 &0 \\ 0 &1 &1 &0 &1 \\ 0 &0 &1 &1 &0 \\
        };  
        \draw[color=black] (m-1-3.north west) -- (m-1-5.north east) -- (m-3-5.south east) -- (m-3-3.south west) -- (m-1-3.north west);
        \node [left] at (-4,8) {$H_1 = $};      
        \matrix [matrix of math nodes,left delimiter=(,right delimiter=),right] (m) at (-3.5,8)
        {
              0 &1 &1 &1 &0  \\ 1 &1 &0 &0 &1  \\
        };  
        \draw[color=black] (m-1-4.north west) -- (m-1-5.north east) -- (m-2-5.south east) -- (m-2-4.south west) -- (m-1-4.north west); 
        \node [left] at (9,8) {$H_{10} ~= $};      
        \matrix [matrix of math nodes,left delimiter=(,right delimiter=),right] (m) at (9.5,8)
        {
              0 &1 &1 &1 &0  \\ 1 &0 &1 &1 &1   \\
        };  
        \draw[color=black] (m-1-4.north west) -- (m-1-5.north east) -- (m-2-5.south east) -- (m-2-4.south west) -- (m-1-4.north west);
        \node [left] at (17,8) {$ = $};      
        \matrix [matrix of math nodes,left delimiter=(,right delimiter=),right] (m) at (17.5,8)
        {
             0 &1 &1 &1 &0  \\ 1 &0 &1 &1 &1   \\
        };  
        \draw[color=black] (m-1-1.north west) -- (m-1-2.north east) -- (m-2-2.south east) -- (m-2-1.south west) -- (m-1-1.north west);
    \end{tikzpicture}
\end{center}

The property that characterizes a matrix in minimal span form is the LR-property (property (2) in Lemma \ref{L:sp1}). Similarly, we may call the characterizing property of a reduced minimal span form the SW-NE property (after ordering the rows on the left or on the right, entries south and west of a leading pivot are zero and entries north and east of a trailing pivot are zero). This property will have a clear interpretation in the duality of characteristic matrices in Section \ref{S:char}. 

\subsection{Reduced Bruhat decompositions}

Given the connection between minimal span forms and Bruhat decompositions that was pointed out in Section \ref{S:bruh}, we want to interpret the reduced minimal span form of a matrix in terms of Bruhat decompositions. This leads us to the notion of a reduced Bruhat decompositon defined by Kolotrilina and Yeremin in \cite{YK87}. About a Bruhat decomposition $A = V \pi_A U$ ($A$ nonsingular, $U$ and $V$ upper triangular) they write [loc.sit. page 422]: ``The triangular factors $V$ and $U$ $\ldots$ are not in general uniquely determined. To remove this uncertainty, one can consider the so-called reduced Bruhat decomposition.''

\begin{definition}[\cite{YK87}] \label{D:KY} The decompositon $A = V \pi_A U$ will be referred to as the reduced-on-the-right or on-the-left Bruhat decompositon if
\[
(\pi_A) U (\pi_A)^T  ~~~~\text{or}~~~~ (\pi_A) U (\pi_A)^T,
\]
respectively, is lower triangular.
\end{definition}

Further on [loc.sit. page 423]: ``It is easy to show that all factors of the reduced Bruhat decompositon are uniquely determined.''
The defintion is motivated by applications in numerical analysis [loc.sit page 421]: ``the Bruhat decomposition $\ldots$ allows the formulation of a global analytical criterion for optimizing sparsity patterns of triangular factors $\ldots$''. Clearly the reductions applied in Definition \ref{D:rmsf2} agree with those of Definition \ref{D:KY} when the latter are rewritten for a modified Bruhat decomposition. 

\bigskip  

The nonuniqueness of matrices $L$ and $U$ in the decomposition $A=LPU$ is due to inversions in the permutation matrix $P$.
This is illustrated by
\[
A = \left( \begin{array}{cc} 0&1 \\ 1 &0 \end{array} \right) =
\left( \begin{array}{cc} 1&0 \\ -a &1 \end{array} \right) 
\left( \begin{array}{cc} 0&1 \\ 1 &0 \end{array} \right)
\left( \begin{array}{cc} 1&a \\ 0 &1 \end{array} \right) = LPU.
\]
To the one-parameter family of LPU-decompositions corresponds the one-parameter family of minimal span forms 
\[
 ( A | I ) = \left( \begin{array}{cc|cc} 0&1 &1 &0 \\ 1 &0 &0 &1 \end{array} \right) \sim
\left( \begin{array}{cc|cc} 1&a &a &1 \\ 0&1 &1&0 \end{array} \right) = ( U | P^{-1} L^{-1} )
\]
Clearly, the choices for $L$ and $U$ that minimize the number of operations in the assignment $A \mapsto P = L^{-1} A U^{-1}$ (in other words that optimize the sparsity of $L^{-1}$ and $U^{-1}$) are the choices $L=U=I$ and $a=0$. Which is our choice for the reduced minimal span form of the matrix $(A|I)$.

\section{Characteristic matrix} \label{S:char}

Koetter and Vardy \cite{KV03} define a redundant set of generators, called characteristic generators, for the row space of a matrix. From the redundant set it is straightforward to recover not only a minimal span form for the original matrix but also for the cyclic permutations of the original matrix. 
In \cite{KV03}, the set of characteristic generators is built using an efficient procedure. The procedure checks the minimal span forms for cyclicly permuted versions of the original matrix one cyclic shift at a time and each time decides if a new generator should be added to the list. The final list is shown to be of size $n$ and has the property that the spans for the generators start and end in distinct positions. The main result of \cite{KV03} is that any reduced linear tail-biting trellis can be built with the product construction from a subset of characteristic generators. 
For this result, the existence of characteristic generators and the fact that they can be computed efficiently when needed is sufficient. 
In this section we describe characteristic generators in a different way and establish new properties.
The properties lead to short proofs for duality of Koetter-Vardy tail-biting trellises. 

\begin{definition}
A row vector $(c_0,c_1,\ldots,c_{n-1})$ has span $(i,j]$ if $c_i, c_j \neq 0$ and if $c_k = 0$ whenever $k$ is \emph{not in the span}. Here $k \not \in (i,j]$ if and only if the ordered triple $i,j,k$ is even, i.e., $i < j < k$, $k < i < j$ or $j < k < i$. The span of a row vector is in general not unique. 
\end{definition}

The definition generalizes the definition of conventional span (or support interval) used in Section~\ref{S:mins}. The length of a conventional span $(i,j]$ is $j-i$ and it is $n-(j-i)$ for its complement $(j,i]$.

\begin{definition} \label{D:char}
Let $G$ be a matrix, $C = \row G$ its row space, and $n$ the number of columns.
A \emph{set of characteristic generators} for $C$ is a subset of $n$ vectors such that the spans of the vectors start and end in distinct positions and such that the sum of the spanlengths of the vectors is minimal. A square matrix is called a characteristic matrix for $G$ if its rows form a set of characteristic generators for the row space of $G$. 
\end{definition}

We call $X$ a left-ordered characteristic matrix if rows are ordered such that each row has span starting on the diagonal, and in that case write $X = X_0 + X_1$, for an upper triangular matrix $X_0$ and a strictly lower triangular matrix $X_1$. We call $Y$ a right-ordered characteristic matrix if rows are ordered such that each row has span ending on the diagonal, and in that case write $Y = Y_1 + Y_0$, for a lower triangular matrix $Y_0$ and a strictly upper triangular matrix $Y_1$. The unwrapped version $(X_0|X_1)$ of the matrix $X$ is in left-ordered minimal span form and the unwraped version $(Y_1|Y_0)$ of the matrix $Y$ is in right-ordered minimal span form. 
Circulant matrices are a special case of characteristic matrices. The unwrapping of two circulant matrices is shown below, 
for polynomials $c(x)=c_0+c_1 x + c_2 x^2$ and $d(x) = d_0 + d_1 x + d_2 x^2$ such that $c(x)d(x) = x^4-1$. 
\[
X = \left( \begin{array}{ccccc}
c_0  &c_1 &c_2 &0 \\
0 &c_0  &c_1 &c_2 \\
c_2 &0 &c_0  &c_1  \\ 
c_1 &c_2 &0 &c_0 
\end{array} \right)
~~~~(X_0 | X_1)  = 
\left( \begin{array}{cccc|cccc}
c_0  &c_1 &c_2 &0 &0 &0  &0 &0 \\
0 &c_0  &c_1 &c_2 &0 &0 &0 &0 \\
0 &0 &c_0  &c_1 &c_2 &0  &0 &0 \\
0 &0 &0 &c_0 &c_1 &c_2  &0 &0 
\end{array} \right)  
\]
\[
Y = \left( \begin{array}{ccccc}
d_0 &0 &d_2 &d_1  \\
d_1 &d_0 &0 &d_2 \\
d_2 &d_1 &d_0 &0 \\ 
0 &d_2 &d_1 &d_0 
\end{array} \right)
~~~~(Y_1|Y_0) =
\left( \begin{array}{cccc|cccc}
0 &0 &d_2 &d_1 &d_0 &0 &0 &0 \\
0 &0 &0 &d_2 &d_1 &d_0 &0 &0 \\
0 &0 &0 &0 &d_2 &d_1 &d_0 &0 \\ 
0 &0 &0 &0 &0 &d_2 &d_1 &d_0 
\end{array} \right)
\]
The spanlength of a vector before unwrapping agrees with the conventional spanlength after unwrapping. 
Let $G \perp H$ be a pair of matrices with maximal orthogonal row spaces, let $X$ be a left-ordered characteristic matrix for $G$ and $Y$ a right-ordered characteristic matrix for $H$. 
Unwrapping of the rows in $X$ yields a matrix $(X_0 | X_1) \perp (H|H)$. Since $X_0$ is invertible (its rows start in distinct positions), the two matrices
\begin{equation} \label{E:X}
\left( \begin{array}{c|c} X_0 &X_{1} \\ \hline \noalign{\smallskip} 0 &G \end{array}\right)  \perp (H|H)
\end{equation}
have maximal orthogonal row spaces. Similarly, the matrix $Y_0$ in the unwrapped version $(Y_1|Y_0)$ of $Y$ is invertible and the matrices
\begin{equation} \label{E:Y}
(G|G) \perp \left( \begin{array}{c|c} H &0 \\ \hline \noalign{\smallskip} Y_1 &Y_0 \end{array}\right)
\end{equation}
have maximal orthogonal row spaces. 

\begin{lemma} \label{L:gen} 
Let $G \perp H$ be a pair of matrices with maximal orthogonal row spaces. 
{\begin{itemize}
\item[(X)] A matrix $X=X_0+X_1$ is a left-ordered characteristic matrix for $G$ if and only if the rows of $(X_0|X_1)$ together with those of $(0\,|\,G)$ form a matrix in left-ordered minimal span form orthogonal to $(H|H)$. 
\item[(Y)] A matrix $Y=Y_1+Y_0$ is a right-ordered characteristic matrix for $H$ if and only if the rows of $(H\,|\,0)$ together with those of $(Y_1|Y_0)$ form a matrix in right-ordered minimal span form orthogonal to $(G|G)$.
\end{itemize}}
\end{lemma}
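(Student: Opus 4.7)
My plan is to prove (X); part (Y) follows by the analogous argument with the roles of left and right exchanged. Write $M$ for the stacked matrix whose block rows are $(X_0|X_1)$ on top and $(0\,|\,G)$ below, and set $W := (H|H)^\perp$.

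First I would verify that $M$ has row space exactly $W$. Orthogonality is immediate: $(X_0 + X_1)H^T = XH^T = 0$ since the rows of $X$ lie in $\row G$, and $GH^T = 0$. The dimension matches because $X_0$ is upper triangular with nonzero diagonal (left-ordering makes row $i$ of $X$ start at position $i$), so the $n$ rows of $(X_0|X_1)$ together with the $k = \rk G$ independent rows of $(0|G)$ yield $n + k = \dim W$ linearly independent rows. Next I would identify the forced starts and ends of any minimal span form of $W$. Using the description $W = \{(u,v) : u+v \in \row G\}$, a direct inspection gives
\[
I_0(W) = \{0,1,\dots,n-1\} \cup (n + I_0(\row G)), \qquad I_1(W) = I_1(\row G) \cup \{n, n+1, \dots, 2n-1\},
\]
so by Lemma~\ref{L:sp2} any minimal span form of $W$ must realise these start and end sets.

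The bridge to characteristic matrices is a spanlength identity from unwrapping. A row of $X$ with cyclic span $(i, \pi(i)]$ becomes a row of $(X_0|X_1)$ with conventional span $[i, \pi(i)]$ when $\pi(i) \geq i$, and with conventional span $[i, n+\pi(i)]$ when $\pi(i) < i$; in both cases the conventional spanlength of the unwrapped row equals the cyclic spanlength of the original. Assuming $G$ is in minimal span form (we may replace $G$ by any basis of $\row G$ in minimal span form, without changing $W$), it follows that $\mathrm{spanlength}(M)$ equals the cyclic spanlength of $X$ plus the fixed quantity $\mathrm{spanlength}(G)$. Combined with the invariants $I_0(W)$ and $I_1(W)$ from the previous step, $M$ is in left-ordered minimal span form if and only if the rows of $X$ have cyclic starts $\{0,\dots,n-1\}$, have distinct cyclic ends, and achieve the minimum total cyclic spanlength --- precisely the defining conditions for a left-ordered characteristic matrix. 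For the backward direction, orthogonality of $M$ to $(H|H)$ additionally forces each row of $X$ into $\row G$.

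The main technical obstacle is the bookkeeping: verifying that the non-wrapping $X$-ends form a size-$k$ subset of $I_1(\row G)$ and that the wrapping $X$-ends together with the $G$-ends partition $\{n, \dots, 2n-1\}$ without collision. A counting argument on the $n+k$ distinct ends of $M$ split into three classes (non-wrapping $X$-ends in $[0,n-1]$, wrapping $X$-ends and $G$-ends both in $[n, 2n-1]$) shows that these compatibilities are forced by the minimal span form condition on $M$ together with $|I_1(\row G)| = k$; once this is carried out, both directions of the equivalence follow.
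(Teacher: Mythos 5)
Your argument is correct and rests on the same key observation as the paper's (two-sentence) proof: unwrapping preserves spanlength, so minimizing the cyclic spanlength of a set of generators with distinct starts and ends is the same as minimizing the conventional spanlength of the stacked matrix, which is exactly the minimal span form condition. The extra material you supply --- verifying that the stacked matrix spans $(H|H)^\perp$, computing its pivot sets, and the bookkeeping of wrapping versus non-wrapping ends --- is sound, and corresponds to what the paper establishes separately in the display preceding the lemma and in Proposition~\ref{P:gen}.
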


\begin{proof}
Minimizing the span of $x$ is the same as minimizing the conventional span of $(x_0|x_1)$. The minimum for the latter is attained for a matrix in minimal span form.  
\end{proof}

For the special choices of a reduced minimal span form in the lemma, we obtain what we will call a reduced characteristic matrix.

\begin{definition} \label{D:red} Let $G \perp H$ as in the lemma.
\begin{itemize}
\item[(X)] The matrix $X=X_0+X_1$ is a left-ordered, right-reduced characteristic matrix for $G$ if
\[
\left( \begin{array}{c|c} X_0 &X_{1} \\ \hline \noalign{\smallskip} 0 &G_{01} \end{array}\right)  =
\left( \begin{array}{c|c} -I &I \\ \hline \noalign{\smallskip} 0 &G \end{array}\right)_{01}.
\]
\item[(Y)] The matrix $Y=Y_1+Y_0$ is a right-ordered, left-reduced characteristic matrix for $H$ if
\[
\left( \begin{array}{c|c} H_{10} &0 \\ \hline \noalign{\smallskip} Y_1 &Y_0 \end{array}\right) = 
\left( \begin{array}{c|c} H &0 \\ \hline \noalign{\smallskip} I &-I \end{array}\right)_{10}
\]
\end{itemize}
\end{definition}

\begin{example}
\[
\begin{array}{ccm{2.25in}l}
\left[ \begin{array}{c|c} X_0 &X_1 \\ \hline 0 &G \end{array}\right] 
&=
&\begin{tikzpicture}
	[scale=.5,auto=center]
	\matrix [matrix of math nodes,ampersand replacement=\&,
                    left delimiter={[},right delimiter={]},row sep = .1em,column sep = .2em,text width=.6em,text height=.7em] (m)
        {
	2 \&\;2 \&\;1 \&\;0  \&\;\dt\&\;\dt\&\;\dt\&\;\dt  \\
        \dt\&\;1 \&\;1 \&\;1 \&\;0\&\;\dt\&\;\dt\&\;\dt \\
        \dt\&\;\dt\&\;1 \&\;2 \&\;1 \&\;0 \&\;\dt\&\;\dt \\
        \dt\&\;\dt\&\;\dt\&\;2 \&\;2 \&\;1\&\;0\&\;\dt \\ \hline
        \dt\&\;\dt\&\;\dt\&\;\dt \&\;2 \&\;2 \&\;1 \&\;0 \\
        \dt\&\;\dt\&\;\dt \&\;\dt \&\;\dt \&\;1 \&\;1 \&\;1\\
        };  
        \draw[color=black] (m-1-5.north west) -- (m-6-5.south west);
        \draw[color=black] (m-4-1.south west) -- (m-4-8.south east);
        \draw[color=black,dashed] (m-3-3.north west) -- (m-3-5.north west) -- (m-4-5.south west) -- (m-4-3.south west) -- (m-3-3.north west);
        \draw[color=black,dashed] (m-3-5.north west) -- (m-3-7.north west) -- (m-4-7.south west) -- (m-4-5.south west) -- (m-3-5.north west);
\end{tikzpicture}
& \\  \noalign{\smallskip}
&\perp
&\begin{tikzpicture}
	[scale=.5,auto=center]
	\matrix [matrix of math nodes,ampersand replacement=\&,
                    left delimiter={[},right delimiter={]},row sep = .1em,column sep = .2em,text width=.6em,text height=.7em] (m)
        {
	1\&\;1\&\;2\&\;0\&\;1\&\;1\&\;2\&\;\dt  \\
        \dt\&\;1\&\;1\&\;1\&\;0\&\;1\&\;1\&\;1 \\
        };  
        \draw[color=black] (m-1-5.north west) -- (m-2-5.south west);
        \draw[color=black,dashed] (m-1-3.north west) -- (m-1-5.north west) -- (m-2-5.south west) -- (m-2-3.south west) -- (m-1-3.north west);
        \draw[color=black,dashed] (m-1-5.north west) -- (m-1-7.north west) -- (m-2-7.south west) -- (m-2-5.south west) -- (m-1-5.north west);
\end{tikzpicture}
&=~ \left[ ~H~|~H~ \right].
\end{array}
\]
\[
\begin{array}{ccm{2.25in}l}
\left[ ~G~|~G~ \right]
&=
&\begin{tikzpicture}
	[scale=.5,auto=center]
	\matrix [matrix of math nodes,ampersand replacement=\&,
                    left delimiter={[},right delimiter={]},row sep = .1em,column sep = .2em,text width=.6em,text height=.7em] (m)
        {
          2 \&\;2 \&\;1 \&\;0  \&\;2\&\;2\&\;1\&\;\dt  \\
          \dt\&\;1 \&\;1 \&\;1 \&\;0\&\;1\&\;1\&\;1  \\	
        };  
        \draw[color=black] (m-1-5.north west) -- (m-2-5.south west);
        \draw[color=black,dashed] (m-1-3.north west) -- (m-1-5.north west) -- (m-2-5.south west) -- (m-2-3.south west) -- (m-1-3.north west);
        \draw[color=black,dashed] (m-1-5.north west) -- (m-1-7.north west) -- (m-2-7.south west) -- (m-2-5.south west) -- (m-1-5.north west);
\end{tikzpicture} 
& \\  \noalign{\smallskip}
&\perp
&\begin{tikzpicture}
	[scale=.5,auto=center]
	\matrix [matrix of math nodes,ampersand replacement=\&,
                    left delimiter={[},right delimiter={]},row sep = .1em,column sep = .2em,text width=.6em,text height=.7em] (m)
        {
	1 \&\;1 \&\;2 \&\;\dt  \&\;\dt\&\;\dt\&\;\dt\&\;\dt  \\
        0 \&\;1 \&\;1 \&\;1 \&\;\dt\&\;\dt\&\;\dt\&\;\dt \\ \hline
        \dt\&\;0\&\;1 \&\;2 \&\;1 \&\;\dt\&\;\dt\&\;\dt \\
        \dt\&\;\dt\&\;0\&\;1 \&\;1 \&\;2\&\;\dt\&\;\dt \\
        \dt\&\;\dt\&\;\dt\&\;0 \&\;1 \&\;1 \&\;2 \&\;\dt \\
        \dt\&\;\dt\&\;\dt \&\;\dt \&\;0 \&\;1 \&\;1 \&\;1\\
        };  
        \draw[color=black] (m-1-5.north west) -- (m-6-5.south west);
        \draw[color=black] (m-2-1.south west) -- (m-2-8.south east);
        \draw[color=black,dashed] (m-3-3.north west) -- (m-3-5.north west) -- (m-4-5.south west) -- (m-4-3.south west) -- (m-3-3.north west);
        \draw[color=black,dashed] (m-3-5.north west) -- (m-3-7.north west) -- (m-4-7.south west) -- (m-4-5.south west) -- (m-3-5.north west);
\end{tikzpicture}
&=~ \left[ \begin{array}{c|c} H &0 \\ \hline \noalign{\smallskip} Y_1 &Y_0 \end{array}\right]
\end{array}
\]
\end{example}

Let $G$ have left pivot columns $I_0$ and right pivot columns $I_1$. So that $H$ has left pivot columns $J_1 = \{ 0, 1, \ldots, n-1 \} \backslash I_1$ and right pivot columns $J_0 =  \{ 0, 1, \ldots, n-1 \} \backslash I_0$. 

\begin{proposition} \label{P:gen}
Let $G \perp H$ as in Lemma \ref{L:gen}.
\begin{itemize} 
\item[(X)] A matrix $X=X_0+X_1$ is a left-ordered characteristic matrix for $G$ if and only if the rows of $(X_0|X_1)$ are orthogonal to $(H|H)$ and have spans starting in distinct elements of $\{0,\ldots,n-1\}$ and ending in distinct elements of $I_1 \cup n+J_1$. 
\item[(Y)] A matrix $Y=Y_1+Y_0$ is a right-ordered characteristic matrix if and only if the rows of $(Y_1|Y_0)$ are orthogonal to $(G|G)$ and have spans starting in distinct elements of $I_1 \cup n+J_1$ and ending in distinct elements of $n+ \{0,\ldots,n-1\}$. 
\end{itemize}
\end{proposition}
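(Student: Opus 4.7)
The plan is to derive both parts from Lemma \ref{L:gen} together with the left-right pivot duality for maximal orthogonal row spaces. I describe the argument for part (X); part (Y) follows by a symmetric argument with the roles of upper/lower triangular and left/right interchanged.

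By Lemma \ref{L:gen}, $X = X_0 + X_1$ is a left-ordered characteristic matrix for $G$ if and only if the rows of $(X_0|X_1)$ together with those of $(0|G)$ form a matrix in left-ordered minimal span form orthogonal to $(H|H)$. Writing $k = \dim G$, the combined matrix has $n+k$ rows, which matches the dimension of the orthogonal complement of the row space of $(H|H)$ inside $F^{2n}$. By Lemma \ref{L:sp1}, being in minimal span form is equivalent to all left pivots of this combined matrix being distinct and all right pivots being distinct. So the task reduces to identifying these pivot sets and subtracting off the contribution of the $(0|G)$ block.

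The key step is to apply the pivot duality recorded in Section \ref{S:reds}: for a pair of maximal orthogonal row spaces in $F^N$, the left pivots of one are the complement in $\{0,\ldots,N-1\}$ of the right pivots of the other. Since $(H|H) \subset F^{2n}$ has left pivots $J_1$ and right pivots $n+J_0$, its orthogonal complement has left pivot set $\{0,\ldots,2n-1\} \setminus (n+J_0) = \{0,\ldots,n-1\} \cup (n+I_0)$ and right pivot set $\{0,\ldots,2n-1\} \setminus J_1 = I_1 \cup (n+\{0,\ldots,n-1\})$. The rows of $(0|G)$ contribute left pivots $n+I_0$ and right pivots $n+I_1$; removing these leaves exactly $\{0,\ldots,n-1\}$ for the left pivots and $I_1 \cup (n+J_1)$ for the right pivots of $(X_0|X_1)$, which is the stated condition.

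For the converse, the prescribed pivot conditions together with orthogonality to $(H|H)$ force the combined matrix to have $n+k$ distinct left pivots (since $\{0,\ldots,n-1\}$ and $n+I_0$ are disjoint) and $n+k$ distinct right pivots (since $I_1 \subset \{0,\ldots,n-1\}$ is disjoint from $n+I_1$, and $n+J_1$ is disjoint from $n+I_1$ because $I_1$ and $J_1$ are complementary in $\{0,\ldots,n-1\}$). By Lemma \ref{L:sp1} the combined matrix is in minimal span form, and by Lemma \ref{L:gen} $X$ is a left-ordered characteristic matrix for $G$. The main obstacle is just the careful bookkeeping of pivot sets inside $F^{2n}$; once the pivot duality is applied to $(H|H)$ the remainder of the argument is a set-theoretic subtraction.
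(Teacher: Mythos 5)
Your proof is correct and follows essentially the same route as the paper's: both arguments compute the trailing (and leading) pivot sets of $(H|H)^\perp$ from the pivots of $(H|H)$ via the orthogonal-complement pivot duality of Section \ref{S:reds}, then subtract the contribution of the $(0|G)$ block and invoke Lemmas \ref{L:sp1} and \ref{L:gen}. You merely spell out the left-pivot bookkeeping and the converse disjointness check that the paper leaves implicit.
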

\begin{proof} The matrix $(H|H)$ has spans starting in $J_1$. The orthogonal row space in minimal span form therefore has spans ending in $I_1 \cup  n+ \{0,\ldots,n-1\}$. With rows in $(0|G)$ ending in $n + I_1$, the orthogonal row space is in minimal span form if and only if the rows in $(X_0|X_1)$ end in $I_1 \cup n+ J_1.$ The proof for (Y) is similar. 
\end{proof}

\begin{corollary} \label{C:span} It follows that
\begin{itemize}
\item[(X)] The total spanlength of $(X_0|X_1)$ and thus of $X$ is $|J_1| n = (\rk H) n$.
\item[(Y)] The total spanlength of $(Y_1|Y_0)$ and thus of $Y$ is $|I_1| n = (\rk G) n$.
\end{itemize}
\end{corollary}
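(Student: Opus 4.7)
The plan is to compute the total spanlength directly by summing end positions minus start positions, using the precise description of these positions from Proposition \ref{P:gen}. There is no real obstacle here; this is essentially a counting argument relying on the fact that $I_1$ and $J_1$ partition $\{0,1,\ldots,n-1\}$.

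For part (X), Proposition \ref{P:gen} states that the $n$ rows of $(X_0|X_1)$ have spans with starting positions forming a distinct subset of $\{0,1,\ldots,n-1\}$. Since there are exactly $n$ rows and $n$ candidate starting positions, the starting positions are precisely $0,1,\ldots,n-1$, summing to $\binom{n}{2}$. Similarly, the $n$ ending positions lie in distinct elements of $I_1 \cup (n+J_1)$; because $J_1 = \{0,\ldots,n-1\}\setminus I_1$, this set has cardinality $|I_1|+|J_1| = n$, so the ending positions exhaust it. The sum of ending positions is therefore
\[
\sum_{i \in I_1} i \;+\; \sum_{j \in J_1}(n+j) \;=\; \Bigl(\sum_{i\in I_1} i + \sum_{j\in J_1} j\Bigr) + n|J_1| \;=\; \binom{n}{2} + n|J_1|,
\]
using the partition $I_1 \sqcup J_1 = \{0,\ldots,n-1\}$. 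Subtracting the sum of starting positions yields total spanlength $n|J_1| = n \cdot \rk H$ for $(X_0|X_1)$. Finally, as noted in the text just after the circulant example, the spanlength of a row of $X$ equals the conventional spanlength of its unwrapping, so $X$ and $(X_0|X_1)$ have the same total spanlength.

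Part (Y) is symmetric. By Proposition \ref{P:gen}(Y), the rows of $(Y_1|Y_0)$ have starting positions filling out $I_1 \cup (n+J_1)$ (summing to $\binom{n}{2} + n|J_1|$ as above) and ending positions filling out $n+\{0,\ldots,n-1\}$ (summing to $n^2 + \binom{n}{2}$). The total spanlength is therefore
\[
\bigl(n^2 + \tbinom{n}{2}\bigr) - \bigl(\tbinom{n}{2} + n|J_1|\bigr) \;=\; n(n-|J_1|) \;=\; n|I_1| \;=\; n\cdot \rk G,
\]
and again this equals the spanlength of $Y$ itself.
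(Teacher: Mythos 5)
Your proof is correct and is exactly the computation the paper leaves implicit behind ``It follows that'': the start and end positions forced by Proposition \ref{P:gen} are summed, the partition $I_1 \sqcup J_1 = \{0,\ldots,n-1\}$ cancels the $\binom{n}{2}$ terms, and the remaining $n|J_1|$ (resp.\ $n|I_1|$) is the spanlength, which is preserved under unwrapping. Nothing further is needed.
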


Definition \ref{D:red} refers to the minimal span forms of $(H|H)^\perp$ and $(G|G)^\perp$. A characteristic matrix can also be obtained directly from minimal span forms for the matrices $G$ and $H$. This characteristic matrix will in general not be reduced.  
The rows of a matrix $G$ in minimal span form provide rows for $(X_0|X_1)$ with spans starting in $I_0$ and ending in $I_1$. The remaining generators, with span starting in $J_0$ and ending in $n + J_1$, can be obtained from a matrix $H$ in minimal span form using the following lemma. The dual construction yields rows for $(Y_1|Y_0)$ with span starting in $I_1$ and ending in $n+I_0$. 

\begin{lemma} \label{L:sred}
Let $G \perp H$ be a pair of matrices with maximal orthogonal row spaces and assume that both matrices are in minimal span form.
\begin{itemize}
\item[(X)] Let $H'_0$ be the unique matrix of the same size as $H$ and with support in the trailing pivot columns $J_0$ of $H$ such that $H'_0 H^T = -I$ and let $H'_1$ be the unique matrix of the same size as $H$ and with support in the leading pivot columns $J_1$ of $H$ such that $H'_1 H^T = I$. The rows of $(H'_0|H'_1)$ are orthogonal to $(H|H)$ and have spans starting in $J_0$ and ending in $n+ J_1$.
\item[(Y)] Let $G'_0$ be the unique matrix of the same size as $G$ and with support in the leading pivot columns $I_0$ of $G$ such that $G'_0 G^T = -I$ and let $G'_1$ be the unique matrix of the sime size as $G$ and with support in the trailing pivot columns $I_1$ of $G$ such that $G'_1 G^T = -I$. The rows of $(G'_1|G'_0)$ are orthogonal to $(G|G)$ and have spans starting in $I_1$ and ending in $n + I_0$.
\end{itemize}
\end{lemma}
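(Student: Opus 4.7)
\medskip

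The plan is to extract everything from the fact that the minimal span form of $H$ makes the two pivot submatrices $H|_{J_0}$ and $H|_{J_1}$ invertible. Recall $|J_0| = |J_1| = \rk H$, so these are square of size $\rk H \times \rk H$. Ordering the rows of $H$ so that their trailing pivot columns appear in increasing order, the entry of $H|_{J_0}$ in row $i$ and local column $k$ is $H[i,j_k]$; since $j_i$ is by definition the rightmost nonzero of row $i$, entries with $k>i$ vanish, so $H|_{J_0}$ is lower triangular with the trailing pivots on the diagonal, hence invertible. A symmetric argument using leading pivots shows $H|_{J_1}$ is upper triangular and invertible.

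\medskip

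With invertibility in hand, the equation $H'_0 H^T = -I$ together with the support condition $\supp H'_0 \subset J_0$ reduces to $H'_0|_{J_0}\, (H|_{J_0})^T = -I$, which has the unique solution $H'_0|_{J_0} = -\bigl(H|_{J_0}\bigr)^{-T}$. This yields existence and uniqueness of $H'_0$. The analogous argument gives $H'_1|_{J_1} = \bigl(H|_{J_1}\bigr)^{-T}$ as the unique $H'_1$ supported in $J_1$ with $H'_1 H^T = I$. Orthogonality is then a one-line check:
\[
(H'_0\,|\,H'_1)(H\,|\,H)^T \;=\; H'_0 H^T + H'_1 H^T \;=\; -I + I \;=\; 0.
\]

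\medskip

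For the span claim, because $H'_0|_{J_0}$ is invertible, every row of $H'_0$ has at least one nonzero entry, and this entry lies in a column of $J_0 \subset \{0,\ldots,n-1\}$. Since $H'_0$ occupies the left $n$ columns of the concatenation $(H'_0|H'_1)$, the leftmost nonzero entry of each row of $(H'_0|H'_1)$ lies in $J_0$. Dually, every row of $H'_1$ is nonzero with support in $J_1$, which as the right half of $(H'_0|H'_1)$ contributes entries in positions $n+J_1$; thus the rightmost nonzero entry of each row of $(H'_0|H'_1)$ lies in $n+J_1$. This gives the stated spans. Part (Y) is proved by the mirror argument, swapping the roles of $G$ and $H$ and of leading/trailing pivots; invertibility of $G|_{I_0}$ and $G|_{I_1}$ comes from the minimal span form of $G$ in exactly the same way.

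\medskip

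There is no deep obstacle in this lemma; the entire content is packaged inside the invertibility of the two pivot submatrices of a matrix in minimal span form. The only care needed is bookkeeping: tracking which set ($I_0, I_1, J_0, J_1$) indexes leading versus trailing pivots of $G$ versus $H$, and verifying that the supports of $H'_0$ and $H'_1$ place them in the appropriate halves of the unwrapped matrix so that the cyclic span of the corresponding row in the characteristic matrix begins and ends at the claimed positions.
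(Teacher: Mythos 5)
Your proof is correct, and it supplies the details that the paper itself dismisses with a one-word proof (``Clear''), using exactly the ingredient the author presumably had in mind: invertibility (indeed triangularity) of the two pivot minors of a matrix in minimal span form. One small point is worth adding. As the lemma is used in Theorem \ref{T:sred} and Proposition \ref{P:gen}, the rows of $(H'_0|H'_1)$ must start in \emph{distinct} elements of $J_0$ and end in \emph{distinct} elements of $n+J_1$ (so that, together with the rows of $(G|0)$, one gets $n$ generators with pairwise distinct start and end positions). Your argument as written only places each leftmost nonzero somewhere in $J_0$ and each rightmost nonzero somewhere in $n+J_1$. The distinctness follows in one line from what you already established: since $H|_{J_0}$ is lower triangular (rows ordered by trailing pivot), $H'_0|_{J_0}=-\bigl(H|_{J_0}\bigr)^{-T}$ is upper triangular with nonzero diagonal, so the leftmost nonzero of row $i$ of $H'_0$ sits exactly at the trailing pivot $j_i$ of row $i$ of $H$; symmetrically the rightmost nonzero of row $i$ of $H'_1$ sits at the leading pivot of row $i$ of $H$. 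Stating this makes the span claim precise and is what the product construction actually needs.
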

\begin{proof} 
Clear.
\end{proof}

\begin{example}
\[
    \begin{tikzpicture}
        [scale=.5,auto=center,every node/.style={minimum size=0em}]
        \node at (1,11.6) {$H ~= $};      
        \matrix [matrix of math nodes,left delimiter=(,right delimiter=)] (m) at (6,11.6)
        {
             1 &1 &2 &0  \\ 0 &1 &1 &1 \\ 
        };  
        \draw[dashed] (m-1-1.north west) -- (m-1-2.north east) -- (m-2-2.south east) -- (m-2-1.south west) -- (m-1-1.north west);
        \draw[dashed] (m-1-1.north west) -- (m-1-4.north east) -- (m-2-4.south east) -- (m-2-2.south east);
        \node at (10.5,11.6) {$H'_0 = $};
        \matrix [matrix of math nodes,left delimiter=(,right delimiter=)] (m) at (15,11.6)
        {
             0 &0 &1 &2  \\ 0 &0 &0 &2 \\ 
        };  
        \draw[dashed] (m-1-3.north west) -- (m-1-4.north east) -- (m-2-4.south east) -- (m-2-3.south west) -- (m-1-3.north west);
        \node at (20,11.6) {$H'_1 = $};
        \matrix [matrix of math nodes,left delimiter=(,right delimiter=)] (m) at (24.5,11.6)
        {
             1 &0 &0 &0  \\ 2 &1 &0 &0 \\ 
        };  
        \draw[dashed] (m-1-1.north west) -- (m-1-2.north east) -- (m-2-2.south east) -- (m-2-1.south west) -- (m-1-1.north west);
    \end{tikzpicture}
\]
\[
    \begin{tikzpicture}
        [scale=.5,auto=center,every node/.style={minimum size=0em}]
        \node at (1,11.6) {$G ~= $};      
        \matrix [matrix of math nodes,left delimiter=(,right delimiter=)] (m) at (6,11.6)
        {
             2 &2 &1 &0  \\ 0 &1 &1 &1 \\ 
        };  
        \draw[dashed] (m-1-1.north west) -- (m-1-2.north east) -- (m-2-2.south east) -- (m-2-1.south west) -- (m-1-1.north west);
        \draw[dashed] (m-1-1.north west) -- (m-1-4.north east) -- (m-2-4.south east) -- (m-2-2.south east);
        \node at (10.5,11.6) {$G'_1 = $};
        \matrix [matrix of math nodes,left delimiter=(,right delimiter=)] (m) at (15,11.6)
        {
             0 &0 &1 &2  \\ 0 &0 &0 &1 \\ 
        };  
        \draw[dashed] (m-1-3.north west) -- (m-1-4.north east) -- (m-2-4.south east) -- (m-2-3.south west) -- (m-1-3.north west);
        \node at (20,11.6) {$G'_0 = $};
        \matrix [matrix of math nodes,left delimiter=(,right delimiter=)] (m) at (24.5,11.6)
        {
             1 &0 &0 &0  \\ 1 &2 &0 &0 \\ 
        };  
        \draw[dashed] (m-1-1.north west) -- (m-1-2.north east) -- (m-2-2.south east) -- (m-2-1.south west) -- (m-1-1.north west);
    \end{tikzpicture}
\]
\end{example}
 
Complementing the rows of $G$ with those of $H'$ we obtain a full set of characteristic generators for $G$.

\begin{theorem} \label{T:sred}
For matrices $G \perp H$ in minimal span form, let $G'_0, G'_1, H'_0, H'_1$ be as in Lemma \ref{L:sred}. So that
\[
G'_0 G^T = -I, ~~~~G'_1 G^T = I, ~~~~H'_1 H^T = I, ~~~~H'_0 H^T = -I.
\]
Let
\[
\begin{array}{rrr}
( X_0 | X_1 ) = \left[ \begin{array}{c|c} G &0 \\ H'_0 &H'_1 \end{array} \right], 
&~~~~X = X_0 + X_1 =  \left[ \begin{array}{c} G \\ H' \end{array} \right]. \\ \noalign{\medskip}
( Y_1 | Y_0 ) = \left[ \begin{array}{c|c} G'_1 &G'_0 \\ 0 &H \end{array} \right], 
&~~~~Y = Y_1 + Y_0 =  \left[ \begin{array}{c} G' \\ H \end{array} \right].
\end{array}
\]
Then $X$ and $Y$ are characteristic matrices for $G$ and $H$. Moreover,
\begin{align*}
X_0 Y_0^T = - I, ~~X_0 Y_1^T + X_1 Y_0^T = I, ~~X_1 Y_1 ^T = 0. \\
Y_0^T X_0 = - I, ~~Y_0^T X_1 + Y_1^T X_0 = I, ~~Y_1^T X_1 = 0.
\end{align*}
So that $X Y^T = 0$ and $Y^T X = 0.$
\end{theorem}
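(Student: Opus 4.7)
The plan is first to verify via Lemma~\ref{L:gen} and Proposition~\ref{P:gen} that $X$ and $Y$ are characteristic matrices, and then to establish the six matrix equations by block multiplication, exploiting the disjoint column-support pairs $(I_0, J_0)$ and $(I_1, J_1)$ together with the defining relations for $G'_0, G'_1, H'_0, H'_1$.

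To see that $X$ is a characteristic matrix, I would observe that the rows of $(X_0|X_1) = \bigl(\begin{smallmatrix} G & 0 \\ H'_0 & H'_1 \end{smallmatrix}\bigr)$ are orthogonal to $(H|H)$ (the top block gives $GH^T = 0$ from $G \perp H$; the bottom block gives $H'_0 H^T + H'_1 H^T = -I + I = 0$) and that their spans start in $I_0 \cup J_0 = \{0,\dots,n-1\}$ and end in $I_1 \cup (n+J_1)$, exactly the profile required by Proposition~\ref{P:gen}. The argument for $Y$ is symmetric.

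The three identities in the first group come from computing each product as a $2 \times 2$ block matrix. For $X_0 Y_0^T$ the diagonal blocks are $G(G'_0)^T = -I$ and $H'_0 H^T = -I$ by the defining relations, while the off-diagonal blocks $GH^T$ and $H'_0 (G'_0)^T$ vanish, the first by orthogonality and the second because $H'_0$ and $G'_0$ have disjoint column supports $J_0$ and $I_0$. The identity $X_1 Y_1^T = 0$ is similar: two of its blocks are zero by the construction of $X_1$ and $Y_1$, and the remaining one $H'_1 (G'_1)^T$ vanishes by the disjoint supports $J_1$ and $I_1$. For the cross sum $X_0 Y_1^T + X_1 Y_0^T$ three of the four resulting blocks are immediate ($G(G'_1)^T = I$, $H'_1 H^T = I$, and $0$), and the only nontrivial block is $H'_0 (G'_1)^T + H'_1 (G'_0)^T$. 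Restoring the two vanishing cross terms rewrites this sum as $H'(G')^T$. Since $H' H^T = 0$ and $G' G^T = 0$, the rows of $H'$ lie in $\row G$ and the rows of $G'$ lie in $\row H$, so writing $H' = AG$ and $G' = BH$ gives $H'(G')^T = A(GH^T)B^T = 0$, as needed.

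The second group of three identities follows from an invertibility shortcut rather than repeating the computation. The block $X_0 = \bigl(\begin{smallmatrix} G \\ H'_0 \end{smallmatrix}\bigr)$ is invertible: any relation $\alpha G + \beta H'_0 = 0$ becomes $-\beta = \beta H'_0 H^T = 0$ after right-multiplying by $H^T$, forcing $\beta = 0$ and then $\alpha = 0$ by the full row rank of $G$. Hence the block upper triangular matrix $A = \bigl(\begin{smallmatrix} X_0 & X_1 \\ 0 & X_0 \end{smallmatrix}\bigr)$ is invertible, and the three identities already proved are equivalent to $AB = C$ with $B = \bigl(\begin{smallmatrix} Y_0^T & Y_1^T \\ 0 & Y_0^T \end{smallmatrix}\bigr)$ and $C = \bigl(\begin{smallmatrix} -I & I \\ 0 & -I \end{smallmatrix}\bigr)$. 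A direct check shows $AC = CA$, so $BA = A^{-1}(AB)A = A^{-1}CA = C$, which unpacks to $Y_0^T X_0 = -I$, $Y_0^T X_1 + Y_1^T X_0 = I$, and $Y_1^T X_1 = 0$. I expect the main obstacle to be the off-diagonal block $H'_0 (G'_1)^T + H'_1 (G'_0)^T$: the relevant support pairs $(J_0,I_1)$ and $(J_1,I_0)$ overlap in general, so no local argument works, and the proof must pass through the global identity $H'(G')^T = 0$.
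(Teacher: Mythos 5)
Your verification of the characteristic-matrix property and of the first group of identities is correct and is exactly the ``straightforward verification'' the paper leaves implicit: the block products reduce to the defining relations, the complementary column supports ($J_0$ versus $I_0$, $J_1$ versus $I_1$) kill the easy off-diagonal blocks, and your global argument for the one nontrivial block $H'_0(G'_1)^T+H'_1(G'_0)^T=H'(G')^T=0$ (rows of $H'$ lie in $\row G$, rows of $G'$ lie in $\row H$) is the right way to handle it. Your commutation shortcut for the second group is also essentially the paper's own device (Lemma \ref{L:ass1} derives (\ref{E:1y}) from (\ref{E:1x}) by exactly this kind of argument), and the check $AC=CA$ is valid.

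There is, however, one genuine gap: the identity $Y_1^T X_1 = 0$ does \emph{not} ``unpack'' from $BA=C$. Computing $BA$ blockwise gives
\begin{equation*}
\left( \begin{array}{cc} Y_0^T &Y_1^T \\ 0 &Y_0^T \end{array}\right)
\left( \begin{array}{cc} X_0 &X_1 \\ 0 &X_0 \end{array}\right)
= \left( \begin{array}{cc} Y_0^T X_0 & \;Y_0^T X_1 + Y_1^T X_0 \\ 0 & Y_0^T X_0 \end{array}\right),
\end{equation*}
so $BA=C$ yields only $Y_0^T X_0=-I$ and $Y_0^T X_1+Y_1^T X_0=I$; the product $Y_1^T X_1$ never occurs as a block entry (the $(2,1)$ entry is identically $0=0$). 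This matters because $Y^TX = Y_0^TX_0 + (Y_0^TX_1+Y_1^TX_0) + Y_1^TX_1 = Y_1^TX_1$, so without this third identity you have not shown $Y^TX=0$. The fix is immediate from the block structure and should be stated: $Y_1^T X_1 = \sum_i y_i^T x_i$ summed over rows, and the nonzero rows of $Y_1$ (those coming from $G'_1$) and the nonzero rows of $X_1$ (those coming from $H'_1$) occupy complementary row positions, so every term vanishes. (Note this is a different argument from the one you used for $X_1Y_1^T=0$, which rested on disjoint \emph{column} supports.) A second, cosmetic imprecision: $AB=C$ encodes only two of your three proved identities, not all three, so ``the three identities already proved are equivalent to $AB=C$'' overstates the equivalence, though nothing downstream depends on it.
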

\begin{proof}
Lemma \ref{L:sred} in combination with Proposition \ref{P:gen} shows that $X$ and $Y$ are characteristic matrices. The two groups of three equalities hold by straightforward verification and adding the equalities in each group yields $X Y^T = 0$ and $Y^T X = 0.$
\end{proof}
 
In \cite{KV03}, the rows of $G$ are called conventional generators and the remaining generators are called circular generators. The latter generators correspond to conventional generators in a cyclicly shifted version of the matrix $G$. In Lemma \ref{L:sred} we rely on the matrix $H$ to construct the circular generators and make no reference to the cyclicly shifted versions of the matrix $G$. The circular generators correspond to the rows of a matrix $H'$.

\begin{proposition} \label{P:abcd}
Let the $n$ rows of $(X_0|X_1)$ be orthogonal to $(H|H)$ with $X_0$ upper triangular and nonzero on the diagonal and $X_1$ strictly lower triangular. The following are equivalent.
\begin{itemize}
\item[(a)] The rows of $(X_0|X_1)$ end in the distinct columns $I_1 \cup n + J_1$.
\item[(b)] The total spanlength of $(X_0|X_1)$ is $|J_1| n.$
\item[(c)] The number of rows with span in $(0,a-1]$ or in $(a,a+n-1]$ is $|I_1|$, for each $a \in \{0, 1, \ldots, n-1 \}$.   
\item[(d)] The number of rows with $(a-1,a]$ or $(a+n-1,a+n]$ in their span is $|J_1|$, for each $a \in \{0, 1, \ldots, n-1 \}$.   
\end{itemize}
\end{proposition}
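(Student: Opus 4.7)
The plan is to establish the equivalences through the cycle $(a)\Rightarrow(b)$, $(b)\Rightarrow(a)$, $(a)\Rightarrow(d)$, $(d)\Rightarrow(b)$, together with the equivalence $(c)\Leftrightarrow(d)$. The main technical step is $(b)\Rightarrow(a)$; the remaining implications are direct counting.

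For $(a)\Rightarrow(b)$ I will simply compute $\sum(\text{ends})-\sum(\text{starts})$. Since the $n$ starts are $\{0,1,\dots,n-1\}$ and, under $(a)$, the $n$ ends are the elements of $I_1\cup(n+J_1)$, and since $I_1$ and $J_1$ partition $\{0,\dots,n-1\}$, this difference collapses to $n|J_1|$. For $(b)\Rightarrow(a)$ I will adjoin $(0\,|\,G)$, with $G$ in minimal span form, to $(X_0|X_1)$. Because $X_0$ is invertible, the $n+|I_0|$ rows of the combined matrix are independent and hence span the null space of $(H|H)$. The left and right pivots of $(H|H)$ sit at $J_1$ and $n+J_0$, so by complementation the left and right pivots of its null space are $\{0,\dots,n-1\}\cup(n+I_0)$ and $I_1\cup\{n,\dots,2n-1\}$; Lemma~\ref{L:sp1} then gives the minimal spanlength as $n|J_1|+\bigl(\sum_{I_1}j-\sum_{I_0}i\bigr)$. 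The block $(0\,|\,G)$ already contributes $\sum_{I_1}j-\sum_{I_0}i$, so $(b)$ forces $(X_0|X_1)$ to contribute exactly $n|J_1|$ and the combined matrix to be in minimal span form. Lemma~\ref{L:sp2} pins down the collection of ends as $I_1\cup\{n,\dots,2n-1\}$, and removing the ends $n+I_1$ coming from $(0\,|\,G)$ leaves the ends $I_1\cup(n+J_1)$ of $(X_0|X_1)$, establishing $(a)$.

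For $(a)\Rightarrow(d)$ I will count directly. A row with unwrapped span $(i,e]$ uses the cyclic edge between cyclic positions $a-1$ and $a$ iff either $i<a\leq e$ (the edge at unwrapped position $a$) or $e\geq a+n$ (the edge at unwrapped position $a+n$); because $e-i\leq n-1$, these two cases are mutually exclusive. Under $(a)$, the count with $i<a$ and $e\geq a$ equals (ends $\geq a$) minus (starts $\geq a$), which is $|I_1\cap[a,n-1]|+|J_1|-(n-a)=|J_1\cap[0,a-1]|$, using $I_1\sqcup J_1=\{0,\dots,n-1\}$. The count with $e\geq a+n$ equals $|J_1\cap[a,n-1]|$. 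Their sum is $|J_1|$, which is $(d)$. Implication $(d)\Rightarrow(b)$ is then immediate because the total spanlength of $(X_0|X_1)$ equals $\sum_{a=0}^{n-1}(\text{number of rows using edge }E_a)=n|J_1|$.

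Finally, $(c)\Leftrightarrow(d)$ follows from the dichotomy that for each row and each $a$ exactly one of the two alternatives holds. A case split on whether $i<a$, $i=a$, or $i>a$ (bearing in mind $i\leq e\leq n+i-1$) shows: when $i<a$, $(c)$ means $e<a$ while $(d)$ means $e\geq a$; when $i=a$, $(c)$ always holds and $(d)$ never does; and when $i>a$, $(c)$ means $e\leq a+n-1$ while $(d)$ means $e\geq a+n$. So each row contributes to exactly one of the two counts, making the counts sum to $n$ for every $a$; hence $(c)=|I_1|$ iff $(d)=n-|I_1|=|J_1|$. The main obstacle I anticipate is keeping the pivot bookkeeping straight in $(b)\Rightarrow(a)$; once the minimal span picture of the null space of $(H|H)$ is in place, everything else reduces to mechanical counting.
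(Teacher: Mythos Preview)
Your proof is correct and matches the paper's argument on the decisive step $(b)\Rightarrow(a)$: both adjoin $(0\,|\,G)$ to $(X_0|X_1)$, observe that condition $(b)$ forces the combined matrix to attain the minimal spanlength of $(H|H)^\perp$, and then read off the set of right pivots $I_1\cup\{n,\dots,2n-1\}$ to isolate the ends of $(X_0|X_1)$. Your implications $(a)\Rightarrow(b)$, $(d)\Rightarrow(b)$, and $(c)\Leftrightarrow(d)$ are likewise the same counting arguments the paper invokes (and merely calls ``clear'').

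The one genuine difference is how you reach $(d)$. You go through $(a)\Rightarrow(d)$ by a direct edge-count, splitting on whether the start $i$ is below, equal to, or above $a$ and using the explicit bijection between ends and $I_1\cup(n+J_1)$. The paper instead proves an a~priori inequality: the rows counted in $(c)$ wrap to vectors supported on $n$ consecutive cyclic positions and hence give independent relations among columns of $H$, so the $(c)$-count is always at most $|I_1|$ and the $(d)$-count always at least $|J_1|$; summing the $(d)$-counts over $a$ gives the spanlength, so equality everywhere is equivalent to $(b)$. Your route is more elementary and self-contained; the paper's route yields the extra inequality ``$(c)$-count $\le |I_1|$ unconditionally,'' which it reuses immediately in Lemma~\ref{L:a}.
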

\begin{proof}
The equivalence between (c) and (d) is clear, as are the implications (a) implies (b) and (d) implies (b) . For $(X_0|X_1)$ such that (b) holds, the rows of $(X_0|X_1)$ are part of a matrix in minimal span form for $(H|H)^\perp$ and, as is in Proposition \ref{P:gen}, this implies (a). The rows under (c) give an independent set of relations among columns of $H$. Thus their number is at most $|I_1|$, for each $a \in \{0, 1, \ldots, n-1 \}$. The number of rows counted under (d) is therefore at least $|J_1|$, for each $a \in \{0, 1, \ldots, n-1 \}$. Equality holds under (c) and (d) for all $a \in \{0, 1, \ldots, n-1 \}$ if and only if (b).
\end{proof}

The proposition may be used to verify that slightly different definitions for the characteristic matrix in \cite{KV03}, \cite{GW11a} and Definition \ref{D:char} are indeed equivalent. By Proposition \ref{P:gen}, a characteristic matrix $X$ as in Definition \ref{D:char} is characterized by properties (a) and (b). The definitions \cite{KV03} and \cite{GW11a} make use of properties (c) and (d), respectively. Property (c) is used in \cite{KV03} to construct a characteristic matrix from conventional generators of cyclicly permuted versions of $G$. Property (d) is used in \cite{GW11a} as an axiomatic property, from which further properties can be obtained without reference to a specific construction of the characteristic matrix.

\begin{lemma} \label{L:a}
For a characteristic matrix $X$ and for each $a \in \{ 0,1,\ldots,n-1\}$, the $|I_1|$ rows of $X$ with span in $(a,a-1]$ are linearly independent and they form a basis for the row space of $X$.
\end{lemma}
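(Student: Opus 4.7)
The plan is to reduce the statement to the special case $a=0$ via a cyclic shift of column indices, and then to recognise the $|I_1|$ distinguished rows as the conventional (non-wrapping) rows of the shifted characteristic matrix, where the conclusion becomes immediate from Lemma \ref{L:sp1}.

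First, let $\sigma$ denote the cyclic shift of column indices. Because Definition \ref{D:char} refers only to the row space and to cyclic spans, both of which are preserved by cyclic column shifts, the matrix $\sigma^a X$ is a characteristic matrix for $\sigma^a G$. A row with cyclic span $(i,j]$ corresponds under $\sigma^a$ to a row of cyclic span $(i-a, j-a] \bmod n$, so a row of $X$ has span in $(a,a-1]$ precisely when the corresponding row of $\sigma^a X$ has span in $(0,-1]$, i.e., its unwrapped span lies within $[0,n-1]$ and the row is conventional (non-wrapping). Since $\sigma^a$ is a linear isomorphism of $F^n$, linear independence and dimensions transfer unchanged, so it suffices to prove the lemma for $\sigma^a X$ in the case $a=0$.

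For $a=0$, I would identify the distinguished rows as the conventional rows of $X$. By Proposition \ref{P:gen}(X), the $n$ ending positions of the rows of $(X_0|X_1)$ are the distinct elements of $I_1 \cup (n+J_1)$, so the conventional rows, namely those ending in $I_1$, number exactly $|I_1|$. These rows have distinct starting positions (a subset of the diagonal positions, which are pairwise distinct by construction) and distinct ending positions (a subset of $I_1$). By Lemma \ref{L:sp1} the submatrix they form is in minimal span form, hence its rows are linearly independent. Since the rows of $X$ all lie in the row space $C$ of $G$, of dimension $|I_1|$, the $|I_1|$ independent conventional rows must form a basis for the row space of $X$ (and incidentally the row space of $X$ coincides with $C$).

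The main obstacle is the cyclic shift reduction: one needs to verify carefully that the notation ``span in $(a,a-1]$'' captures precisely the cyclic condition that, after shifting columns by $a$, the row is non-wrapping, and that the characteristic matrix property passes under this shift. Once that is settled, the argument is just the standard consequence of Lemma \ref{L:sp1} applied to a submatrix with distinct left and right pivots.
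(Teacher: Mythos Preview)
Your proposal is correct and follows essentially the same idea as the paper's (very terse) proof. The paper does not perform an explicit cyclic shift to $a=0$; instead it invokes property~(c) of Proposition~\ref{P:abcd} directly for general~$a$, unwrapping to $(X_0\mid X_1)$ and observing that the selected rows lie in a window of $n$ consecutive columns where they give independent relations among the columns of~$H$. Your cyclic-shift reduction and the paper's unwrapped-window argument are two phrasings of the same observation.

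One small comment: your appeal to Lemma~\ref{L:sp1} for linear independence is slightly indirect. That lemma characterises minimal span form but does not explicitly assert that the rows are independent. The independence you need follows more directly from the elementary fact that rows with pairwise distinct leading nonzero positions are linearly independent (the echelon-form argument, which is indeed invoked inside the proof of Lemma~\ref{L:sp1}). Citing that fact directly would tighten the argument.
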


\begin{proof}
As in the proof of the proposition, the number of such rows is $|I_1|$ and after unwrapping it is clear that they specify independent relations among the columns of $H$.
\end{proof}   

The selection of rows in the lemma has an interpretation as a minimal span basis for the matrix $G$ cyclicly shifted by $a$ columns.

\begin{example} For the pair of binary orthogonal matrices $G$ and $H$ in Section \ref{S:reds}, Appendix \ref{S:binex} gives the reduced characteristic matrices $X$ and $Y$. The matrices $G_{01}$ and $H_{10}$ appear as submatrices in $X$ and $Y$. Together they explain all the entries except those in the dashed boxes. The entries in the dashed boxes are in duality witht the corresponding full boxes via Lemma \ref{L:sred}.
\end{example} 

\subsection{Subspaces defined by intervals}

We mention another combinatorial characterization of matrices in minimal span form and characteristic matrices. In each case the rows of the matrix can be used to distinguish between subspaces defined with different support intervals. The situation is as in Boolean Information Retrieval (BIR), where documents can be distinguished based on different matches with a set of index terms. A set of index terms separates the documents in a collection if each document has a unique subset of matching index terms. Similarly, we say that a subset $B$ of vectors separates a collection of subspaces if for any two subspaces $V$ and $V'$ in the family
\begin{equation} \label{E:bir}
V = V' ~~\Leftrightarrow~~ V \cap B = V' \cap B.
\end{equation}
We apply this to a matrix $G$ and its row space $C = \row G$. For a subset $I  \subset \{ 0,1,\ldots,n-1 \}$ of column indices, let $C(I) = \{ c \in C : \text{$c_k = 0$ for $k \not \in I$} \}$. We say that the subspace $C(I)$ is supported on an interval if $I = \{ i : a \leq i \leq b \}$ for some $0 \leq a < b \leq n-1.$

\begin{lemma} \label{L:sep1}
The rows in a matrix $G$ in minimal span form separate the collection of subspaces $\{ C(I) : \text{$I$ an interval} \}$. 
\end{lemma}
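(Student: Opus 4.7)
The plan is to prove a stronger structural claim: for every interval $I$, the subspace $C(I)$ is the linear span of those rows of $G$ whose support interval is contained in $I$. Writing $B$ for the set of rows of $G$, a row $g \in B$ lies in $C(I)$ if and only if its span is contained in $I$, so the claim says $C(I) = \spn(C(I) \cap B)$, where $\spn$ denotes linear span. Once this is established, the separation property (\ref{E:bir}) is immediate: the ``$\Rightarrow$'' direction is trivial, and for ``$\Leftarrow$'', if $C(I) \cap B = C(I') \cap B$ then $C(I)$ and $C(I')$ are each equal to the linear span of the same set of rows, hence equal to each other.

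The main step is the inclusion $C(I) \subseteq \spn(C(I) \cap B)$. Write $I = [a,b]$ and fix $c \in C(I)$, expressed as $c = \sum_j \lambda_j g_j$ over the rows $g_j$ with $\lambda_j \neq 0$. Let $p$ be the minimum starting position and $q$ the maximum ending position among these used rows. By the left-right property of Lemma \ref{L:sp1}, the used row $g^*$ starting at position $p$ is the only used row whose starting position is $p$; since every other used row starts strictly to the right of $p$, it has a zero entry in column $p$. Hence the column-$p$ entry of $c$ is a nonzero scalar multiple of $(g^*)_p$, which is itself nonzero, so $c_p \neq 0$. Because $c \in C(I)$, this forces $p \geq a$. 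The dual argument applied at the rightmost ending position shows $q \leq b$. Every used row then has its entire span inside $[p,q] \subseteq [a,b] = I$, which is exactly what we need.

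The reverse inclusion $\spn(C(I) \cap B) \subseteq C(I)$ is immediate since any linear combination of vectors supported in $I$ is supported in $I$. The only nontrivial point in the argument is the use of the left-right property to preclude cancellation at the extremal columns $p$ and $q$; beyond this, no real obstacle arises, and the lemma follows.
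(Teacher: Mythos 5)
Your proof is correct. The paper states Lemma \ref{L:sep1} without proof, and your argument is precisely the standard one the author alludes to immediately afterward via the \emph{subsystem basis property}: the left-right property prevents cancellation at the extremal starting and ending positions of the rows used in a decomposition, which gives $C(I)=\spn\bigl(C(I)\cap B\bigr)$ and hence separation.
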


\begin{lemma} \label{L:sep2}
The rows in a characteristic matrix $X$ for $G$ separate the collection of subspaces $\{ C(I) :  \text{$I$ an interval} \} \cup \{ C(J) :  \text{$J$ a complement of an interval} \}.$ 
\end{lemma}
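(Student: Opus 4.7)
The plan is to prove the stronger spanning statement: for every cyclic interval $K$ (whether a conventional interval or the complement of one), the rows of $X$ whose support lies in $K$ form a basis of $C(K)$. Separation then follows immediately, since if $C(K_1)\cap B=C(K_2)\cap B$ for this common subset of rows $B$, both subspaces coincide with $\spn B$.

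Linear independence of any subset of the rows of $X$ is automatic from the characteristic matrix property, as the rows have pairwise distinct cyclic starting positions and therefore distinct leading columns in any unwrapping. The content of the lemma is thus the spanning property. For $K=\{0,1,\ldots,n-1\}$ this is immediate, since by Lemma~\ref{L:a} a basis of $C$ already sits inside $X$. Otherwise I pick a position $a\notin K$ and reindex the cyclic positions starting at $a+1$, which places $a$ at the last new coordinate $n-1$ and makes $K$ into a conventional interval $K'\subseteq\{0,\ldots,n-2\}$.

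In this new indexing I invoke Lemma~\ref{L:a}: the $|I_1|$ rows of $X$ whose cyclic span is in $(a,a-1]$ form a subset $B_a\subseteq X$ which, read in the new coordinates, has pairwise distinct starting pivots and pairwise distinct ending pivots in $\{0,1,\ldots,n-1\}$. In other words, $B_a$ is a conventional minimal span form for $C$ in the new indexing. A row of $X$ whose support is contained in $K$ has its minimal cyclic arc inside $K\subseteq(a,a-1]$ and therefore automatically belongs to $B_a$. The spanning fact underlying Lemma~\ref{L:sep1}---that in a minimal span matrix the rows whose support lies in a given conventional interval span the corresponding coordinate subcode---then applies to $B_a$ and $K'$ and yields the desired conclusion: every codeword in $C(K)$ is a linear combination of rows of $B_a$, and hence of $X$, whose support is contained in $K$.

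The main obstacle, and what would require the most care in the writeup, is the structural step that identifies $B_a$ with a conventional minimal span form in the new indexing. The key observation is that in this ordering the arc of every row in $B_a$ has length at most $n-1$ and stays within the first block of the unwrapping, so its conventional span does not wrap; combined with the characteristic matrix property, this forces the distinct leading and trailing pivots demanded by Lemma~\ref{L:sp1}. Once this is in place, the reduction and the final conclusion are routine.
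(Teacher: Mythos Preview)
The paper states this lemma without proof, so there is no argument to compare against; I evaluate your proposal on its own merits. Your overall strategy---cyclically shift to place a point $a\notin K$ at the boundary, invoke Lemma~\ref{L:a} to extract from $X$ a conventional minimal span basis $B_a$, and then apply Lemma~\ref{L:sep1}---is the right one and does yield the separation property. However, two steps in your writeup are flawed.

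First, the blanket claim that ``linear independence of any subset of the rows of $X$ is automatic'' is false. The matrix $X$ has $n$ rows but row space of dimension $|I_1|<n$; distinct rows of $X$ can even coincide as vectors (see $Y$ in Appendix~\ref{S:binex}, where rows $3$ and $4$ are identical). Distinct cyclic starting positions give distinct leading columns only in the \emph{unwrapped} matrix $(X_0\,|\,X_1)$, not in $X$ itself. This does no damage, since separation requires only the spanning part of your ``basis'' claim, but the sentence should be removed.

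Second, and more substantively, the assertion ``a row of $X$ whose support is contained in $K$ has its minimal cyclic arc inside $K$ and therefore automatically belongs to $B_a$'' conflates the support of a row with its \emph{characteristic} span. The span $(i,\sigma(i)]$ recorded in $X$ is one cyclic arc with endpoints in the support that contains the support, but a vector whose support lies in $K$ may admit several such arcs, and nothing you have said rules out that the one chosen in $X$ passes through $a$ (a circular generator may well have $x_a=0$ in the interior of its assigned span). Membership in $B_a$ is governed by the characteristic span, not by the support, so this step is a genuine gap.

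The repair is immediate and in fact simplifies your argument: you do not need every row of $X$ with support in $K$ to lie in $B_a$. It suffices that the rows of $B_a$ with support in $K$---trivially a subset of the rows of $X$ with support in $K$---already span $C(K)$, and that is exactly what Lemma~\ref{L:sep1} delivers for the shifted minimal span form $B_a$. Hence $C(K)=\spn\bigl(C(K)\cap B\bigr)$ for $B$ the set of rows of $X$, and separation follows.
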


In the case of $G$, the rows moreover form a basis for the row space and it follows that the rows have the \emph{subsystem basis property} \cite[Theorem~3]{F11}. The separating properties of characteristic generators are the same as those of \emph{discrepancies} that are used in \cite{DPdelta}, \cite{Dgk} to distinguish algebraic functions with a specified behavior at the origin and at infinity. After writing a vector $(c_0,c_1,\ldots,c_{n-1})$ as a polynomial $c_0 + c_1 x + \cdots c_{n-1} x^{n-1}$, the vector has span $(i,j]$ if and only if the polynomial has a pole of order $j$ at infinity and a zero of order $i$ at the origin. Whereas the characteristic spans of a row space provide information about its trellis structure, the discrepancies of a function field give lower bounds for the minimum distance of codes constructed with the function field \cite{DKPjpaa}. We refer to the listed references for details.

\begin{definition}
Define a bijection $\sigma$ on $\{0, 1, \ldots, n-1 \}$ such that, for $i \in I_0$, $\sigma(i) \in I_1$ and $(i,\sigma(i)]$ is a minimal span for $G$, and, for $j \in J_0$, $\sigma(j) \in J_1$ and $(\sigma(j),j]$ is a minimal span for $H$. 
Thus, $\{ (i,\sigma(i)] : i = 0,1,\ldots,n-1 \}$ is the collection of spans for $X$ and $\{ (\sigma(j),j] : j = 0,1,\ldots,n-1 \}$ is the collection of spans for $Y$.
\end{definition}

In the interpretation of \cite[Section V, Figure 14]{KS95}, the set $\{ (i,\sigma(i)] : i = 0,1,\ldots,n-1 \}$ defines the positions of $n$ nonattacking rooks on a chess board of size $n$, divided into $|I_1|$ black rooks above the main diagonal and $|J_1|$ white rooks below the main diagonal. Appendix \ref{S:rooks} shows a rook configuration from a Suzuki function field.
 
\bigskip
 
A slightly different version of a subspace separating subset is a stopping set used in erasure decoding. Let $I \cup J$ be a partition of the coordinates and assume that the vector $c$ is known in the positions $I$ and erased in the positions $J$. A subset $S$ of the dual code $D$ is a stopping set for $J$ if it iscapable of recovering at least one of the erased symbols in $J$.
\begin{equation} \label{E:stop}
(\exists j \in J)~D(I \cup j) \neq D(I) ~~\Leftrightarrow~~  (\exists j \in J)~D(I \cup j) \cap S \neq D(I) \cap S.
\end{equation} 


\subsection{Unit memory convolutional codes}

The decomposition $X = X_0+ X_1$ has an interpretation as the \emph{unwrapping} of the matrix $X$ into two matrices $X_0$ and $X_1$ that define a unit-memory convolutional code via $y_t = u_t X_0 + u_{t-1} X_1$, with $y_t$ and $u_t$ vectors of length $n$ indexed by $t \geq 0.$ 
\[
\left[ \begin{array}{ccccccc}
X_0 &X_1 & & & \\
 &X_0 &X_1 & & \\
 & &\ddots  &\ddots & 
\end{array} \right] \qquad
\left[ \begin{array}{ccccccc}
Y_0 & & & & \\
 Y_1 &Y_0 & & & \\
 &\ddots  &\ddots & 
\end{array} \right]
\]

For the codes below ($X$ on the left and $Y$ on the right) we extend the time  $t$ to all integers. 
\[
{\small
\begin{array}{cccc|cccccccc}
2 &2 &1 &0  &\dt&\dt&\dt&\dt  &\dt&\dt&\dt&\dt \\
\dt&1 &1 &1 &0&\dt&\dt&\dt&\dt&\dt&\dt&\dt \\
\dt&\dt&1 &2 &1 &0 &\dt&\dt&\dt&\dt&\dt&\dt \\
\dt&\dt&\dt&2 &2 &1&0&\dt&\dt&\dt&\dt&\dt \\ \hline

\dt&\dt&\dt&\dt&2 &2 &1 &0&\dt&\dt&\dt&\dt \\
\dt&\dt&\dt&\dt&\dt&1 &1 &1 &0&\dt&\dt&\dt \\
\dt&\dt&\dt&\dt&\dt&\dt&1 &2 &1&0 &\dt&\dt \\
\dt&\dt&\dt&\dt&\dt&\dt&\dt&2 &2 &1&0&\dt \\ 

\end{array}
\quad \quad
\begin{array}{cccccccc|cccc}

\dt&0&1&2 &1 &\dt &\dt &\dt&\dt&\dt&\dt&\dt \\
\dt&\dt&0&1 &1&2 &\dt &\dt &\dt&\dt&\dt&\dt \\
\dt&\dt&\dt&0 &1&1&2 &\dt &\dt&\dt &\dt&\dt \\
\dt&\dt&\dt&\dt &0&1&1&1 &\dt &\dt&\dt&\dt \\ \hline

\dt&\dt&\dt&\dt &\dt&0&1&2  &1 &\dt &\dt &\dt \\
\dt&\dt&\dt&\dt &\dt&\dt&0&1 &1&2 &\dt &\dt \\
\dt&\dt&\dt&\dt &\dt&\dt&\dt&0 &1&1&2 &\dt \\
\dt&\dt&\dt&\dt &\dt&\dt&\dt&\dt  &0&1&1&1 \\

\end{array}
}
\]

\section{Duality for characteristic matrices} \label{S:dual}

We first collect properties that hold for any pair of characteristic matrices $X$ and $Y$. When we write $X = X_0 + X_1$ or $Y = Y_1 + Y_0$ it is implicitly understood that $X_0$ is upper triangular and $X_1$ strictly lower triangular, and $Y_1$ strictly upper triangular and $Y_0$ lower triangular. 

\begin{lemma} \label{L:minor}
For any pair of characteristic matrices $X$ and $Y$, and for $X=X_0+X_1$ and $Y=Y_1+Y_0$, the row spaces of $X$ and $Y$ are orthogonal and the columns spaces of $X_1$ and $Y_1$ are orthogonal.
\begin{equation} \label{E:0}
X Y^T = 0 ~~~\text{and}~~~ X_1^T Y_1 = 0.
\end{equation}
The matrix $X_1$ has a full rank minor in the positions $J_0 \times J_1$. The matrix $Y_1$ has a full rank minor in the positions $I_0 \times I_1$. For reduced characteristic matrices and for semi-reduced characteristic matrices, the entries outside these minors are zero. 
\end{lemma}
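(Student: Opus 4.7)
My plan is to derive all four claims from one bookkeeping fact: the nonzero rows of $X_1$ are indexed exactly by $J_0$, and the nonzero rows of $Y_1$ are indexed exactly by $I_0$. Granting this, the identity $XY^T=0$ is merely a restatement of $\row G\perp\row H$, and orthogonality of the column spaces is forced by the disjointness $I_0\cap J_0=\emptyset$: for each $i$ at least one of the rows $(X_1)_{i,\bullet}$ and $(Y_1)_{i,\bullet}$ vanishes, so $(X_1^T Y_1)_{k,k'}=\sum_i (X_1)_{i,k}(Y_1)_{i,k'}=0$.

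To establish the key identification I would argue as follows. Row $i$ of $X_1$ vanishes precisely when the span $(i,\sigma(i)]$ does not wrap, equivalently when row $i$ of the unwrapping $(X_0\mid X_1)$ ends inside $I_1$ rather than $n+J_1$. By Proposition~\ref{P:gen} there are exactly $|I_1|$ such rows, they all lie in $\row G$, and they have distinct starting positions; since $|I_1|=\dim\row G$, they form a minimal span basis for $\row G$, and Lemmas~\ref{L:sp1} and~\ref{L:sp2} pin their starting positions to $I_0$. The same argument applied to $Y$ assigns $J_0$ to the non-wrapping rows of $Y$ and hence $I_0$ to the wrapping ones, which are the nonzero rows of $Y_1$. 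For the full-rank minor of $X_1$, a wrapping row $i\in J_0$ has its rightmost nonzero entry in $X_1$ at column $\sigma(i)\in J_1$, and $\sigma$ restricts to a bijection $J_0\to J_1$; after a suitable reordering the $J_0\times J_1$ submatrix is triangular with nonzero diagonal, hence invertible. A symmetric argument using left pivots at $\sigma(j)\in I_1$ for $j\in I_0$ produces the $I_0\times I_1$ minor of $Y_1$.

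For the reduced claim, Definition~\ref{D:red} presents $X_1$ as the upper-right block of a matrix in reduced minimal span form sitting directly above $(0\mid G_{01})$. The trailing pivots of $(0\mid G_{01})$ occupy columns $n+I_1$, and the SW-NE property of Definition~\ref{D:rmsf} forces the entries of $X_1$ in columns $I_1$ to vanish. Combined with the already-established vanishing of $X_1$ on rows $I_0$, this confines $X_1$ to $J_0\times J_1$; the analogous argument handles $Y_1$, and the same reasoning with a one-sided SW-NE condition covers the semi-reduced case. The main obstacle is the initial identification of the wrapping rows with $J_0$ and $I_0$; once that is in hand, every remaining assertion is a routine consequence of the bijection $\sigma$ and the structure of the reduced minimal span form.
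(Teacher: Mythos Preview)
Your argument is correct and follows the same line as the paper's proof: identify the zero rows of $X_1$ with $I_0$ and those of $Y_1$ with $J_0$, deduce $X_1^TY_1=0$ from the disjointness $I_0\cap J_0=\emptyset$, read off the full-rank minors from the distinct trailing/leading pivots supplied by $\sigma$, and appeal to the reduced minimal span form for the last claim. The paper's proof simply asserts that $X_1$ has zero rows in positions $I_0$ and $Y_1$ in positions $J_0$; you actually justify this via Proposition~\ref{P:gen} and Lemma~\ref{L:sp2}, which is more than the paper spells out. One small remark: for the ``semi-reduced'' case the paper does not invoke a one-sided SW--NE condition but simply points to Theorem~\ref{T:sred}, where $X_1$ is built explicitly as $H'_1$ (supported on $J_1$) in the $J_0$ rows and zero elsewhere, so the support claim is immediate from the construction.
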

\begin{proof}
The row spaces of $X$ and $Y$ are those of the matrices $G$ and $H$ and $X Y^T = 0$ follows from $G H^T =0$. The matrix $X_1$ has zero rows in the positions $I_0$ and the matrix $Y_1$ has zero rows in the complementary positions $J_0$. The nonzero rows of $X_1$ in the positons $J_0$ end in distint positions of $J_1$.
The nonzero rows of $Y_1$ in the positions $I_0$ start in distinct positions of $I_1$. The last claim follows from Definition \ref{D:red} and Theorem \ref{T:sred}.
\end{proof}

The characteristic matrices $X$ and $Y$ in Theorem \ref{T:sred} satisfy the longer list of properties
\begin{align}
X_0 Y_0^T = - I, ~~X_0 Y_1^T + X_1 Y_0^T = I, ~~X_1 Y_1 ^T = 0., \label{E:1x} \\
Y_0^T X_0 = - I, ~~Y_0^T X_1 + Y_1^T X_0 = I, ~~Y_1^T X_1 = 0, \label{E:1y}
\end{align}
and
\begin{equation} \label{E:d}
Y^T X = 0.
\end{equation}
We show that (\ref{E:1x}), (\ref{E:1y}) and (\ref{E:d}) all follow from (\ref{E:0}) and
\begin{equation} \label{E:cd}
X_0 Y_1^T + X_1 Y_0^T~=~I.
\end{equation}

\begin{lemma} \label{L:ass1}
Let $X = X_0 + X_1$, with $X_0$ upper triangular and $X_1$ strictly lower triangular, and let $Y = Y_1 + Y_0$, with $Y_1$ strictly upper triangular and $Y_0$ lower triangular. If (\ref{E:0}) and (\ref{E:cd}) hold then also (\ref{E:1x}), (\ref{E:1y}) and (\ref{E:d}) hold. From (\ref{E:1x}) it follows that
\begin{equation} \label{E:mx}
\left( \begin{array}{c|c} X_0 &X_1 \\ \hline \noalign{\smallskip} 0 &X_0 \end{array}\right)
\left( \begin{array}{c|c} Y_0 &0 \\ \hline \noalign{\smallskip} Y_1 &Y_0 \end{array}\right)^T = \left( \begin{array}{r|r} -I &I \\ \hline \noalign{\smallskip} 0 &-I \end{array}\right)
\end{equation}
\end{lemma}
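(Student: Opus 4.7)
The plan has four steps: derive (E:1x) from the hypotheses by a triangular-structure argument; read off the block identity (E:mx) directly from (E:1x) and (E:cd); derive (E:1y) from (E:mx) by a commutation trick; and obtain (E:d) by one last expansion.

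Step 1 (proving (E:1x)): Expand $XY^T = (X_0 + X_1)(Y_1^T + Y_0^T)$ and subtract off the identity from (E:cd) to get
\[
X_0 Y_0^T + X_1 Y_1^T \;=\; -I.
\]
Since $Y_0$ is lower triangular, $Y_0^T$ is upper, so $X_0 Y_0^T$ is a product of two upper triangular matrices, hence upper triangular. Since $Y_1$ is strictly upper, $Y_1^T$ is strictly lower, so $X_1 Y_1^T$ is strictly lower triangular. Separating the upper and strictly lower parts of $-I$ forces $X_0 Y_0^T = -I$ and $X_1 Y_1^T = 0$; together with (E:cd) this yields (E:1x). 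Step 2 is then immediate: the diagonal blocks of $MN^T$ are $X_0 Y_0^T = -I$, the $(1,2)$ block is $X_0 Y_1^T + X_1 Y_0^T = I$, and the $(2,1)$ block vanishes by block-triangularity, so $MN^T = K$ where $K$ is the right-hand side of (E:mx).

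Step 3 (proving (E:1y)): Two of the three equalities are painless. The identity $Y_0^T X_0 = -I$ follows from $X_0 Y_0^T = -I$ because left- and right-inverses agree for square matrices, and $Y_1^T X_1 = 0$ is the transpose of the hypothesis $X_1^T Y_1 = 0$ from (E:0). The middle identity $Y_0^T X_1 + Y_1^T X_0 = I$ is the main obstacle: it cannot be obtained from triangular structure alone or by simple transposition. The idea is that this identity is exactly the $(1,2)$ block of $N^T M$, so it would follow from $N^T M = K$. Passing from $M N^T = K$ to $N^T M = K$ is not automatic for general square matrices, but it is valid here because $K$ commutes with $M$: a direct block computation gives
\[
M K \;=\; K M \;=\; \begin{pmatrix} -X_0 & X_0 - X_1 \\ 0 & -X_0 \end{pmatrix}.
\]
Since $M$ is invertible (block upper triangular with the invertible upper triangular matrix $X_0$ on the diagonal), we conclude
\[
N^T M \;=\; M^{-1}(M N^T) M \;=\; M^{-1} K M \;=\; K.
\]
Reading off the $(1,2)$ block of $N^T M = K$ produces the missing identity, completing (E:1y).

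Step 4 (proving (E:d)): Expand and substitute (E:1y):
\[
Y^T X \;=\; Y_0^T X_0 + (Y_0^T X_1 + Y_1^T X_0) + Y_1^T X_1 \;=\; -I + I + 0 \;=\; 0.
\]
The principal technical content of the argument is the commutation $MK = KM$ that drives Step 3; everything else is triangular bookkeeping and transposition.
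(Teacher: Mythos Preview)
Your proof is correct and follows essentially the same route as the paper: the triangular decomposition for (E:1x), the block computation for (E:mx), a commutation argument for (E:1y), and a final expansion for (E:d). The paper's Step~3 is marginally slicker: it replaces $X_1$ by $X = X_0 + X_1$ in the $(1,2)$ block of $M$, which turns the right-hand side into the scalar matrix $-I$, so commutation of the two factors is immediate without separately checking $MK = KM$.
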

\begin{proof}
From $X Y^T = 0$ and $X_0 Y_1^T + X_1 Y_0^T = I$, we obtain $X_0 Y_0^T + X_1 Y_1 ^T = -I.$ Using that $X_0 Y_0^T$ is upper triangular and $X_1 Y_1^T$ is strictly lower triangular (\ref{E:1x}) and (\ref{E:mx}) follow. After writing (\ref{E:mx}) as
\[
\left( \begin{array}{c|c} X_0 &X \\ \hline \noalign{\smallskip} 0 &X_0 \end{array}\right)
\left( \begin{array}{c|c} Y_0^T &Y_1^T \\ \hline \noalign{\smallskip} 0 &Y_0^T \end{array}\right)
= \left( \begin{array}{c|c} -I &0 \\ \hline \noalign{\smallskip} 0 &-I \end{array}\right).
\]
the matrices on the left commute and (\ref{E:1y}) follows. Finally, after expanding $Y^T X$, $Y^T X = 0$ follows from (\ref{E:1y}). 
\end{proof}

The equality $X Y^T = 0$ is clear. The row spaces of $X$ and $Y$ are those of the matrices $G$ and $H$ and are thus orthogonal. The equality $Y^T X = 0$ on the other hand, i.e. orthogonality of the column spaces of $X$ and $Y$, is a property that does not hold for general choices of a characteristic matrix $X$ for $G$ and a characteristic matrix $Y$ for $H$. It has the following trivial but important consequence for trellis constructions. 

\begin{lemma} \label{L:XY}
Assume that $X$ and $Y$ have maximal orthogonal column spaces. A selection of rows in $X$ forms a basis for the row space of $X$ if and only if the complementary rows in $Y$ form a basis for the row space of $Y$.
\end{lemma}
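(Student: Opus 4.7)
The plan is to reformulate both directions of the equivalence as conditions about the intersection of a subspace of $F^n$ with a coordinate subspace, and then relate them through the standard duality formula for sums and intersections of subspaces under the standard bilinear form on $F^n$.

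First, I would translate the basis condition into null-space language. For any subset $S$ of row indices, the rows of $X$ indexed by $S$ form a basis of $\row X$ precisely when $|S| = \rk X$ and the only vector $v \in F^n$ with $\supp(v) \subseteq S$ satisfying $v^T X = 0$ is $v = 0$; equivalently, the left null space of $X$ meets the coordinate subspace $F^S = \{v \in F^n : \supp(v) \subseteq S\}$ only in zero. Symmetrically, the rows of $Y$ indexed by $S^c$ form a basis of $\row Y$ if and only if $|S^c| = \rk Y$ and the left null space of $Y$ meets $F^{S^c}$ only in zero. Since $\rk X + \rk Y = |I_1| + |J_1| = n$, the two cardinality conditions coincide.

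Second, the hypothesis that $X$ and $Y$ have maximal orthogonal column spaces unpacks as $Y^T X = 0$ together with $\dim \col X + \dim \col Y = n$. The first part places $\col Y$ inside the left null space of $X$, while the dimension condition forces equality; symmetrically, the left null space of $Y$ equals $\col X$. Under these identifications the two basis conditions become, respectively,
\[
\col Y \cap F^S = 0 \qquad \text{and} \qquad \col X \cap F^{S^c} = 0,
\]
and the task reduces to showing that these are equivalent whenever $|S| = \rk X$.

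Third, I apply the duality $(A \cap B)^\perp = A^\perp + B^\perp$ with $A = \col X$ and $B = F^{S^c}$, using $(\col X)^\perp = \col Y$ and $(F^{S^c})^\perp = F^S$. The inclusion--exclusion formula for dimensions then gives
\[
\dim(\col X \cap F^{S^c}) \;=\; \dim \col X - |S| + \dim(\col Y \cap F^S),
\]
so under the constraint $|S| = \rk X = \dim \col X$ the two intersections have the same dimension and one vanishes if and only if the other does. I do not expect a genuine obstacle here; the only delicate point is invoking the maximality hypothesis at the right moment in order to identify each left null space with the opposite column space. Everything else is routine bookkeeping with the standard duality formula.
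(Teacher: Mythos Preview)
Your argument is correct and follows essentially the same route as the paper's proof. Both reduce the basis conditions to vanishing of intersections $\col Y \cap F^S$ and $\col X \cap F^{S^c}$ and then use that $\col X$ and $\col Y$ are orthogonal complements to pass between them; the paper reaches these intersections via the spanning characterization (``no nonzero column of $X$ with zeros in the selected rows''), while you reach them via the linear-independence characterization (left null space meets the coordinate subspace trivially), but after identifying the left null space of $X$ with $\col Y$ the two descriptions coincide. Your explicit dimension count handles the cardinality constraint that the paper leaves implicit, so your write-up is in fact a bit more careful.
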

\begin{proof} We avoid relying on matroid duality and write out the short proof: The selected rows in $X$ form a basis if and only if there exists no nonzero column in $X$ with zeros in the selected rows if and only if the complementary rows in $Y$ form a basis. 
\end{proof}

The property $Y^T X = 0$, i.e. the orthogonality of column spaces, does not hold in general and is a conseqeunce of $X_0 Y_1^T + X_1 Y_0^T = I$. The latter expresses a duality between rows of $(X_0|X_1)$ and rows of $(Y_1|Y_0)$, inner products are $1$ for rows in the same position and $0$ otherwise. The rows of $(X_0|X_1)$ form a basis for the coset scheme $(H|H)^\perp / (G|G)$ and those of $(Y_1|Y_0)$ a dual basis for the dual coset scheme $(G|G)^\perp / (H|H).$ 
 
\begin{definition} We say that characteristic matrices $X=X_0+X_1$ and $Y=Y_1+Y_0$ are in duality if $X_0 Y_1^T + X_1 Y_0^T = I$. In other words if, in addition to $X$ and $Y$ being characteristic matrices for the row spaces of orthogonal matrices $G$ and $H$, the row spaces
of $(X_0|X_1)$ and of $(Y_1|Y_0)$ are in duality as bases for the coset schemes $(H|H)^\perp / (G|G)$ and $(G|G)^\perp / (H|H).$  
\end{definition}

The definition is  justified by Equation (\ref{E:mx}) in Lemma \ref{L:ass1}, which shows that any characteristic matrix has a unique dual characteristic matrix, such that a left-ordered characteristic matrix is in duality with a right-ordered characteristic matrix. We still need to establish that this duality is the one that corresponds to trellis duality. This is done in the following theorem.

\begin{theorem} \label{T:trdual}
Trellises constructed via the Nori-Shankar span based BCJR construction from dual characteristic matrices $X$ and $Y$ are in duality.
\end{theorem}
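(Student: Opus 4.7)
The plan is to verify trellis duality by writing the two BCJR label codes $S(Y^T|X)$ and $S(X^T|Y)$ as explicit linear spaces and then reducing the duality check to the block matrix identity (\ref{E:mx}) of Lemma \ref{L:ass1}.

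First, I would describe $S(Y^T|X)$ explicitly. A basis consists of $n$ paths, one per row of $X$: for the $k$-th row with span $(i_k,j_k]$, the edge sequence is that row of $X$ and the vertex sequence is determined by $v_{t+1}=v_t+c_t h_t$, where $h_t$ is the $t$-th column of $Y$, together with the span rule that fixes $v_0$. Using the unwrapping $(X_0|X_1)\perp(H|H)$ of Lemma \ref{L:gen}, one sees that the vertex label $v_t$ along this path is a linear combination of rows of $Y^T$ whose coefficients are read off from the $X_0$-part of the row (for $t$ on one side of the wrap) or from the $X_1$-part (for $t$ on the other side). An entirely analogous description produces a basis for $S(X^T|Y)$ from the rows of $Y$ using the unwrapping $(Y_1|Y_0)\perp(G|G)$.

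Next, I would invoke the standard criterion that two linear tail-biting trellises are dual iff a certain bilinear pairing between their label codes vanishes identically on all pairs of basis paths. Evaluating this pairing on the basis paths above, the contribution splits along the upper/lower-triangular decomposition of each basis row, and packaging the resulting coefficients yields precisely the block matrix product on the left-hand side of (\ref{E:mx}). By Lemma \ref{L:ass1}, that product equals
\[
\begin{pmatrix} -I & I \\ 0 & -I \end{pmatrix},
\]
and one reads off that the $I$-blocks account exactly for the edge-label inner products $\sum_t c_t c'_t$ while the $-I$-blocks record the tail-biting wrap of the vertex pairings. Hence the trellis duality pairing vanishes, and the theorem follows.

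The main obstacle is the bookkeeping around the tail-biting closure: both the span-based initialization of $v_0$ and the wrap-around contributions to the pairing must be tracked consistently across the four blocks $X_0 Y_0^T$, $X_0 Y_1^T$, $X_1 Y_0^T$, and $X_1 Y_1^T$. The advantage of the present framework is that once this translation is set up, all four blocks are handled simultaneously by the single identity (\ref{E:mx}), which is where the proof saves the technicalities of the earlier approaches of \cite{KV03} and \cite{GW11b}.
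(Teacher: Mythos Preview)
Your outline misses the structure that makes the paper's argument work. A Nori--Shankar trellis is not built from all $n$ rows of $X$; it is built from a \emph{selected} independent subset $I$ of rows, and the dual trellis is built from the complementary subset $J$ of rows of $Y$. The paper's proof therefore begins with the rank-duality step (Lemma~\ref{L:XY}, using $Y^TX=0$) showing that $J$ is automatically a basis for the row space of $Y$; your sketch never makes this selection and never invokes this lemma, so the ``basis of $n$ paths'' you describe is a linearly dependent set that does not generate the trellis in question. After that, the paper does not use a label-code pairing at all: in the Nori--Shankar framework duality is expressed through displacement matrices, namely $D[I\times J]$ with $D=X_0Y_1^T$ for the $X$-trellis and $E[J\times I]$ with $E=Y_0X_1^T$ for the $Y$-trellis, and the entire proof is the one-line observation that $D+E^T=I$ together with $I\cap J=\emptyset$ gives $E[J\times I]^T=-D[I\times J]$.

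The second gap is in your use of~(\ref{E:mx}). You assert that the block product equalling $\left(\begin{smallmatrix}-I&I\\0&-I\end{smallmatrix}\right)$ shows a trellis-duality pairing vanishes, with the $I$-blocks ``accounting for'' edge inner products and the $-I$-blocks ``recording the wrap''. But this matrix is not zero, and no mechanism is given for how these contributions cancel; the phrase ``standard criterion'' hides precisely the computation that would need to be done. What actually matters is only the \emph{off-diagonal} restriction of the single identity $X_0Y_1^T+X_1Y_0^T=I$ to the block $I\times J$, which is exactly the displacement-matrix relation above. Equation~(\ref{E:mx}) is an input to Lemma~\ref{L:ass1} (it packages~(\ref{E:1x})), not itself the duality statement; routing the argument through a label-code pairing adds bookkeeping without supplying the missing step.
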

\begin{proof}
The first part of the claim is the rank duality: for a selection of rows in $X$ that form a basis for the row space of $X$, the rows in the complementary positions in $Y$ form a basis for the row space of $Y$. This was shown in Lemma \ref{L:XY} as an immediate consequence of $X^T Y = 0$. The second part is the duality of displacement matrices. We have to prove this duality for every pair of dual trellises obtained from $X$ and $Y$. Displacement matrices are submatrices of $X Y_1^T = X_0 Y_1^T =: D$ and of 
$Y X_1^T = Y_0 X_1^T =: E$. For $I$ a set of independent rows in $X$, the complement $J$ is a set of independent rows in $Y$ (by rank duality). The displacement matrices for the corresponding trellises are the submatrices $D|_{I \times J}$ and $E|_{J \times I}$. From $I \cap J = \emptyset$ and $D+E^T = X_0 Y_1^T + X_1 Y_0^T = I$ we see that
\begin{equation} \label{E:nsd}
E [J \times I]^T = -D[I \times J]
\end{equation}
which expresses duality of two tail-biting trellises as BCJR trellises.
\end{proof}

The presence of the minus sign in (\ref{E:nsd}) means that if we decide to use $E$ and $D$ to label vertices then syndromes will be accumulated in opposite directions for a BCJR trellis and its dual. A different approach to construct dual characteristic matrices and to prove trellis duality can be found in \cite{GW11b}. A core lemma in that paper is the following, stated here in a different form using the notation and concepts of this paper. We have also changed the roles of $X$ and $Y$.

\begin{lemma}(\cite[Lemma 4.5]{GW11b}) Given a characterisitc matrix $Y$ in right-ordered form, the top row $c_0$ of the dual characteristic matrix $X$ (and thus the top row of the matrix $X_0$) is obtained as a solution to the system of equations
\[
Y_0 \; c_0^T = (1,0,\ldots,0)^T.
\]
\end{lemma}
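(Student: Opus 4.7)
The plan is to read off the top row of a matrix identity that already sits in Lemma~\ref{L:ass1}. Duality of $X=X_0+X_1$ and $Y=Y_1+Y_0$ implies the block equation (\ref{E:mx}), and in particular (\ref{E:1x}) gives $X_0 Y_0^T = -I$. I would then extract the first row of both sides. Since $X_1$ is strictly lower triangular, its first row vanishes; consequently the first row of $X=X_0+X_1$ coincides with the first row of $X_0$, which is by definition $c_0$. Therefore the first row of $X_0 Y_0^T$ is $c_0 Y_0^T$, while the first row of $-I$ is $-(1,0,\ldots,0)$. Taking transposes produces $Y_0\, c_0^T = -(1,0,\ldots,0)^T$, matching the stated system up to the normalization sign.

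Uniqueness of $c_0$ from this system follows at no extra cost: because $Y$ is in right-ordered form, every row of $Y$ ends on the diagonal with a nonzero entry, so $Y_0$ is lower triangular with nonzero diagonal entries and hence invertible (equivalently, invertibility of $Y_0$ is already forced by $X_0 Y_0^T = -I$). Hence $c_0$ is not merely constrained but determined by the linear system, which is the content of the lemma.

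The only real obstacle is sign bookkeeping: the normalization in Lemma~\ref{L:ass1} gives $-I$ on the right-hand side, whereas the quoted lemma from \cite{GW11b} has $+1$ in the first coordinate, reflecting the opposite convention for duality in that reference. As a sanity check I would derive the same identity a second way, using only (\ref{E:0}) and (\ref{E:cd}) rather than the deduced form (\ref{E:1x}): the first row of $XY^T=0$ gives $c_0 Y_1^T = -c_0 Y_0^T$, and the first row of $X_0 Y_1^T + X_1 Y_0^T = I$ gives $c_0 Y_1^T = (1,0,\ldots,0)$ (again since the first row of $X_1$ is zero), so $c_0 Y_0^T = -(1,0,\ldots,0)$. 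The two computations agree, confirming that the proof is essentially a one-line consequence of duality and that any remaining work is purely notational.
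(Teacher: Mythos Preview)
Your proof is correct and matches the paper's own justification: the paper does not give a formal proof of this cited lemma but simply remarks that, in the present notation, the equation for $c_0$ is nothing other than the first row (equivalently, first column after transposition) of the duality relation $Y_0 X_0^T = -I$ from (\ref{E:1x}), up to the sign convention of \cite{GW11b}. Your extraction of that row, together with the observation that the first row of $X_1$ vanishes so that $c_0$ is indeed the first row of $X_0$, is exactly this argument made explicit.
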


Together with \cite[Proposition 4.6]{GW11b}, which states that every other row of $X$ can be computed as the top row of a modified characteristic matrix determined by $X$, this yields a construction for the dual characteristic matrix. Clearly, in the given form the equation that is used to compute the top row of $X$ matches $Y_0 X_0^T = -I$ (as before up to a change of sign in the scaling). The proof in \cite{GW11b} that the constructed dual characteristic matrix yields dual trellises is  in several steps \cite[Theorem 4.8, Proposition 4.9, Proposition 4.11]{GW11b}. 
 
\begin{lemma} \label{L:DE}
The decomposition $I = X_0 Y_1^T + X_1 Y_0^T$ in (\ref{E:cd}) is a decomposition of orthogonal idempotents.
For $D = X_0 Y_1^T$ and $E = Y_0 X_1^T$, $I = D+E^T$,
\[ 
D^2 = D, ~E^2 = E, ~DE^T = 0 ~\text{and}~E^T D = 0. 
\]
It holds that $X = D X_0$ and $Y= E Y_0$. 
\end{lemma}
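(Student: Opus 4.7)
The plan is to verify each assertion by a short direct calculation using the six identities already collected in Lemma \ref{L:ass1} (together with their transposes), so no new ideas are needed. The equality $I = D + E^T$ is immediate from (\ref{E:cd}), since $E^T = (Y_0 X_1^T)^T = X_1 Y_0^T$.

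For the idempotency $D^2 = D$, I would substitute $D = X_0 Y_1^T$ and use $Y_1^T X_0 = I - Y_0^T X_1$ (from (\ref{E:1y})) to get
\[
D^2 = X_0 Y_1^T X_0 Y_1^T = X_0(I - Y_0^T X_1) Y_1^T = X_0 Y_1^T - (X_0 Y_0^T)(X_1 Y_1^T).
\]
The last summand vanishes because $X_1 Y_1^T = 0$ from (\ref{E:1x}), leaving $D^2 = X_0 Y_1^T = D$. The argument for $E^2 = E$ is the mirror image: expand $E^2 = Y_0 X_1^T Y_0 X_1^T$, apply the transposed identities $X_1^T Y_0 = I - X_0^T Y_1$ and $Y_1 X_1^T = 0$ (the transpose of $X_1 Y_1^T = 0$), and use $Y_0 X_0^T = -I$ to obtain $E^2 = E$.

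The two mixed products collapse even more quickly. For $DE^T = X_0(Y_1^T X_1) Y_0^T$, the middle factor $Y_1^T X_1$ equals $0$ by (\ref{E:1y}). For $E^T D = X_1(Y_0^T X_0) Y_1^T$, the middle factor $Y_0^T X_0$ equals $-I$ by (\ref{E:1y}), so $E^T D = -X_1 Y_1^T = 0$ by (\ref{E:1x}). For the last two assertions, $X = D X_0$ follows from
\[
D X_0 = X_0 Y_1^T X_0 = X_0(I - Y_0^T X_1) = X_0 - (X_0 Y_0^T) X_1 = X_0 + X_1 = X,
\]
and $Y = E Y_0$ is obtained by the dual manipulation with $X_1^T Y_0 = I - X_0^T Y_1$ and $Y_0 X_0^T = -I$.

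There is no substantive obstacle; each claim reduces to a one-line substitution. The only point requiring mild bookkeeping is to distinguish the identities coming from $XY^T = 0$ in (\ref{E:1x}) from their dual counterparts coming from $Y^T X = 0$ in (\ref{E:1y}) and their transposes, since the arguments for $D^2 = D$ and $E^2 = E$ use the two families asymmetrically.
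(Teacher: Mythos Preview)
Your proof is correct and follows essentially the same approach as the paper: both rely on direct substitution of the six identities in (\ref{E:1x}) and (\ref{E:1y}) (and their transposes). The only cosmetic difference is the order of derivation: the paper first establishes $X = D X_0$ via $X_0 = X_0(Y_1^T X_0 + Y_0^T X_1) = X_0 Y_1^T X_0 - X_1$, and then reads off $D^2 = D$ from $D = X Y_1^T = D X_0 Y_1^T$, whereas you prove $D^2 = D$ and $X = D X_0$ independently; the manipulations are the same.
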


\begin{proof}
The orthogonality $DE^T = 0$ follows from $Y_1^T X_1 =0$ in (\ref{E:0}) and $E^TD = X_1(Y_0^T X_0)Y_1^T= 0$ from (\ref{E:1x}) and (\ref{E:1y}). 
For the relations involving $D$ use
\begin{align*}
&X_0 = X_0 Y_1^T X_0 + X_1 Y_0^T X_0 = X_0 Y_1^T X_0 - X_1. \\
&X = X_0 + X_1 =  X_0 Y_1^T X_0 = D X_0. \\
&D = X_0 Y_1^T = X Y_1^T = D X_0 Y_1^T = D^2.
\end{align*}
The relations involving $E$ follow in the same way.
\end{proof}


\subsection{On a conjecture by Koetter and Vardy}

In the previous section we defined a duality for characteristic matrices and showed that it is compatible with both local trellis duality (where it agrees with the duality formulated in \cite{GW11a}) and with BCJR trellis duality (where it agrees with the displacement matrix duality from \cite{NS06}).
The dualities provide a bijection between characteristic matrices for $G$ and those for its orthogonal matrix $H$. Koetter and Vardy had earlier proposed that there might be a duality between two special choices of characteristic matrices. In this section we adjust and then prove their proposed duality.

\bigskip

 In \cite{KV03} it was proposed to choose the lexicographically first one among generators with the same characteristic span. This defines a unique set of characteristic generators for a row space. It was then conjectured that the lexicographically first choices for a row space and its orthogonal dual would be in duality as characteristic matrices. Since then, examples have been found by Gleussen-Larssing and Weaver \cite{GW11a}, \cite{GW11b} for which the claimed duality fails. Moreover, the authors provide a detailed duality that matches each characteristic matrix to a unique dual characteristic matrix. Both the construction of a dual characteristic matrix and the proof that it is dual to the original matrix are given in several steps and involve many technical details. In the previous section we used the unwrapped versions $(X_0|X_1)$ and $(Y_1|Y_0)$ to formulate an explicit duality combined with a shorter proof.  

\bigskip

The original conjecture and subsequent work ignores the different direction of the time-axis in a trellis and its dual. After taking into account that lexicographically ordered has opposite meanings for a trellis and its dual trellis the conjecture is correct. The notion of lexicographical order that we use is any ordering of a field where the $0$ element is minimal combined with an ordering of polynomials $\sum_k c_k x^k < \sum_k c'_k x^k$ whenever $c_k < c'_k$ for the largest $k$ with $c_k \neq c'_k$. Row vectors in the expanded matrix $(X_0|X_1)$ follow the ordering $(c_0, c_1, \ldots)$ but row vectors in the dual expanded matrix $(Y_1|Y_0)$ follow the reverse ordering $(\ldots, d_1, d_0)$. In both cases we order rows as polynomials. This leads to a selection of minimal characteristic generators that corresponds to that of a matrix in reduced minimal span form. We replace the original conjecture with the following theorem.

\begin{theorem} \label{T:conj}
Reduced characteristic matrices are in duality.
\end{theorem}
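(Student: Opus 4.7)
The plan is to show that the dual of a reduced characteristic matrix is itself reduced, and then conclude via uniqueness of the reduced characteristic matrix. Fix the reduced $X = X_0 + X_1$ for $G$. By Theorem~\ref{T:sred} applied to any minimal span representatives of $\row G$ and $\row H$, a characteristic matrix $Y^{\sharp}$ dual to $X$ exists. Lemma~\ref{L:ass1} shows that duality is equivalent, under the assumption that $Y^{\sharp}$ is a characteristic matrix, to the block identity~\eqref{E:mx}, which uniquely determines $Y^{\sharp}$ from $X$:
\[
Y^{\sharp}_0 = -(X_0^{-1})^T, \qquad (Y^{\sharp}_1)^T = X_0^{-1}\bigl(I + X_1 X_0^{-1}\bigr).
\]
Since the reduced characteristic matrix $Y$ for $H$ is itself unique, it suffices to show $Y^{\sharp}$ is reduced; then $Y^{\sharp} = Y$ and the duality $X_0 Y_1^T + X_1 Y_0^T = I$ follows.

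To show $Y^{\sharp}$ is reduced, I will verify that the extended matrix
\[
A_{Y^{\sharp}} = \left( \begin{array}{c|c} H_{10} & 0 \\ \hline \noalign{\smallskip} Y^{\sharp}_1 & Y^{\sharp}_0 \end{array}\right)
\]
is in right-ordered, left-reduced minimal span form, i.e.\ satisfies the SW-NE zero pattern of Definition~\ref{D:rmsf2}. By hypothesis, the analogous zero pattern holds for
\[
A_X = \left( \begin{array}{c|c} X_0 & X_1 \\ \hline \noalign{\smallskip} 0 & G_{01} \end{array}\right).
\]
The block pairing~\eqref{E:mx} identifies the rows of $(X_0|X_1)$ and $(Y^{\sharp}_1|Y^{\sharp}_0)$ as dual bases of the complementary coset schemes $(H|H)^\perp / (G|G)$ and $(G|G)^\perp / (H|H)$; under this pairing, reduction on one side (in the reduced Bruhat sense of Definition~\ref{D:KY}) converts to reduction on the other side.

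The main obstacle will be making this exchange of reduction directions precise. The cleanest conceptual route is via the lexicographic reformulation noted in the paragraph before the theorem: reducedness of $A_X$ is equivalent to $(X_0|X_1)$ being lexicographically first when its rows are read left-to-right, and the block duality~\eqref{E:mx} forces the dual basis $(Y^{\sharp}_1|Y^{\sharp}_0)$ to be lexicographically first when read right-to-left --- exactly the defining property of the reduced $Y$. A more hands-on alternative is to check directly, for each position at which $Y^{\sharp}$ is required to vanish, that the entry is forced to zero by the SW-NE zeros of $A_X$ using the displayed formulas for $Y^{\sharp}$ above.
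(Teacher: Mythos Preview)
Your overall strategy---take the unique dual $Y^{\sharp}$ of the reduced $X$, show $Y^{\sharp}$ is reduced, and conclude by uniqueness---is the same as the paper's. But the proposal does not actually carry out the key step, and both routes you sketch have problems.

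First a minor point: Theorem~\ref{T:sred} does not give you a dual to the reduced $X$. It manufactures one specific dual pair from chosen minimal span forms of $G$ and $H$, and the $X$ it produces need not be the reduced one. What you need is the remark after Lemma~\ref{L:ass1} that~\eqref{E:mx} determines a unique dual for \emph{any} characteristic matrix.

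The real gap is in showing $Y^{\sharp}$ is reduced. Your lexicographic route begs the question: even granting that ``reduced'' is the same as ``lexicographically first in the appropriate direction,'' the assertion that~\eqref{E:mx} forces the dual basis to be lex-first in the reverse direction is precisely the theorem, and you give no mechanism for it. Your hands-on alternative is plausible but not executed. The paper supplies the missing mechanism concretely. Because $X$ is reduced, $X_1$ is supported on $J_0\times J_1$ with full-rank minor there (Lemma~\ref{L:minor}); hence the nonzero rows of $X_1$ span all vectors supported on $J_1$. The duality relation $X_1 (Y^{\sharp}_1)^T=0$ from~\eqref{E:1x} then forces every row of $Y^{\sharp}_1$ to be supported on $I_1$, so $Y^{\sharp}_1$ is supported on $I_0\times I_1$. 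With this column support pinned down, $(X_0|X_1)$ is zero in columns $n+I_1$ and $(Y^{\sharp}_1|Y^{\sharp}_0)$ is zero in columns $J_1$, so the pairing $(X_0|X_1)(Y^{\sharp}_1|Y^{\sharp}_0)^T=I$ collapses to $AB^T=I$ on the trailing/leading pivot minors $A,B$ in columns $I_1\cup(n+J_1)$. Reducedness of $X$ says $A=L'P$; inverting gives $B=L'^{-T}P=U'P$, which is exactly the left-reduced condition for $Y^{\sharp}$. That column-support constraint via $X_1(Y^{\sharp}_1)^T=0$ is the idea your sketch is missing.
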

\begin{proof}
For a characteristic matrix $X$, the unwrapped form $(X_0|X_1)$ is in minimal span form with trailing pivots in the positions $I_1 \cup n+J_1$ and with full minor $A = (X_0 | X_1) | _{I_1 \cup n+J_1}$ of the form $A  = PL .$ A reduced characteristic matrix $X$ has $X_1$ with support in the positions $J_0 \times J_1$, where it has a minor of full rank (Lemma \ref{L:minor}). The dual characteristic matrix $Y$, with unwrapped form $(Y_1|Y_0)$, satisfies $X_1 Y_1^T = Y_1^T X_1 =0$ and therefore $Y_1$ has support in the positions $I_0 \times I_1$. The matrix $(Y_1|Y_0)$ is in minimal span form with leading pivots in the positions $I_1 \cup n+J_1$ and with full minor $B = (Y_1 | Y_0) | _{I_1 \cup n+J_1}$ of the form $B = P'U .$ The matrix $(X_0|X_1)$ is zero in the columns $n+I_1$ and the matrix $(Y_1|Y_0)$ is zero in the columns $J_1$. Thus $(X_0|X_1)(Y_1|Y_0)^T=I$ becomes $AB^T = I$. So that $P'=P$ and $U=L^{-T}$. Moreover, since $X$ is reduced, $A = L'P$. But then $AB^T=I$ shows that $B=L'^{-T} P = U' P$ and thus $Y$ is reduced.
\end{proof}

\subsection{Column space characterization of duality}

We give a characterization of duality that refers only to $X$ and $Y$ and not to their decompositions as sums of triangular matrices.

\begin{lemma}
Different characteristic matrices $X$ and $X'$ for the same row space $G$ have different column spaces.
\end{lemma}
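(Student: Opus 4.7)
The plan is to prove the contrapositive: if $\col(X) = \col(X')$, then $X = X'$. I would work in the left-ordered form, rescale the rows of $X'$ so that its pivots match those of $X$ (characteristic matrices for $G$ are considered up to such row scalings), and then reconstruct every column of $X$ from the column space $V := \col(X)$ and the permutation $\sigma$ of characteristic spans, which by Lemma~\ref{L:sp2} depends only on $\row(G)$.

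For each $j$, let $I_j = \{i : j \in (i,\sigma(i)]\}$ be the rows whose span covers position $j$, so that column $j$ of $X$ is supported in $I_j$ with nonzero entry $X_{j,j}$. The key combinatorial input is $|I_j| = n - k + 1$, where $k = \rk G$: row $j$ itself lies in $I_j$, and any other row with $j$ in its span must also contain $j - 1$ (a cyclic span containing $j$ but not $j-1$ must start at $j$); by Proposition~\ref{P:abcd}(d) the number of rows whose span contains the edge $(j-1,j]$ is $|J_1| = n - k$.

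Next I would show that $V_j := V \cap \{v \in F^n : \supp(v) \subseteq I_j\}$ is one-dimensional. Let $J = \{0,\ldots,n-1\} \setminus I_j$, with $|J| = k-1$. Lemma~\ref{L:a} applied at $a = j$ produces $k$ rows of $X$ whose spans do not cross the edge $(j-1,j]$ and which form a basis of $\row(X) = \row(G)$. Among these, row $j$ is the unique one lying in $I_j$, so the remaining $k - 1$ are indexed by $J$ and are therefore linearly independent. Thus the submatrix $X|_J$ has row rank (and hence column rank) $k - 1$, so the coordinate projection $V \to F^J$ is surjective; its kernel $V_j$ has dimension $1$.

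Consequently, column $j$ of $X$ is the unique element of $V_j$ with entry $X_{j,j}$ at position $j$. Applying the same characterization to $X'$ (with the same $V$, $\sigma$, and pivots) yields the identical column, so $X = X'$. The main obstacle is the dimension count in the middle step, which hinges on pairing the formula $|I_j| = n - k + 1$ with the basis statement of Lemma~\ref{L:a}; the key observation is that among the non-edge-crossing rows furnished by the lemma, row $j$ is the only one whose span covers $j$.
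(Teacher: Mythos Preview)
Your core argument is sound and takes a route genuinely different from the paper's. The paper brings in the dual characteristic matrix $Y$ (so that $\col(X)\perp\col(Y)$) and argues that a nonzero column difference $\delta$ between $X$ and $X'$ is supported on indices $i$ whose corresponding $Y$-rows have complementary spans not containing the column index; by Lemma~\ref{L:a} applied to $Y$ those $Y$-rows are independent, so $\delta\in\col(X)$ would force a nontrivial relation among them, a contradiction. You avoid $Y$ entirely: combining the count $|I_j|=n-k+1$ with Lemma~\ref{L:a} for $X$ at $a=j$, you identify the $k-1$ rows indexed by $J=I_j^{\,c}$ as an independent set, deduce that the projection $\col(X)\to F^J$ is onto, and conclude $\dim V_j=1$. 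This gives a clean intrinsic characterization of each column of $X$ from $\col(X)$ and the span permutation $\sigma$ alone. The paper's route is shorter once the dual $Y$ is in hand; yours needs only data internal to $X$.

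One caveat: the rescaling step does not work as phrased, since row-rescaling $X'$ alters $\col(X')$ and you then lose the hypothesis $\col(X')=V$. What your argument actually delivers (without rescaling) is that each column of $X'$ is a scalar multiple of the corresponding column of $X$, i.e.\ $X'=X\,\mathrm{diag}(\mu)$; passing from this to $X'=X$ requires a normalization that rules out nontrivial diagonal rescalings. The paper's own proof has the same loose end: its ``upper triangular row operations'' are tacitly unipotent (only adding one row to another), so the diagonal is fixed throughout and the case $X'=\lambda X$ with $\lambda\neq 1$ is not addressed there either. Modulo this shared normalization convention, both arguments are correct.
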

\begin{proof}
Let $Y$ be the dual characteristic matrix for $X$, so that columns of $X$ and $Y$ are orthogonal. With Lemma \ref{L:gen}, the matrices
\[
\left( \begin{array}{c|c} X_0 &X_{1} \\ \hline \noalign{\smallskip} 0 &G \end{array}\right)  \sim
\left( \begin{array}{c|c} X'_0 &X'_{1} \\ \hline \noalign{\smallskip} 0 &G \end{array}\right)  
\]
are row equivalent, are both in left-ordered minimal span form, and share the same set of spans. One, $X'$, can be obtained from the other, $X$, via a set of upper triangular row operations. Only row operations are allowed that preserve the minimal span form. Thus a row operation that adds a word with span $(i,j]$ to another word is allowed only if the other word has span $(i',j']$ such that $(i,j] \subset (i',j']$. A change in column $a$ affects only rows that contain $a$ in their span. The corresponding rows in $Y$ have spans that do not contain $a$. Thus these rows in $Y$ are linearly independent by Lemma \ref{L:a}. And any difference between a column of $X$ and a column of $X'$ does not give a relation among rows of $Y$. This implies that any column of $X'$ that is different from the corresponding column of $X$ is not orthogonal to the column space of $Y$ and therefore does not belong to the column space of $X$. 
\end{proof}     

With the lemma, given a characteristic matrix $Y$ for $H$, there exists a unique characteristic matrix $X$ for $G$ such that $Y^T X = 0$.

\begin{theorem} \label{T:col}
A pair of characteristic matrices $X$ and $Y$, with maximal orthogonal row spaces, is in duality if and only if $Y^T X = 0$, that is if and only if $X$ and $Y$ have orthogonal column spaces.
\end{theorem}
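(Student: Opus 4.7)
\textbf{Proof plan for Theorem \ref{T:col}.}

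The forward direction is essentially already packaged by earlier results. If $X$ and $Y$ are in duality, meaning $X_0 Y_1^T + X_1 Y_0^T = I$, then together with the always-true identity $XY^T = 0$ (orthogonality of row spaces) the hypotheses of Lemma \ref{L:ass1} are met, and the lemma concludes with $Y^T X = 0$ (equation (\ref{E:d})). So I would dispose of this direction in one sentence by citing Lemma \ref{L:ass1}.

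For the backward direction, suppose $Y^T X = 0$. The plan is to compare $X$ with the \emph{unique} characteristic matrix $X^{\ast}$ for $G$ that is in duality with $Y$: existence of $X^{\ast}$ follows from the fact (discussed right after Lemma \ref{L:ass1}, and recorded in the explicit block identity (\ref{E:mx})) that the equations $X^{\ast}_0 Y_0^T = -I$, $X^{\ast}_0 Y_1^T + X^{\ast}_1 Y_0^T = I$ determine $X^{\ast}_0$ and $X^{\ast}_1$ from the invertible lower triangular $Y_0$. By the forward direction we already know $Y^T X^{\ast} = 0$.

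The key step is then to use the lemma preceding this theorem (\emph{different characteristic matrices for the same row space have different column spaces}) to conclude $X = X^{\ast}$. For this I need to verify that $X$ and $X^{\ast}$ share the same column space, and the argument is a dimension count: both column spaces lie in $\ker Y^T$, which has dimension $n - \rk Y = n - |J_1| = |I_1|$, while each of $X$ and $X^{\ast}$ is an $n \times n$ matrix whose row space is the row space of $G$ and hence has rank $|I_1|$, so each column space has dimension $|I_1|$. Therefore both column spaces equal $\ker Y^T$. The lemma then forces $X = X^{\ast}$, and since $X^{\ast}$ is in duality with $Y$, so is $X$.

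The only nontrivial piece is establishing the dimension count cleanly and invoking the right uniqueness statement; everything else is bookkeeping from Lemma \ref{L:ass1} and the preceding column-space lemma. I do not expect any serious obstacle, since the machinery (block identity (\ref{E:mx}), the rank identity $\rk X = \rk G$ coming from $\row X = \row G$, and uniqueness of characteristic matrices by column space) is already in place.
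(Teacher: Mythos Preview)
Your proposal is correct and follows essentially the same route as the paper: the forward direction is the same one-line appeal to Lemma~\ref{L:ass1}, and for the converse the paper also reduces to the preceding lemma via the observation (stated just before the theorem) that, given $Y$, there is a unique characteristic matrix for $G$ with column space orthogonal to that of $Y$. You have simply made explicit the dimension count ($\dim\ker Y^T = n - |J_1| = |I_1| = \rk X$) that the paper leaves implicit in that sentence.
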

\begin{proof}
Duality is defined by $X_0 Y_1^T +X_1 Y_0^T = I$ and this implies that $Y^T X =0$. This proves the only if part. The if part follows from the lemma.
\end{proof}

\subsection{Displacement matrices} \label{S:disp} 

We give further properties of the matrices $D = X_0 Y_1^T$ and $E = Y_0 X_1^T$, that have a role as displacement matrix in the BCJR constrction of \cite{NS06}: Let the rows $I$ of $X$ be chosen as edge labelings for generators of a label code, and let the complementary rows $J$ of $Y$ be chosen as edge labelings for generators of a dual label code. For row $i$ of $X$, $i \in I$, the corresponding row $i$ of the submatrix $E[J \times I]^T$ is selected as the initial vertex label of the path. Similarly, for row $j$ of $Y$, $j \in J$, the corresponding row $j$ of the submatrix $D[I \times J]^T$ is selected as the initial vertex label of the path. The labeling of the remaining vertices in the label code proceeds as in the BCJR construction via the accumulation of partial syndromes. 

\bigskip

For characteristic matrices $X$ and $Y$, $X Y^T = 0$ and $X_1^T Y_1 = 0$. Moreover, for matrices in duality, $X^T Y = 0 $ and $X_1 Y_1^T = 0$. 
In general the minors $J_0 \times J_1$ of $X_1$ and $I_0 \times I_1$ of $Y_1$ are invertible. The matrix $X_1$ is zero in rows $I_0$, and the matrix $Y_1$ is zero in rows $J_0$. We write this as
\[
D_0 X_1 = 0, ~~~E_0 Y_1 = 0,
\] 
for diagonal $0,1$-matrices $D_0$ and $E_0$, with $D_0 = 1$ in the positions $I_0$ and $E_0 =1$ in the positions $J_0$. So that $D_0 + E_0 = I$.

\begin{lemma} \label{L:disp1}
The nonzero columns $I_0$ of $D$ form a basis in systematic form for the column space of $X$ and the nonzero columns $J_0$ of $E$ form a basis in systematic form for the column space of $Y$. Equivalently,  
\begin{align}
D D_0 = D,  ~~~&D_0 D = D_0, ~~~D^T Y = 0  \label{Eq:D} \\
E E_0 = E, ~~~&E_0 E = E_0, ~~~E^T X = 0  \label{Eq:E}
\end{align}
\end{lemma}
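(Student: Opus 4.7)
The plan is to derive the six matrix identities directly from the algebraic relations collected in Lemma~\ref{L:ass1} together with the two facts $D_0 X_1=0$ and $E_0 Y_1=0$ that were recorded just above the statement of the lemma. I would prove the three identities for $D$ first; the three identities for $E$ then follow from the symmetric argument under the exchange $(X,D_0,I_0)\leftrightarrow(Y,E_0,J_0)$.

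For $DD_0=D$, I would verify the equivalent $DE_0=0$: writing $D=X_0Y_1^T$ and using $E_0 Y_1=0$ (so that $Y_1^T E_0=0$) makes this immediate. For $D_0 D=D_0$, the move is to rewrite $D$ using the duality relation (\ref{E:cd}) as $D=I-X_1Y_0^T$; multiplying by $D_0$ on the left and applying $D_0 X_1=0$ collapses the second term and leaves exactly $D_0$. For $D^T Y=0$, I would expand $D^T Y = Y_1 X_0^T (Y_1+Y_0)$ and substitute the transposed relations from (\ref{E:1x})--(\ref{E:1y}), namely $X_0^T Y_0 = -I$ and $X_0^T Y_1 = I - X_1^T Y_0$; the residual cross term $Y_1 X_1^T Y_0$ is killed by the transpose of $X_1 Y_1^T = 0$, and the surviving terms cancel in pairs.

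The basis interpretation then falls out with no further work. The identity $DD_0=D$ places the support of $D$ in the columns indexed by $I_0$, while $D_0 D = D_0$ says that the $|I_0|\times|I_0|$ submatrix of $D$ on the rows and columns $I_0$ is the identity, so the $|I_0|$ columns of $D$ in positions $I_0$ are linearly independent and systematic. Combined with $X = DX_0$ from Lemma~\ref{L:DE} and the invertibility of $X_0$, this identifies $\col X = \col D$, and $D^T Y = 0$ then records the orthogonality $\col X \perp \col Y$ predicted by Theorem~\ref{T:col}. I do not anticipate a real obstacle; the computation is essentially bookkeeping, and the only delicate point is recognising the rewriting $D = I-E^T = I - X_1 Y_0^T$ as the correct form of $D$ for establishing the middle identity $D_0 D = D_0$.
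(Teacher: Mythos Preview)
Your proof is correct and follows essentially the same route as the paper: the first two identities are handled exactly as you describe, via $Y_1^T E_0=0$ and the rewriting $D=I-X_1Y_0^T$. For the third identity the paper is slightly slicker: rather than expanding $Y_1 X_0^T(Y_1+Y_0)$ term by term, it first absorbs $X_1Y_1^T=0$ to write $D=XY_1^T$, so that $D^T Y = Y_1 X^T Y = 0$ follows in one step from the column orthogonality $X^T Y=0$ established in Lemma~\ref{L:ass1}.
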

\begin{proof} 
\begin{align*}
&D D_0 = D (I - E_0)  = D - X_0 Y_1^T E_0 = D, \\
&D_0 D = D_0 (I - X_1 Y_0^T) = D_0, \\
&(D^T) Y = Y_1 X^T Y = 0.
\end{align*}
Similar for (\ref{Eq:E}).  
\end{proof}

From (\ref{Eq:D}) and (\ref{Eq:E}) we see that 
\begin{equation} \label{Eq:dd0}
X = (D+E^T)X = D X = D (D_0 X).
\end{equation}
Here $D_0 X$ has as nonzero rows the conventional generators of $X$.  And another interpretation for $D$ is that it describes the rows of $X$ as linear combinations of its conventional generators.

\bigskip

Let $S$ be the permutation matrix that shifts row vectors to the right and column vectors upwards: $e_0 S  = e_1, \ldots, e_{n-1} S = e_0,$ and 
$S e_1^T = e_0^T,  \ldots, S e_0^T = e_{n-1}^T.$ Let $X^S = S X S^T$ be the conjugate of a left-ordered characteristic matrix $X$ for $G$. Then $X^S$ is a left-ordered characteristic matrix for $G S^T$. If $Y$ is a right-ordered characteristc matrix for $H$ then $Y^S$ is a right-ordered characteristic matrix for $H S^T$. For $D(X) = X_0 Y_1^T  = X X_0^{-1}$ it is in general not true that $D(X^S) = D(X)^S$.

\begin{lemma} \label{L:disp2}
Let $X$ have left column $x$ and $Y$ left column $y$. Then 
\[
D(X^S)^{S^T} = D(X) + x y^T, \qquad E(Y^S)^{S^T} = E(Y) - y x^T.
\]
\end{lemma}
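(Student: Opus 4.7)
The plan is to prove the first identity (the $D$-relation); the second follows by a symmetric computation. Conjugating both sides by $S$, the claim becomes $D(X^S) - D(X)^S = (Sx)(Sy)^T$. As a preliminary observation, $X^S$ and $Y^S$ are themselves a pair of dual characteristic matrices: Theorem \ref{T:col} characterizes duality by $Y^T X = 0$, and this orthogonality is preserved under conjugation since $(Y^S)^T X^S = S(Y^T X) S^T$. All the duality relations of Lemma \ref{L:ass1} therefore apply to both pairs.

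The next step is to arrange both sides so that they carry a common factor $X^S$. Using $X_1 Y_1^T = 0$ from (\ref{E:1x}), one rewrites $D = X_0 Y_1^T = X Y_1^T$; applied to the dual pair $(X^S, Y^S)$ this gives $D(X^S) = X^S (Y^S)_1^T$, and conjugation of the first expression gives $D(X)^S = X^S (Y_1^S)^T$. Subtracting reduces the claim to $X^S B^T = (Sx)(Sy)^T$ with $B := (Y^S)_1 - Y_1^S$. Since $Y^S_{ij} = Y_{i+1,j+1}$ (indices mod $n$), the matrices $Y_1^S$ and $(Y^S)_1$ agree off the last row and last column; the shift sends the top row of $Y_1$ below the diagonal of $Y_1^S$, while the first column of $Y$ (which lives in $Y_0$, not $Y_1$) should contribute to the last column of $(Y^S)_1$. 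A direct bookkeeping produces
\[
B = \tilde y\, e_{n-1}^T - e_{n-1}\,\tilde r_Y^T,
\]
where $\tilde y = Sy - Y_{0,0}\,e_{n-1}$ and $\tilde r_Y = Sr_Y - Y_{0,0}\,e_{n-1}$, with $r_Y$ the top row of $Y$ read as a column vector.

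The remainder is a short calculation. The last column of $X^S$ is $Sx$, so $X^S e_{n-1} = Sx$; and the first column of $XY^T = 0$ gives $X r_Y = 0$, whence $X^S \tilde r_Y = SX(r_Y - Y_{0,0}\,e_0) = -Y_{0,0}\,Sx$. Therefore
\[
X^S B^T = (Sx)\tilde y^T + Y_{0,0}(Sx)\,e_{n-1}^T = (Sx)(\tilde y + Y_{0,0} e_{n-1})^T = (Sx)(Sy)^T.
\]
The $E$-identity follows from the same template: $E = Y_0 X_1^T = Y X_1^T$ (via $Y_1 X_1^T = 0$), the mirror correction $C := (X^S)_1 - X_1^S = e_{n-1}\tilde r_X^T - \tilde x\,e_{n-1}^T$ has the opposite sign pattern, and the orthogonality $Y r_X = 0$ (the first row of $XY^T = 0$) yields $Y^S C^T = -(Sy)(Sx)^T$, hence $E(Y^S)^{S^T} = E(Y) - yx^T$. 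The principal obstacle throughout is the explicit bookkeeping of $B$ and $C$ — tracking which entries the cyclic shift drags across the diagonal — after which the entire argument collapses onto the single orthogonality $XY^T = 0$.
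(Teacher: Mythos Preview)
Your proof is correct and follows essentially the same route as the paper: both rewrite $D$ as $XY_1^T$, compare $(Y^S)_1$ with the shift of $Y_1$, and use the first column of $XY^T=0$ to kill the extra row term. The paper organizes the computation through a $2\times 2$ block decomposition of $Y$ (separating off the first row and column), which makes the correction term visible in one line and avoids your index-level bookkeeping of $B$; but the substance is identical.
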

\begin{proof}
Let $Y = \left( \begin{array}{c|c} d &d_0 \\ \hline \noalign{\smallskip} d_1 &B \end{array}\right).$ Then 
\begin{align*}
D(X) = X Y_1^T = X \left( \begin{array}{c|c} 0 &d_0 \\ \hline \noalign{\smallskip} 0 &B_1 \end{array}\right)^T, ~~~~
&Y^S =  \left( \begin{array}{c|c} B &d_1 \\ \hline \noalign{\smallskip} d_0 &d \end{array}\right),
\end{align*}
And
\begin{align*}
D(X^S)^{S^T} &= (X^S)^{S^T} ((Y^S)_1^T)^{S^T} = X  S^T \left( \begin{array}{c|c} B_1 &d_1 \\ \hline \noalign{\smallskip} 0 &0 \end{array}\right)^T S =  
X \left( \begin{array}{c|c} 0 &0 \\ \hline \noalign{\smallskip} d_1 &B_1 \end{array}\right)^T \\
&= D(X) + X \left( \begin{array}{c|c} d &0 \\ \hline \noalign{\smallskip} d_1 &0 \end{array}\right)^T - X \left( \begin{array}{c|c} d &d_0 \\ \hline \noalign{\smallskip} 0 &0 \end{array}\right)^T = D(X) + x y^T.
\end{align*}
The claim for $E(Y^S)$ follows with $D+E^T = I.$
\end{proof}

\begin{theorem} \label{T:disp}
Let $S(Y^T|X)$ be the label code representing the rows of $X$ as a BCJR trellis with vertex labelings from the column space of $Y$. 
Let $N_0, N_1, \ldots, N_{n}$ be the submatrices of $S(Y^T|X)$ representing the vertex labelings. The characterization of $N_0 = E^T$ 
by Lemma \ref{L:disp1} as a matrix with nonzero rows in systematic form orthogonal to columns of $X$ applies to all $N_0, N_1, \ldots, N_{n}$.
\end{theorem}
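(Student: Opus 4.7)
My plan is to proceed by induction on $t$, with the base case $t=0$ given directly by Lemma \ref{L:disp1}. The BCJR update rule $v_{t+1} = v_t - c_t y_t$, where $y_t$ is the $t$-th column of $Y$ (the minus sign being fixed by (\ref{E:nsd}) so that trellis duality matches displacement duality), gives the matrix recursion
\[
N_{t+1} \;=\; N_t \,-\, X_{\cdot,t}\, Y^T_{t,\cdot}.
\]
Orthogonality $N_{t+1} X = 0$ is immediate from the inductive hypothesis together with $Y^T X = 0$ from Theorem \ref{T:col}: indeed $N_{t+1} X = N_t X - X_{\cdot,t}(Y^T X)_{t,\cdot} = 0$. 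By the same induction, each update row $Y^T_{t,\cdot}$ lies in the column space of $Y$, so the row space of $N_t$ remains contained in the column space of $Y$ throughout.

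The substantive part is preserving the systematic-basis structure. My plan is to use the cyclic-shift symmetry already developed in Lemma \ref{L:disp2}: at time $t$, the BCJR trellis for the pair $(X, Y)$ viewed with $V_t$ as the new origin is the time-$0$ trellis for the cyclically shifted dual pair $(X^{S^t}, Y^{S^t})$, again a pair of reduced dual characteristic matrices, now for the code cyclically shifted by $t$. Iterating Lemma \ref{L:disp2} yields the explicit identification
\[
N_t \;=\; S^{-t} \, E(X^{S^t}, Y^{S^t})^T \, S^{t},
\]
i.e., $N_t$ is $E^T$ for the shifted pair conjugated by the shift $S^t$. Lemma \ref{L:disp1} applied to the shifted pair then delivers the systematic-basis structure in the shifted coordinates, and conjugation by $S^t$ relabels the active-row set and the systematic-column set but preserves the identity minor. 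Concretely, the active-row set at time $t$ turns out to be $A_t = \{\, i : t \in (i, \sigma(i)] \,\}$, where $\sigma$ is the bijection encoding the spans of $X$, and the systematic columns coincide with $A_t$.

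The main obstacle is the cyclic-shift bookkeeping: under $S^t$ both row and column indexing permute, so identifying which index sets in the original coordinates play the roles of active rows and systematic columns at time $t$ needs care, especially at the transition times. An alternative, more hands-on inductive verification checks that the rank-one update $-X_{\cdot,t} Y^T_{t,\cdot}$ simultaneously zeroes out the row of $N_t$ indexed by $\sigma^{-1}(t)$ (the path whose span ends at $t$), introduces the row indexed by $t$ (the path whose span starts at $t$), and rotates the systematic identity from columns $A_t$ to columns $A_{t+1} = (A_t \setminus \{\sigma^{-1}(t)\}) \cup \{t\}$. In this direct approach, the duality identity $X_0 Y_1^T + X_1 Y_0^T = I$ is what controls the relevant entries of column $t$ of $X$ and column $t$ of $Y$ that effect the rotation, so that the two independent updates (removing one systematic row-column pair and adding another) combine into the single rank-one correction $-X_{\cdot,t} Y^T_{t,\cdot}$.
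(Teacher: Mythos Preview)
Your approach is essentially the paper's intended one. The paper does not write out a proof of Theorem~\ref{T:disp}; it places Lemma~\ref{L:disp1} (characterization of $E^T$) and Lemma~\ref{L:disp2} (rank-one update under cyclic shift) immediately before the theorem, and the theorem is meant to follow by iterating Lemma~\ref{L:disp2} and applying Lemma~\ref{L:disp1} to the shifted pair $(X^{S^t},Y^{S^t})$. Your identity $N_t = S^{-t}\,E(Y^{S^t})^T\,S^{t}$ is exactly the transposed and iterated form of Lemma~\ref{L:disp2}, and the BCJR recursion $N_{t+1}=N_t - X_{\cdot,t}Y_{\cdot,t}^T$ is the correct one (as the $\mathbb{F}_3$ example confirms).

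Two small remarks. First, the shifted pair $(X^{S^t},Y^{S^t})$ need not be \emph{reduced} (conjugation by $S$ does not respect the upper/lower decompositions), but it is still a pair of dual characteristic matrices: $Y^{S^t\,T}X^{S^t}=S^tY^TXS^{-t}=0$, so Theorem~\ref{T:col} gives duality, and Lemma~\ref{L:disp1} requires only duality, not reducedness. Second, your description of the active set $A_t$ and the systematic columns is the right picture, but note that in Lemma~\ref{L:disp1} it is the \emph{columns} of $E$ (hence the \emph{rows} of $N_0=E^T$) indexed by $J_0$ that are systematic; after conjugation by $S^t$ this index set becomes the $J_0$-set of the shifted code translated back, which is precisely the set of rows of $X$ whose spans are active across time $t$. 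Your alternative hands-on induction via the duality identity $X_0Y_1^T+X_1Y_0^T=I$ is also valid and amounts to unpacking the proof of Lemma~\ref{L:disp2} directly.
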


\begin{example}
\begin{align*}
Y^T = &\left[ \begin{array}{cccc}
1 &1 &1 &0 \\
0 &2 &1 &1 \\
1 &0 &2 &1 \\
2 &1 &0 &1 
\end{array} \right] ~~~~~~
X = \left[ \begin{array}{cccc}
2 &2 &1 &0 \\
0 &1 &1 &1 \\
1 &0 &1 &2 \\
2 &1 &0 &2 
\end{array} \right] 
\\
X^T = &\left[ \begin{array}{cccc}
2 &0 &1 &2 \\
2 &1 &0 &1 \\
1 &1 &1 &0 \\
0 &1 &2 &2 
\end{array} \right] ~~~~~~
Y = \left[ \begin{array}{cccc}
1 &0 &1 &2 \\
1 &2 &0 &1 \\
1 &1 &2 &0 \\
0 &1 &1 &1 
\end{array} \right]
\end{align*}

\begin{align*}
S(Y^T|X) = &\left[ \begin{array}{c|c|c|c|c|c|c|c|c}
0~0~0~0   &2   &1~1~1~0   &2   &1~0~2~1   &1   &0~0~0~0   &0   &0~0~0~0  \\
0~0~0~0   &0   &0~0~0~0   &1   &0~1~2~2   &1   &2~1~0~1   &1   &0~0~0~0  \\
1~1~1~0   &1   &0~0~0~0   &0   &0~0~0~0   &1   &2~0~1~2   &2   &1~1~1~0  \\
2~1~0~1   &2   &0~2~1~1   &1   &0~0~0~0   &0   &0~0~0~0   &2   &2~1~0~1    
\end{array} \right]
\\
S(X^T|Y) = &\left[ \begin{array}{c|c|c|c|c|c|c|c|c}
1~0~2~1   &1   &0~0~0~0   &0   &0~0~0~0   &1   &1~1~1~0   &1   &1~0~2~1  \\
0~1~2~2   &1   &2~1~0~1   &2   &0~0~0~0   &0   &0~0~0~0   &1   &0~1~2~2  \\
0~0~0~0   &1   &2~0~1~2   &1   &1~1~1~0   &2   &0~0~0~0   &0   &0~0~0~0  \\
0~0~0~0   &0   &0~0~0~0   &1   &2~1~0~1   &1   &0~2~1~1   &1   &0~0~0~0    
\end{array} \right]
\end{align*}
\end{example}

\subsection{Decompositions of the characteristic matrix}

In general, the matrices $X$ and $Y_1$ have minors of maximal full rank in the positions $I_0 \times I_1$. And the matrices $Y$ and $X_1$ have minors of maximal full rank in the positions $J_0 \times J_1$. From Lemma \ref{L:DE}, $X = X_0 Y_1^T X_0$ and $Y = Y_0 X_1^T Y_0$. With $X_0$ and $Y_0$  triangular and invertible, we recognize the format of a Bruhat decomposition (Section \ref{S:bruh}): Submatrices of $X$ and $Y_1^T$ of the same size and positioned in the south west corner are of same rank, and submatrices of $Y$ and $X_1^T$ of the same size and positioned in the north east corner are of same rank.

For reduced characteristic matrices and for the characteristic matrices of Theorem \ref{T:sred}, $X_1$ and $Y_1$ are zero outside the invertible minors. This is not the case in general. In general $Y_1$ can be nonzero outside the columns $I_1$ and the matrix $X_1$ can be nonzero outside the columns $J_1$. 

\begin{example}
\[
G = H =  \left[ \begin{array}{cccc}
1 &1 &0 &0 \\
0 &0 &1 &1
\end{array} \right]
\]
\[
X = \left[ \begin{array}{cccc}
1 &1 &0 &0 \\
1 &1 &1 &1 \\
0 &0 &1 &1 \\
1 &1 &1 &1 
\end{array} \right] = 
\left[ \begin{array}{cccc}
1 &1 &0 &0 \\
\dt &1 &1 &1 \\
\dt &\dt &1 &1 \\
\dt &\dt &\dt &1 
\end{array} \right] +
\left[ \begin{array}{cccc}
\dt &\dt &\dt &\dt \\
1 &\dt &\dt &\dt \\
0 &0 &\dt &\dt \\
1 &1 &1 &\dt 
\end{array} \right]
\qquad 
\begin{array}{l}
J_0 = \{ 2, 4 \} \\
J_1 = \{ 1, 3 \} \\
\end{array} 
\]
\[
Y = \left[ \begin{array}{cccc}
1 &1 &1 &1 \\
1 &1 &0 &0 \\
1 &1 &1 &1 \\
0 &0 &1 &1 
\end{array} \right] = 
\left[ \begin{array}{cccc}
\dt &1 &1 &1 \\
\dt &\dt &0 &0 \\
\dt &\dt &\dt &1 \\
\dt &\dt &\dt &\dt 
\end{array} \right] +
\left[ \begin{array}{cccc}
1 &\dt &\dt &\dt \\
1 &1 &\dt &\dt \\
1 &1 &1 &\dt \\
0 &0 &1 &1 
\end{array} \right]
\qquad 
\begin{array}{l}
I_0 = \{ 1, 3 \} \\
I_1 = \{ 2, 4 \} \\
\end{array} 
\]
\end{example}

In general 
\begin{align*}
&X = X [ n \times I_1 ] \times X[I_0 \times I_1]^{-1} \times X[ I_0 \times n ], \\
&Y = Y [ n \times J_1 ] \times Y[J_0 \times J_1]^{-1} \times Y[ J_0 \times n ].
\end{align*}
Let $W^T = X [ n \times I_1 ],$ $G = X[ I_0 \times n ],$ $V^T = Y [ n \times J_1 ],$ and $H = Y[ J_0 \times n ].$
In fact, as submatrices of $X$ and $Y$ containing the conventional generators, $G=G_{01}$ and $H=H_{10}$ are in reduced minimal span form.
The effect of the invertible minor is to bring $W^T$ and $V^T$ in systematic form in the positions $I_0$ and $J_0$.
Thus $X = W_0^T G_{01}$ and $Y = V_1^T H_{10}$.

\bigskip

In the special case where $X_1$ and $Y_1$ are zero outside their full rank minors, we have
\[
Y_1 = D_0 Y_1 D_1, ~~~~~X_1 = E_0 X_1 E_1,
\]
with $D_1$ and $E_1$ diagonal $0,1,$-matrices with support in $I_1$ and $J_1$, respectively. This is the case for reduced characteristic matrices and for the characteristic matrices in Theorem \ref{T:sred}. From $I = X_0 Y_1^T + X_1 Y_0^T = X Y_1^T + X_1 Y^T$,
$D_0^2  = (D_0 X D_1)(D_0 Y_1 D_1)^T$ and thus
\[
X[I_0 \times I_1] \times Y_1[I_0 \times I_1]^T = I.
\] 

\subsection{The transpose of a characteristic matrix} 

By Proposition \ref{P:abcd} the row span of a characteristic matrix $X$ is $|J_1|n$. 
 
\begin{lemma} Let $X$ be a left-ordered characteristic matrix with spanlength $|J_1| n$. The rows in $X^T$ have spans ending on the diagonal. The spanlength of $X^T$ is at least $|J_1|n$, with equality if and only if $X$ is reduced. 
Let $Y$ be a right-ordered characteristic matrix with spanlength $|I_1| n$. The rows in $Y^T$ have spans starting on the diagonal. The spanlength of $Y^T$ is at least $|I_1|n$, with equality if and only if $Y$ is reduced.   
\end{lemma}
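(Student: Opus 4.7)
I propose a three-step plan. First, the spans-end-on-the-diagonal claim is immediate: since $X_0$ is upper triangular with nonzero diagonal, $X[i,i]=X_0[i,i]\neq 0$, so position $i$ lies in the support of column $i$ of $X$, and the smallest cyclic interval with right endpoint $i$ containing this support is a valid span for the $i$th row of $X^T$. The companion assertion for $Y^T$ is symmetric, with starts on the diagonal in place of ends.

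Second, for the lower bound I would view the $n$ rows of $X^T$, equipped with the chosen spans, as $n$ elements of the column space $C':=\col X$, whose dimension is $\rk X=|I_0|$. By Corollary~\ref{C:span}, a characteristic matrix for $C'$ has total spanlength $(n-|I_0|)n=|J_1| n$, and this is the minimum over all systems of $n$ vectors in $C'$ with distinct starts and distinct ends. The rows of $X^T$ already have distinct ends (the diagonal positions $0,1,\ldots,n-1$) but possibly coinciding starts $\ell_i$. I would iterate the following reduction: whenever two rows $r_i,r_j$ share a start $s$, one of the cyclic spans $[s,i]$, $[s,j]$ is contained in the other (say $[s,i]\subsetneq[s,j]$), and replacing $r_j$ by $r_j-\alpha r_i$ for the unique $\alpha$ that kills the entry at $s$ preserves the end $j$ (because $r_i$ vanishes outside $[s,i]$, hence at $j$), preserves the row space, and strictly pushes the start of $r_j$ past $s$. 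On termination one has $n$ rows of $C'$ with distinct starts and ends, total spanlength no greater than the original, and hence at least $|J_1| n$; therefore so is the original spanlength of $X^T$.

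Third, equality forces both that the reduction be trivial (so that the $\ell_i$ are already distinct) and that the resulting distinct-starts-and-ends system attain the characteristic minimum; that is, $X^T$ itself is a characteristic matrix for $C'$. By Theorem~\ref{T:trans} this holds iff $X$ is reduced, and conversely Theorem~\ref{T:trans} together with Corollary~\ref{C:span} yields equality when $X$ is reduced. The main obstacle will be the cyclic bookkeeping in the reduction step: verifying that two cyclic spans sharing a common start are nested along the forward direction, that the subtraction does not alter $r_j$'s end, and that no row becomes zero during the iteration (true because two rows with distinct ends cannot differ by a scalar multiple, so the replacement is never identically zero).
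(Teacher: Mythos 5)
Your first claim (each row $i$ of $X^T$ has a span ending at $i$ because $X_0$ has nonzero diagonal) is fine, and your lower bound, while much heavier than necessary, is essentially sound: reducing coincident starts by triangular eliminations that cannot destroy the distinct diagonal ends, and then comparing with the characteristic minimum $(n-\rk X)n=|J_1|n$ of the column space via Definition \ref{D:char} and Corollary \ref{C:span}, does give $\operatorname{spanlength}(X^T)\ge |J_1|n$. (The paper gets the same bound in one line: the rows of $X$ end in distinct positions, so $i_0\mapsto i_1$ with spans $(i_0,i_1]$ is a bijection; since $X[i_1,i_1]\neq 0$ and $X[i_0,i_1]\neq 0$, the diagonal-ending span of row $i_1$ of $X^T$ must contain $(i_0,i_1]$, and summing over the bijection compares $X^T$ directly with $X$.)

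The genuine gap is in your third step: you settle the equality case by citing Theorem \ref{T:trans} in both directions, but in the paper Theorem \ref{T:trans} is deduced \emph{from} this lemma (its proof is literally ``the lemma shows that the spanlength \ldots increases \ldots unless the matrix is reduced''), so your argument is circular and proves nothing about when equality holds. The equality characterization has to be established directly, and your route makes this awkward: knowing only that the reduction was trivial and that $X^T$ attains the characteristic minimum tells you $X^T$ is a characteristic matrix for $\col X$, but the passage from that to ``$X$ is reduced'' is exactly the content of Theorem \ref{T:trans}. The direct argument, which the paired comparison above hands you for free, is: equality holds iff for every $i_1$ the diagonal-ending span of column $i_1$ equals $(i_0,i_1]$, i.e.\ iff each column $i_1$ of $X$ is supported (cyclically) in $[i_0,i_1]$; unwrapped, this says that every entry north of a trailing pivot of $\bigl(\begin{smallmatrix} X_0 & X_1\\ 0 & G\end{smallmatrix}\bigr)$ vanishes, which is precisely the right-reduced condition of Definitions \ref{D:rmsf} and \ref{D:red} (equivalently, the vanishing pattern recorded in Lemma \ref{L:minor}). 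No appeal to Theorem \ref{T:trans} is needed, and indeed none is permitted at this point in the paper's logical order.
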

\begin{proof}
Let $(i_0,i_1]$ be the span of a row $i_0$ in $X$. Then the span of row $i_1$ in $X^T$ includes $(i_0,i_1]$. The span is equal to $(i_0,i_1]$ for all $i_0$ if and only if $X$ is reduced. 
Similar for $Y$.
\end{proof}

\begin{theorem} \label{T:trans}
The transpose of a characteristic matrix is again a characteristic matrix if and only if the characteristic matrix is reduced. 
\begin{itemize}
\item[(X)] The transpose of a reduced left-ordered characteristic matrix $X$ is a right-ordered characteristic matrix $X^T$. A minimal span form for the row space of $X$ is given by
$X[I_0 \times n]$ and a minimal span form for the row space of $X^T$ by $X[n \times I_1]^T$. 
\item[(Y)] The transpose of a reduced right-ordered characteristic matrix $Y$ is a left-ordered characteristic matrix $Y^T$. A minimal span form for the row space of $Y$ is given by
$Y[J_0 \times n]$ and a minimal span form for the row space of $Y^T$ by $Y[n \times J_1]^T$.
\end{itemize} 
\end{theorem}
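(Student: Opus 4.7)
My plan is to deduce both directions of the biconditional from the preceding lemma together with the spanlength identity of Corollary \ref{C:span}. The key input is that $X^T$ has row space $\col X$ of dimension $|I_0|$, whose orthogonal complement (namely $\col Y$, by Theorem \ref{T:col}) has dimension $|J_1|$. Hence any right-ordered characteristic matrix for $\col X$ must have total spanlength exactly $|J_1|\,n$ by Corollary \ref{C:span}(Y).

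For the direction ($\Rightarrow$), if $X^T$ is a characteristic matrix then the observation above forces its total spanlength to be $|J_1|\,n$, and the preceding lemma then concludes that $X$ is reduced. For the direction ($\Leftarrow$), assume $X$ is reduced. The preceding lemma then gives that for each $i$, the row of $X^T$ indexed by $\sigma(i)$ has span exactly $(i,\sigma(i)]$. As $i$ ranges over $\{0,\dots,n-1\}$, these spans have distinct starting positions and distinct ending positions (the latter placing each endpoint on the diagonal, i.e. the matrix is right-ordered), and the total spanlength is again $|J_1|\,n$. Since this matches the minimal spanlength for any collection of $n$ vectors in $\col X$ with distinct span endpoints (once more via Corollary \ref{C:span}), Definition \ref{D:char} is satisfied and $X^T$ is a right-ordered characteristic matrix for $\col X$.

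For the two auxiliary claims, the rows of $X$ indexed by $I_0$ are precisely the non-wrapping (conventional) generators of $X$, with spans $(i,\sigma(i)]$ for $i\in I_0$ and $\sigma(i)\in I_1$; their distinct left and right pivots produce a minimal span form for $\row X$ via Lemma \ref{L:sp1}. Dually, using the span description of $X^T$ just established, the rows of $X^T$ indexed by $I_1$ (equivalently $X[n\times I_1]^T$) have spans $(\sigma^{-1}(i_1),i_1]$ with $\sigma^{-1}(i_1)\in I_0$, again non-wrapping, so the same lemma yields a minimal span form for $\col X=\row X^T$. Case (Y) follows identically by swapping the roles of $(X,I_0,I_1)$ and $(Y,J_0,J_1)$.

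The principal obstacle is conceptual rather than computational: the preceding lemma already captures that reducedness of $X$ is equivalent to every row of $X^T$ attaining the minimal possible span, so the remaining work is essentially to repackage this data into the form required by Definition \ref{D:char} and to locate the non-wrapping rows and columns of $X$ inside the characteristic matrix.
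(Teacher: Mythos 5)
Your proposal is correct and follows essentially the same route as the paper: both directions rest on the preceding lemma (the spanlength of $X^T$ is at least $|J_1|n$, with equality precisely when $X$ is reduced) combined with the minimal-spanlength count of Corollary \ref{C:span}. Your added detail on the spans $(i,\sigma(i)]$ and on locating $X[I_0\times n]$ and $X[n\times I_1]^T$ as minimal span forms merely spells out what the paper's shorter proof leaves implicit.
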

\begin{proof}
The lemma shows that the spanlength of a characteristic matrix increases when passing to the transpose unless the matrix is reduced. This proves the only if part. When the matrix is reduced taking the transpose preserves the set of characteristic spans and the spanlength, and the transpose is again a characteristic matrix. 
\end{proof}

To a reduced pair of dual characteristic matrices $X$ and $Y$ (with row spaces $G \perp H$) thus corresponds a reduced pair of dual characteristic matrices $Y^T$ and $X^T$ (with row spaces $V \perp W$). The matrices $G$ and $W$ are of the same size and share the same characteristic spans. The same for the matrices $H$ and $V$. 

\begin{example} There are three binary row spaces with characteristic spans $(0,3], (1,0], (2,4],$ $(3,1], (4,2]$. They are
\[
G = \left[ \begin{array}{ccccc} 1 &1 &0 &1 &0 \\ 0 &0 &1 &0 &1 \end{array} \right], ~~~
G' = \left[ \begin{array}{ccccc} 1 &1 &0 &1 &0 \\ 0 &0 &1 &1 &1 \end{array} \right], ~~~
G'' = \left[ \begin{array}{ccccc} 1 &1 &1 &1 &0 \\ 0 &0 &1 &0 &1 \end{array} \right].
\]
By the above $\{ X^T(G), X^T(G'), X^T(G'') \} = \{ Y(G), Y(G'), Y(G'') \}.$ In fact, $X^T(G) = Y(G),$ $X^T(G') = Y(G''), X^T(G'') = Y(G').$ Thus the matrices $G'$ and $G''$ describe the column spaces of each others reduced characteristic matrices.
\end{example}

Let $X$ have row space generated by $G$ and column space generated by $W^T$. The construction of a KV-trellis from a characteristic matrix requires a choice of independent generators from the rows of $X$. A set of rows in $X$ is independent if and only if the corresponding columns in $W$ are independent. 
Thus the set of possible choices is in bijection with the set of independent full minors of $W$, that is with the bases for the matroid of $W$. 

\begin{example} The only matrix in the previous example that admits a minimal trellis with $|V_i| \leq 2$ for all $i \in \{0,1,2,3,4 \}$ is $G'$. Namely such a trellis can only be formed with generators having spans $(2,4]$ and $(4,2]$. Those generators are independent if and only if, for $X^T(G) = Y(W)$, the columns $2$ and $4$ in $W$ are linearly independent. This occurs only if $W = G''$. The required trellis is generated by the rows $00111$ and $11101$.
\end{example}   

For reduced characteristic matrices $X$ and $Y$, the row spaces $G = X[I_0 \times n]$, $W = X[n \times I_1]^T$, $H = Y[J_0 \times n]$, and
$V = Y[n \times J_1]^T$ are all in minimal span form, with $G = G_{01},$ $W = W_{10},$ $H = H_{10}$, and $V = V_{01}.$ The following four matrices are in reduced minimal span form 
\[
\begin{array}{ccccc}
\left[ \begin{array}{c|c} X_0 &X_1 \\ \hline \noalign{\smallskip} 0 &G \end{array}\right]
&~~~
\left[ \begin{array}{c|c} W &0 \\ \hline \noalign{\smallskip} X_1^T &X_0^T \end{array}\right]
&~~~
\left[ \begin{array}{c|c} H &0 \\ \hline \noalign{\smallskip} Y_1 &Y_0 \end{array}\right]
&~~~
\left[ \begin{array}{c|c} Y_0^T &Y_1^T \\ \hline \noalign{\smallskip} 0 &V \end{array}\right]
\end{array}
\]
Moreover, for the systematic forms $G_1, W_0, H_0, V_1$,
\[
\begin{array}{ccccc}
X= W_0^T G_{01}
&~~
X^T = G_1^T W_{10}
&~~
Y= V_1^T H_{10}
&~~
Y^T = H_0^T V_{01}
\end{array}
\]

Circulant matrices are special cases of reduced characteristic matrices. In this case there is a clear relation between the row space and the column space of the circulant matrix. 

\begin{proposition}
Let $C'$ denote the reverse ordered version of a cyclic code $C$. Then $X(C)^T = Y(C')$ and $Y(C)^T = X(C')$.
\end{proposition}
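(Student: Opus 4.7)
The plan is to exploit the circulant structure of characteristic matrices for cyclic codes, reducing both identities to direct index arithmetic.

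First, I would write down explicit circulant forms for $X(C)$ and $Y(C)$. Extending the monic generator $g(x)$ of $C$ (of degree $n-k$) to a length-$n$ vector by setting $g_l = 0$ for $n-k < l < n$, take
\[
X(C)_{i,j} = g_{j-i \bmod n}, \qquad Y(C)_{i,j} = g_{j-i+(n-k) \bmod n}.
\]
To verify that these are characteristic matrices I would invoke Corollary \ref{C:span}: the $n$ cyclic shifts of $g$ have spans of length $n-k$ that start and end in distinct positions of $\{0,\ldots,n-1\}$, giving total spanlength $n(n-k) = |J_1| n$ and matching the lower bound. A short check then shows that the strictly lower triangular part $X_1$ of $X(C)$ is supported in $J_0 \times J_1$ with a triangular minor of constant diagonal $g_{n-k} \neq 0$, so $X(C)$ is moreover reduced in the sense of Definition \ref{D:red}, and likewise $Y(C)$.

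Next, I would identify the generator polynomial of the reversed code $C'$. The reversal of $g$ is a codeword of $C'$ supported in positions $\{k-1,\ldots,n-1\}$; a cyclic shift to start at position $0$ produces the monic generator $g'(x) = x^{n-k} g(x^{-1})$ of $C'$. With the same zero-extension convention, this unpacks to the compact identity
\[
g'_l = g_{(n-k)-l \bmod n} \qquad \text{for all } l \in \{0,1,\ldots,n-1\},
\]
which I would verify by cases on whether $l \leq n-k$ (where the formula agrees with $g'_l = g_{n-k-l}$ by direct inspection) or $l > n-k$ (where both sides vanish, since $(n-k) - l \bmod n$ lands in the zero tail of $g$).

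With these preparations in hand, both identities drop out by a single substitution:
\[
Y(C')_{i,j} = g'_{j-i+(n-k) \bmod n} = g_{(n-k)-(j-i+(n-k)) \bmod n} = g_{i-j \bmod n} = X(C)^T_{i,j},
\]
and symmetrically $X(C')_{i,j} = g'_{j-i \bmod n} = g_{(n-k)-(j-i) \bmod n} = g_{i-j+(n-k) \bmod n} = Y(C)^T_{i,j}$. The only real subtlety is bookkeeping the modular reductions consistently through the reversal $g \mapsto g'$ and the additive shift by $n-k$; once the compact formula for $g'$ is recorded, no genuine obstacle remains.
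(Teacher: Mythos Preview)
Your argument is correct, and the paper states this proposition without proof, so there is nothing to compare against. The explicit circulant formulas and the index computation are exactly the right approach.

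One cosmetic slip: you call $g'(x) = x^{n-k} g(x^{-1})$ the \emph{monic} generator of $C'$, but its leading coefficient is $g_0$ and its constant term is $g_{n-k}=1$, so it is monic only when $g_0=1$. This does not affect anything downstream, since you immediately pass to the coefficient identity $g'_l = g_{(n-k)-l \bmod n}$ and never use monicity again; the matrices you write down for $X(C)^T$ and $Y(C')$ both carry $g_0$ on the diagonal and match entrywise regardless. Just drop the word ``monic'' (or note that $g'$ generates $C'$ as a polynomial of degree $n-k$, which is all you need).

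A second point worth tightening: showing that $X_1$ is supported in $J_0 \times J_1$ is necessary for $X(C)$ to be reduced but not obviously sufficient from Lemma~\ref{L:minor} alone. The cleanest way to close this is to observe directly that in the unwrapped matrix $(X_0 \mid X_1)$ every entry strictly above a trailing pivot vanishes (row $i'<i$ at column $i+(n-k)$ reads $g_{(i-i')+(n-k)}=0$), which is the right-reduced condition of Definition~\ref{D:rmsf}. This is a one-line check in the circulant case, and the paper itself asserts without proof that circulants are reduced, so you are not skipping anything the paper supplies.
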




\appendix

\section{A dual pair of reduced characteristic matrices} \label{S:binex}

Reduced characteristic matrices $X$ and $Y$ for a pair of binary orthogonal matrices $G$ and $H$.

{\small
\[
\begin{array}{ccm{2.75in}l}
\left[ \begin{array}{c|c} X_0 &X_1 \\ \hline 0 &G \end{array}\right] 
&=
&\begin{tikzpicture}
	[scale=.5,auto=center]
	\matrix [matrix of math nodes,ampersand replacement=\&,
                    left delimiter={[},right delimiter={]},row sep = .1em,column sep = .2em,text width=.7em,text height=.7em] (m)
        {
		1\&\;1\&\;1\&\;0\&\;0\&\;\zz\&\;\zz\&\;\zz\&\;\zz\&\;\zz\\ 
		\zz\&\;1\&\;1\&\;0\&\;1\&\;0\&\;\zz\&\;\zz\&\;\zz\&\;\zz\\ 
		\zz\&\;\zz\&\;1\&\;1\&\;0\&\;0\&\;0\&\;\zz\&\;\zz\&\;\zz\\ 
		\zz\&\;\zz\&\;\zz\&\;1\&\;1\&\;0\&\;1\&\;0\&\;\zz\&\;\zz\\ 
		\zz\&\;\zz\&\;\zz\&\;\zz\&\;1\&\;1\&\;0\&\;0\&\;0\&\;\zz\\
		\zz\&\;\zz\&\;\zz\&\;\zz\&\;\zz\&\;1\&\;1\&\;1\&\;0\&\;0 \\ 
		\zz\&\;\zz\&\;\zz\&\;\zz\&\;\zz\&\;\zz\&\;1\&\;1\&\;0\&\;1 \\ 
		\zz\&\;\zz\&\;\zz\&\;\zz\&\;\zz\&\;\zz\&\;\zz\&\;1\&\;1\&\;0  \\
        };  
        \draw[color=black] (m-1-6.north west) -- (m-8-6.south west);
        \draw[color=black] (m-5-1.south west) -- (m-5-10.south east);
        \draw[color=black,dashed] (m-4-4.north west) -- (m-4-6.north west) -- (m-5-6.south west) -- (m-5-4.south west) -- (m-4-4.north west);
        \draw[color=black,dashed] (m-4-6.north west) -- (m-4-8.north west) -- (m-5-8.south west) -- (m-5-6.south west) -- (m-4-6.north west);
        \draw[color=black] (m-1-1.north west) -- (m-1-4.north west) -- (m-4-4.north west) -- (m-4-1.north west) -- (m-1-1.north west);
        \draw[color=black] (m-5-8.south west) -- (m-5-10.south east) -- (m-8-10.south east) -- (m-8-8.south west) -- (m-6-8.north west);
\end{tikzpicture}
& \\  \noalign{\smallskip}
&\perp
&\begin{tikzpicture}
	[scale=.5,auto=center]
	\matrix [matrix of math nodes,ampersand replacement=\&,
                    left delimiter={[},right delimiter={]},row sep = .1em,column sep = .2em,text width=.7em,text height=.7em] (m)
        {
		0\&\;1\&\;1\&\;1\&\;0\&\;0\&\;1\&\;1\&\;1\&\;\zz\\
		1\&\;0\&\;1\&\;1\&\;1\&\;1\&\;0\&\;1\&\;1\&\;1\\
        };  
        \draw[color=black] (m-1-6.north west) -- (m-2-6.south west);
        \draw[color=black] (m-1-4.north west) -- (m-1-6.north west) -- (m-2-6.south west) -- (m-2-4.south west) -- (m-1-4.north west);
        \draw[color=black] (m-1-6.north west) -- (m-1-8.north west) -- (m-2-8.south west) -- (m-2-6.south west) -- (m-1-6.north west);
\end{tikzpicture}
&=~ \left[ ~H~|~H~ \right]. \\ \noalign{\bigskip}
\left[ ~G~|~G~ \right]
&=
&\begin{tikzpicture}
	[scale=.5,auto=center]
	\matrix [matrix of math nodes,ampersand replacement=\&,
                    left delimiter={[},right delimiter={]},row sep = .1em,column sep = .2em,text width=.7em,text height=.7em] (m)
        {
		1\&\;1\&\;1\&\;0\&\;0\&\;1\&\;1\&\;1\&\;\zz\&\;\zz\\ 
		\zz\&\;1\&\;1\&\;0\&\;1\&\;0\&\;1\&\;1\&\;0\&\;1\\ 
		\zz\&\;\zz\&\;1\&\;1\&\;0\&\;0\&\;0\&\;1\&\;1\&\;0\\ 
        };  
        \draw[color=black] (m-1-6.north west) -- (m-3-6.south west);
        \draw[color=black] (m-1-3.north west) -- (m-1-6.north west) -- (m-3-6.south west) -- (m-3-3.south west) -- (m-1-3.north west);
        \draw[color=black] (m-1-6.north west) -- (m-1-9.north west) -- (m-3-9.south west) -- (m-3-6.south west) -- (m-1-6.north west);
\end{tikzpicture} 
& \\  \noalign{\smallskip}
&\perp
&\begin{tikzpicture}
	[scale=.5,auto=center]
	\matrix [matrix of math nodes,ampersand replacement=\&,
                    left delimiter={[},right delimiter={]},row sep = .1em,column sep = .2em,text width=.7em,text height=.7em] (m)
        {
		0\&\;1\&\;1\&\;1\&\;\zz\&\;\zz\&\;\zz\&\;\zz\&\;\zz\&\;\zz\\
		1\&\;0\&\;1\&\;1\&\;1\&\;\zz\&\;\zz\&\;\zz\&\;\zz\&\;\zz\\\hline
		\zz\&\;0\&\;1\&\;1\&\;1\&\;1\&\;\zz\&\;\zz\&\;\zz\&\;\zz\\ 
		\zz\&\;\zz\&\;0\&\;0\&\;1\&\;1\&\;1\&\;\zz\&\;\zz\&\;\zz\\ 
		\zz\&\;\zz\&\;\zz\&\;1\&\;0\&\;0\&\;1\&\;1\&\;\zz\&\;\zz\\ 
		\zz\&\;\zz\&\;\zz\&\;\zz\&\;0\&\;0\&\;1\&\;1\&\;1\&\;\zz\\ 
		\zz\&\;\zz\&\;\zz\&\;\zz\&\;\zz\&\;1\&\;0\&\;1\&\;1\&\;1 \\
        };  
        \draw[color=black] (m-1-6.north west) -- (m-7-6.south west);
        \draw[color=black] (m-3-1.north west) -- (m-3-10.north east);
        \draw[color=black,dashed] (m-3-3.north west) -- (m-3-6.north west) -- (m-5-6.south west) -- (m-5-3.south west) -- (m-3-3.north west);
        \draw[color=black,dashed] (m-3-6.north west) -- (m-3-9.north west) -- (m-5-9.south west) -- (m-5-6.south west) -- (m-3-6.north west);
        \draw[color=black] (m-1-1.north west) -- (m-1-3.north west) -- (m-3-3.north west) -- (m-3-1.north west) -- (m-1-1.north west);
        \draw[color=black] (m-5-9.south west) -- (m-5-10.south east) -- (m-7-10.south east) -- (m-7-9.south west) -- (m-5-9.south west);
\end{tikzpicture}
&=~ \left[ \begin{array}{c|c} H &0 \\ \hline \noalign{\smallskip} Y_1 &Y_0 \end{array}\right]
\end{array}
\]
}
\bigskip
{\small
\[
X = 
\left[ \begin {array}{ccccc} 
1&1&1&0&0\\ 
\dt&1&1&0&1\\ 
\dt&\dt&1&1&0\\ 
\dt&\dt&\dt&1&1\\ 
\dt&\dt&\dt&\dt&1
\end {array} \right] +
\left[ \begin {array}{ccccc} 
\dt&\dt&\dt&\dt&\dt\\ 
0&\dt&\dt&\dt&\dt\\ 
0&0&\dt&\dt&\dt\\ 
0&1&0&\dt&\dt\\ 
1&0&0&0&\dt
\end {array} \right] 
=
\left[ \begin {array}{ccccc} 
1&1&1&0&0\\ 
0&1&1&0&1\\ 
0&0&1&1&0\\ 
0&1&0&1&1\\ 
1&0&0&0&1\end {array} \right]
~~\begin {array}{ccccc} 
(0,2] \\ 
(1,4] \\ 
(2,3] \\ 
(3,1] \\ 
(4,0] \end {array}
\]
\medskip
\[
Y  =
\left[ \begin {array}{ccccc} 
\dt&0&1&1&1\\ 
\dt&\dt&0&0&1\\ 
\dt&\dt&\dt&1&0\\ 
\dt&\dt&\dt&\dt&0\\ 
\dt&\dt&\dt&\dt&\dt\end {array} \right] +
\left[ \begin {array}{ccccc} 
1&\dt&\dt&\dt&\dt\\ 
1&1&\dt&\dt&\dt\\ 
0&1&1&\dt&\dt\\ 
0&1&1&1&\dt \\ 
1&0&1&1&1\end {array} \right]
= 
\left[ \begin {array}{ccccc} 
1&0&1&1&1\\ 
1&1&0&0&1\\ 
0&1&1&1&0\\ 
0&1&1&1&0\\ 
1&0&1&1&1\end {array} \right]
~~\begin {array}{ccccc} 
(2,0] \\ 
(4,1] \\ 
(3,2] \\ 
(1,3] \\ 
(0,4] \end {array}
\]
}

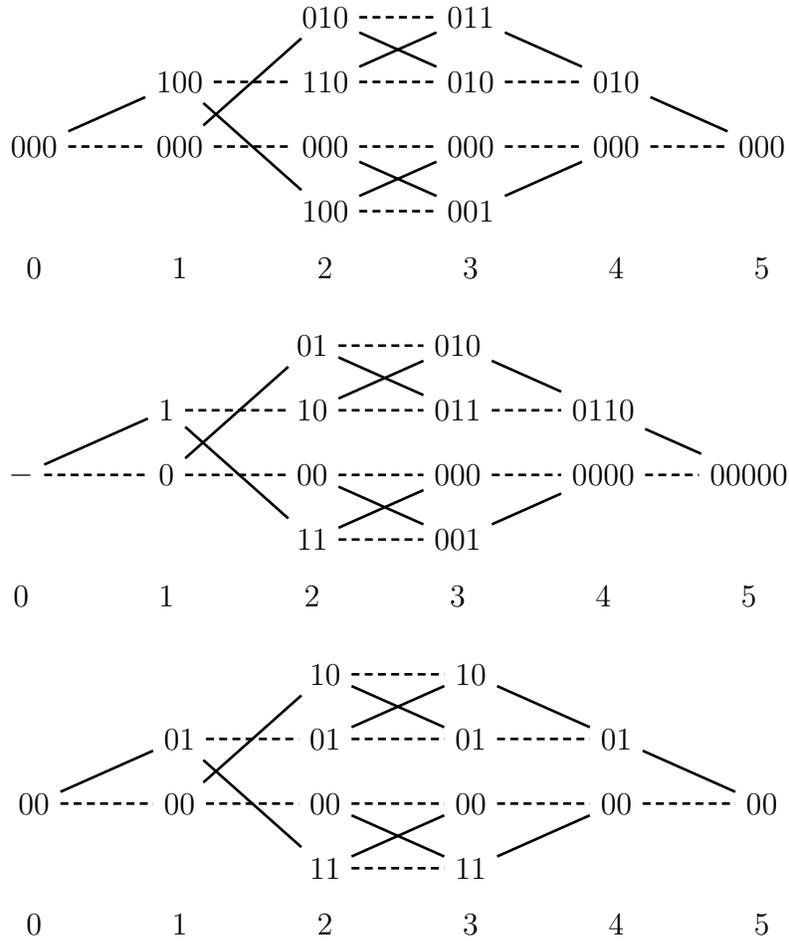
\begin{figure} \label{F:conv3}
{
\begin{center}
\begin{tikzpicture}
[scale=.43,auto=center,every node/.style={minimum size=0em}]

\node[right] at (21,10) {$G = \left( \begin{array}{ccccc} 1 &1 &1 &0 &0 \\ 0 &1 &1 &0 &1 \\ 0 &0 &1 &1 &0 \end{array} \right)$};
\node[left] at (43.5,10) {$H = \left( \begin{array}{ccccc}  0 &1 &1 &1 &0  \\ 1 &0 &1 &1 &1  \end{array} \right)$};
\node at (32.25,10.2) {$\perp$};

\newcommand{\y}{-2.75}
\newcommand{\x}{2}
\newcommand{\z}{16.5}
\newcommand{\m}{1.5}
\node[] at (\z+3*\m,\y) {0};
\node[] at (\z+6*\m,\y) {1};
\node[] at (\z+9*\m,\y) {2};
\node[] at (\z+12*\m,\y) {3};
\node[] at (\z+15*\m,\y) {4};
\node[] at (\z+18*\m,\y) {5};



\node (s0) at (\z+3*\m,1) {$000$};
\node (s1) at (\z+6*\m,1) {$000$};
\node (t1) at (\z+6*\m,3) {$100$};
\node (s2) at (\z+9*\m,1) {$000$};
\node (t2) at (\z+9*\m,5) {$010$};
\node (u2) at (\z+9*\m,-1) {$100$};
\node (v2) at (\z+9*\m,3) {$110$};
\node (s3) at (\z+12*\m,1) {$000$};
\node (t3) at (\z+12*\m,-1) {$001$};
\node (u3) at (\z+12*\m,5) {$011$};
\node (v3) at (\z+12*\m,3) {$010$};
\node (s4) at (\z+15*\m,1) {$000$};
\node (t4) at (\z+15*\m,3) {$010$};
\node  (s5) at (\z+18*\m,1) {$000$};

 \path
(s0) edge [-,line width=1pt,densely dashed,color=black] (s1)
(s0) edge [-,line width=1pt,color=black] (t1)
(s1) edge [-,line width=1pt,densely dashed,color=black] (s2)
(s1) edge [-,line width=1pt,color=black] (t2)
(t1) edge [-,line width=1pt,densely dashed,color=black] (v2)
(t1) edge [-,line width=1pt,color=black] (u2)
(s2) edge [-,line width=1pt,densely dashed,color=black] (s3)
(s2) edge [-,line width=1pt,color=black] (t3)
(v2) edge [-,line width=1pt,densely dashed,color=black] (v3)
(v2) edge [-,line width=1pt,color=black] (u3)
(u2) edge [-,line width=1pt,color=black] (s3)
(u2) edge [-,line width=1pt,densely dashed,color=black] (t3)
(t2) edge [-,line width=1pt,color=black] (v3)
(t2) edge [-,line width=1pt,densely dashed,color=black] (u3)
(s3) edge [-,line width=1pt,densely dashed,color=black] (s4)
(u3) edge [-,line width=1pt,color=black] (t4)
(v3) edge [-,line width=1pt,densely dashed,color=black] (t4)
(t3) edge [-,line width=1pt,color=black] (s4)
(t4) edge [-,line width=1pt,color=black] (s5)
(s4) edge [-,line width=1pt,densely dashed,color=black] (s5);

\end{tikzpicture}
\end{center}

\begin{center}
\begin{tikzpicture}
[scale=.43,auto=center,every node/.style={minimum size=0em}]


\newcommand{\y}{-2.75}
\newcommand{\x}{2}
\newcommand{\z}{16.5}
\newcommand{\m}{1.5}
\node[] at (\z+3*\m,\y) {0};
\node[] at (\z+6*\m,\y) {1};
\node[] at (\z+9*\m,\y) {2};
\node[] at (\z+12*\m,\y) {3};
\node[] at (\z+15*\m,\y) {4};
\node[] at (\z+18*\m,\y) {5};



\node (s0) at (\z+3*\m,1) {$-$};
\node (s1) at (\z+6*\m,1) {$0$};
\node (t1) at (\z+6*\m,3) {$1$};
\node (s2) at (\z+9*\m,1) {$00$};
\node (t2) at (\z+9*\m,5) {$01$};
\node (u2) at (\z+9*\m,-1) {$11$};
\node (v2) at (\z+9*\m,3) {$10$};
\node (s3) at (\z+12*\m,1) {$000$};
\node (t3) at (\z+12*\m,-1) {$001$};
\node (u3) at (\z+12*\m,5) {$010$};
\node (v3) at (\z+12*\m,3) {$011$};
\node (s4) at (\z+15*\m,1) {$0000$};
\node (t4) at (\z+15*\m,3) {$0110$};
\node  (s5) at (\z+18*\m,1) {$00000$};

 \path
(s0) edge [-,line width=1pt,densely dashed,color=black] (s1)
(s0) edge [-,line width=1pt,color=black] (t1)
(s1) edge [-,line width=1pt,densely dashed,color=black] (s2)
(s1) edge [-,line width=1pt,color=black] (t2)
(t1) edge [-,line width=1pt,densely dashed,color=black] (v2)
(t1) edge [-,line width=1pt,color=black] (u2)
(s2) edge [-,line width=1pt,densely dashed,color=black] (s3)
(s2) edge [-,line width=1pt,color=black] (t3)
(v2) edge [-,line width=1pt,densely dashed,color=black] (v3)
(v2) edge [-,line width=1pt,color=black] (u3)
(u2) edge [-,line width=1pt,color=black] (s3)
(u2) edge [-,line width=1pt,densely dashed,color=black] (t3)
(t2) edge [-,line width=1pt,color=black] (v3)
(t2) edge [-,line width=1pt,densely dashed,color=black] (u3)
(s3) edge [-,line width=1pt,densely dashed,color=black] (s4)
(u3) edge [-,line width=1pt,color=black] (t4)
(v3) edge [-,line width=1pt,densely dashed,color=black] (t4)
(t3) edge [-,line width=1pt,color=black] (s4)
(t4) edge [-,line width=1pt,color=black] (s5)
(s4) edge [-,line width=1pt,densely dashed,color=black] (s5);

\end{tikzpicture}
\end{center}

\begin{center}
\begin{tikzpicture}
[scale=.43,auto=center,every node/.style={minimum size=0em}]


\newcommand{\y}{-2.75}
\newcommand{\x}{2}
\newcommand{\z}{16.5}
\newcommand{\m}{1.5}
\node[] at (\z+3*\m,\y) {0};
\node[] at (\z+6*\m,\y) {1};
\node[] at (\z+9*\m,\y) {2};
\node[] at (\z+12*\m,\y) {3};
\node[] at (\z+15*\m,\y) {4};
\node[] at (\z+18*\m,\y) {5};



\node (s0) at (\z+3*\m,1) {$00$};
\node (s1) at (\z+6*\m,1) {$00$};
\node (t1) at (\z+6*\m,3) {$01$};
\node (s2) at (\z+9*\m,1) {$00$};
\node (t2) at (\z+9*\m,5) {$10$};
\node (u2) at (\z+9*\m,-1) {$11$};
\node (v2) at (\z+9*\m,3) {$01$};
\node (s3) at (\z+12*\m,1) {$00$};
\node (t3) at (\z+12*\m,-1) {$11$};
\node (u3) at (\z+12*\m,5) {$10$};
\node (v3) at (\z+12*\m,3) {$01$};
\node (s4) at (\z+15*\m,1) {$00$};
\node (t4) at (\z+15*\m,3) {$01$};
\node  (s5) at (\z+18*\m,1) {$00$};

 \path
(s0) edge [-,line width=1pt,densely dashed,color=black] (s1)
(s0) edge [-,line width=1pt,color=black] (t1)
(s1) edge [-,line width=1pt,densely dashed,color=black] (s2)
(s1) edge [-,line width=1pt,color=black] (t2)
(t1) edge [-,line width=1pt,densely dashed,color=black] (v2)
(t1) edge [-,line width=1pt,color=black] (u2)
(s2) edge [-,line width=1pt,densely dashed,color=black] (s3)
(s2) edge [-,line width=1pt,color=black] (t3)
(v2) edge [-,line width=1pt,densely dashed,color=black] (v3)
(v2) edge [-,line width=1pt,color=black] (u3)
(u2) edge [-,line width=1pt,color=black] (s3)
(u2) edge [-,line width=1pt,densely dashed,color=black] (t3)
(t2) edge [-,line width=1pt,color=black] (v3)
(t2) edge [-,line width=1pt,densely dashed,color=black] (u3)
(s3) edge [-,line width=1pt,densely dashed,color=black] (s4)
(u3) edge [-,line width=1pt,color=black] (t4)
(v3) edge [-,line width=1pt,densely dashed,color=black] (t4)
(t3) edge [-,line width=1pt,color=black] (s4)
(t4) edge [-,line width=1pt,color=black] (s5)
(s4) edge [-,line width=1pt,densely dashed,color=black] (s5);

\end{tikzpicture}
\end{center}
\caption{Three different labelings for the minimal conventional trellises of the row space of $G$: vertex labeling by information symbols (top), by codeword symbols (middle), and by syndromes (bottom)} 
} 
\end{figure}


\begin{figure}
\begin{center}
\begin{tikzpicture}
[scale=.43,auto=center,every node/.style={minimum size=0em}]

\node[left] at (14.5,2) {$G = \left( \begin{array}{ccccc} 1 &1 &1 &0 &0 \\ 0 &1 &1 &0 &1 \\ 0 &0 &1 &1 &0 \end{array} \right)$};

\newcommand{\y}{-2.75}
\newcommand{\x}{2}
\newcommand{\z}{16.5}

\newcommand{\w}{6.75}
\node[] at (\z+3+1.5,\w) {01};
\node[] at (\z+6+1.5,\w) {10};
\node[] at (\z+9+1.5,\w) {11};
\node[] at (\z+12+1.5,\w) {11};
\node[] at (\z+15+1.5,\w) {01};

\node[left] at (\z+\x,-1){11};
\node[left] at (\z+\x,1) {00};
\node[left] at (\z+\x,3) {01};
\node[left] at (\z+\x,5) {10};

\node (s0) at (\z+3,1) {$\bullet$};
\node (s1) at (\z+6,1) {$\bullet$};
\node (t1) at (\z+6,3) {$\bullet$};
\node (s2) at (\z+9,1) {$\bullet$};
\node (t2) at (\z+9,5) {$\bullet$};
\node (u2) at (\z+9,-1) {$\bullet$};
\node (v2) at (\z+9,3) {$\bullet$};
\node (s3) at (\z+12,1) {$\bullet$};
\node (t3) at (\z+12,-1) {$\bullet$};
\node (u3) at (\z+12,5) {$\bullet$};
\node (v3) at (\z+12,3) {$\bullet$};
\node (s4) at (\z+15,1) {$\bullet$};
\node (t4) at (\z+15,3) {$\bullet$};
\node  (s5) at (\z+18,1) {$\bullet$};

 \path
(s0) edge [-,line width=1pt,densely dashed,color=black] (s1)
(s0) edge [-,line width=1pt,color=black] (t1)
(s1) edge [-,line width=1pt,densely dashed,color=black] (s2)
(s1) edge [-,line width=1pt,color=black] (t2)
(t1) edge [-,line width=1pt,densely dashed,color=black] (v2)
(t1) edge [-,line width=1pt,color=black] (u2)
(s2) edge [-,line width=1pt,densely dashed,color=black] (s3)
(s2) edge [-,line width=1pt,color=black] (t3)
(v2) edge [-,line width=1pt,densely dashed,color=black] (v3)
(v2) edge [-,line width=1pt,color=black] (u3)
(u2) edge [-,line width=1pt,color=black] (s3)
(u2) edge [-,line width=1pt,densely dashed,color=black] (t3)
(t2) edge [-,line width=1pt,color=black] (v3)
(t2) edge [-,line width=1pt,densely dashed,color=black] (u3)
(s3) edge [-,line width=1pt,densely dashed,color=black] (s4)
(u3) edge [-,line width=1pt,color=black] (t4)
(v3) edge [-,line width=1pt,densely dashed,color=black] (t4)
(t3) edge [-,line width=1pt,color=black] (s4)
(t4) edge [-,line width=1pt,color=black] (s5)
(s4) edge [-,line width=1pt,densely dashed,color=black] (s5);


\renewcommand{\y}{-8.5}
\renewcommand{\w}{-2.75}

\node[left] at (14.5,2+\y) {$H = \left( \begin{array}{ccccc}  0 &1 &1 &1 &0  \\ 1 &0 &1 &1 &1  \end{array} \right)$};

\node[] at (\z+3+1.5,\w+\y) {100};
\node[] at (\z+6+1.5,\w+\y) {110};
\node[] at (\z+9+1.5,\w+\y) {111};
\node[] at (\z+12+1.5,\w+\y) {001};
\node[] at (\z+15+1.5,\w+\y) {010};

\node[right] at (\z+\x+17,-1+\y) {110};
\node[right] at (\z+\x+17,0+\y) {001};
\node[right] at (\z+\x+17,1+\y) {000};

\node[right] at (\z+\x+17,3+\y) {100};
\node[right] at (\z+\x+17,4+\y) {011};
\node[right] at (\z+\x+17,5+\y) {010};

\node (s0) at (\z+3,1+\y) {$\bullet$};

\node (s1) at (\z+6,1+\y) {$\bullet$};
\node (t1) at (\z+6,3+\y) {$\bullet$};

\node (s2) at (\z+9,1+\y) {$\bullet$};
\node (t2) at (\z+9,-1+\y) {$\bullet$};
\node (u2) at (\z+9,3+\y) {$\bullet$};
\node (v2) at (\z+9,5+\y) {$\bullet$};

\node (s3) at (\z+12,1+\y) {$\bullet$};
\node (t3) at (\z+12,0+\y) {$\bullet$};
\node (u3) at (\z+12,4+\y) {$\bullet$};
\node (v3) at (\z+12,5+\y) {$\bullet$};

\node (s4) at (\z+15,1+\y) {$\bullet$};
\node (t4) at (\z+15,5+\y) {$\bullet$};

\node  (s5) at (\z+18,1+\y) {$\bullet$};

 \path
(s0) edge [-,line width=1pt,densely dashed,color=black] (s1)
(s0) edge [-,line width=1pt,color=black] (t1)
(s1) edge [-,line width=1pt,densely dashed,color=black] (s2)
(s1) edge [-,line width=1pt,color=black] (t2)
(t1) edge [-,line width=1pt,densely dashed,color=black] (u2)
(t1) edge [-,line width=1pt,color=black] (v2)
(s2) edge [-,line width=1pt,densely dashed,color=black] (s3)
(v2) edge [-,line width=1pt,densely dashed,color=black] (v3)
(u2) edge [-,line width=1pt,color=black] (u3)
(t2) edge [-,line width=1pt,color=black] (t3)
(s3) edge [-,line width=1pt,densely dashed,color=black] (s4)
(u3) edge [-,line width=1pt,color=black] (t4)
(v3) edge [-,line width=1pt,densely dashed,color=black] (t4)
(t3) edge [-,line width=1pt,color=black] (s4)
(t4) edge [-,line width=1pt,color=black] (s5)
(s4) edge [-,line width=1pt,densely dashed,color=black] (s5);

\end{tikzpicture}
\end{center}
\caption{A pair of dual conventional trellises, based on rows $X_{1,2,3}$ and $Y_{4,5}$} \label{fig:tr}
\end{figure}
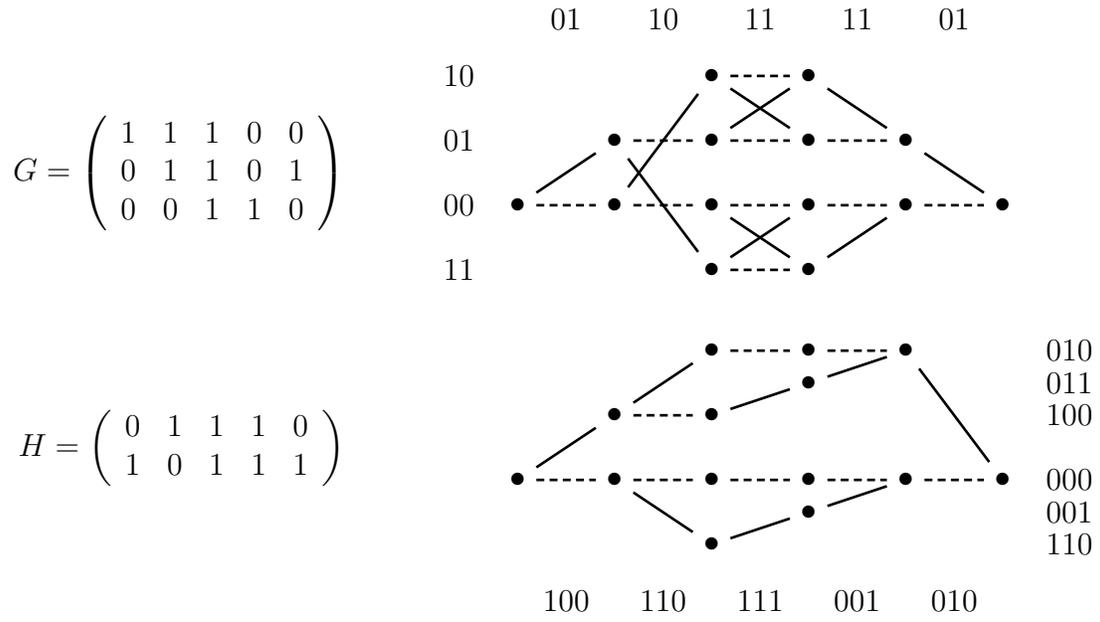

\begin{figure}
\begin{center}
\begin{tikzpicture}
[scale=.43,auto=center,every node/.style={minimum size=0em}]

\node[left] at (14.5,1) {$G = \left( \begin{array}{lcccr} 1 &1 &1 &0 &0 \\ 1] &0 &0 &0 &(1 \\ 0 &0 &1 &1 &0 \end{array} \right)$};

\newcommand{\y}{-2.75}
\newcommand{\x}{2}
\newcommand{\z}{16.5}

\newcommand{\w}{4.75}
\node[] at (\z+3+1.5,\w) {01};
\node[] at (\z+6+1.5,\w) {11};
\node[] at (\z+9+1.5,\w) {10};
\node[] at (\z+12+1.5,\w) {10};
\node[] at (\z+15+1.5,\w) {01};

\node[left] at (\z+\x,-1){10};
\node[left] at (\z+\x,1) {00};
\node[left] at (\z+\x,3) {01};

\renewcommand{\y}{0}

\node (s0) at (\z+3,1+\y) {$\bullet$};
\node (t0) at (\z+3,3+\y) {$\bullet$};

\node (s1) at (\z+6,1+\y) {$\bullet]$};
\node (t1) at (\z+6,3+\y) {$\bullet$};

\node (s2) at (\z+9,1+\y) {$\bullet$};
\node (u2) at (\z+9,-1+\y) {$\bullet$};

\node (s3) at (\z+12,1+\y) {$\bullet$};
\node (t3) at (\z+12,-1+\y) {$\bullet$};

\node (s4) at (\z+15,1+\y) {$(\bullet$};

\node (s5) at (\z+18,1+\y) {$\bullet$};
\node (t5) at (\z+18,3+\y) {$\bullet$};
 
\path
(s0) edge [-,line width=1pt,densely dashed,color=black] (s1)
(s0) edge [-,line width=1pt,color=black] (t1)
(t0) edge [-,line width=1pt,densely dashed,color=black] (t1)
(t0) edge [-,line width=1pt,color=black] (s1)
(s1) edge [-,line width=1pt,densely dashed,color=black] (s2)
(t1) edge [-,line width=1pt,color=black] (u2)
(s2) edge [-,line width=1pt,densely dashed,color=black] (s3)
(s2) edge [-,line width=1pt,color=black] (t3)
(u2) edge [-,line width=1pt,color=black] (s3)
(u2) edge [-,line width=1pt,densely dashed,color=black] (t3)
(s3) edge [-,line width=1pt,densely dashed,color=black] (s4)
(t3) edge [-,line width=1pt,color=black] (s4)
(s4) edge [-,line width=1pt,densely dashed,color=black] (s5)
(s4) edge [-,line width=1pt,color=black] (t5);

\renewcommand{\y}{-6.5}
\renewcommand{\w}{-2.75}

\node[left] at (14.5,1+\y) {$H = \left( \begin{array}{clccr}  0 &1 &1 &1 &0  \\ 1 &1] &0 &0 &(1  \end{array} \right)$};

\node[] at (\z+3+1.5,\w+\y) {110};
\node[] at (\z+6+1.5,\w+\y) {100};
\node[] at (\z+9+1.5,\w+\y) {101};
\node[] at (\z+12+1.5,\w+\y) {001};
\node[] at (\z+15+1.5,\w+\y) {010};

\node[right] at (\z+\x+17,-1+\y) {100};
\node[right] at (\z+\x+17,0+\y) {001};
\node[right] at (\z+\x+17,1+\y) {000};

\node[right] at (\z+\x+17,3+\y) {010};

\node (s0) at (\z+3,3+\y) {$\bullet$};
\node (t0) at (\z+3,1+\y) {$\bullet$};

\node (s1) at (\z+6,-1+\y) {$\bullet$};
\node (t1) at (\z+6,1+\y) {$\bullet$};

\node (s2) at (\z+9,1+\y) {$\bullet]$};
\node (t2) at (\z+9,-1+\y) {$\bullet$};

\node (s3) at (\z+12,1+\y) {$\bullet$};
\node (t3) at (\z+12,0+\y) {$\bullet$};

\node (s4) at (\z+15,1+\y) {$(\bullet$};

\node  (s5) at (\z+18,3+\y) {$\bullet$};
\node  (t5) at (\z+18,1+\y) {$\bullet$};

 \path
(s0) edge [-,line width=1pt,color=black] (s1)
(t0) edge [-,line width=1pt,densely dashed,color=black] (t1)
(s1) edge [-,line width=1pt,densely dashed,color=black] (t2)
(s1) edge [-,line width=1pt,color=black] (s2)
(t1) edge [-,line width=1pt,densely dashed,color=black] (s2)
(t1) edge [-,line width=1pt,color=black] (t2)
(s2) edge [-,line width=1pt,densely dashed,color=black] (s3)
(t2) edge [-,line width=1pt,color=black] (t3)
(s3) edge [-,line width=1pt,densely dashed,color=black] (s4)
(t3) edge [-,line width=1pt,color=black] (s4)
(s4) edge [-,line width=1pt,color=black] (s5)
(s4) edge [-,line width=1pt,densely dashed,color=black] (t5);

\end{tikzpicture}
\end{center}
\caption{A pair of dual tailbiting trellises, based on rows $X_{1,5,3}$ and $Y_{4,2}$} \label{fig:tb}
\end{figure}
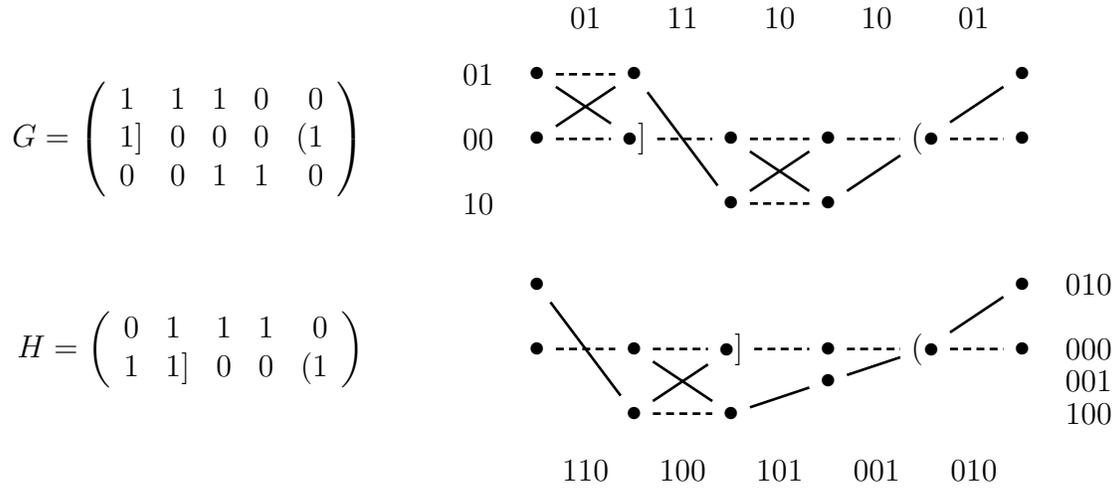

Label codes for KV-trellises constructed from $X$ and $Y$ using the span based BCJR construction. 

\begin{align*}
Y^T =
&\left[ \begin {array}{ccccc} 
1&1&0&0&1\\ 
0&1&1&1&0\\ 
1&0&1&1&1\\ 
1&0&1&1&1\\ 
1&1&0&0&1\end {array} \right]
\qquad
X=
 \left[ \begin {array}{ccccc} 1&1&1&0&0\\ 0&1&1&0&1
\\ 0&0&1&1&0\\ 0&1&0&1&1
\\ 1&0&0&0&1\end {array} \right]
\\
X^T = 
&\left[ \begin {array}{ccccc} 
1&0&0&0&1\\ 
1&1&0&1&0\\ 
1&1&1&0&0\\ 
0&0&1&1&0\\ 
0&1&0&1&1\end {array} \right]
\qquad
Y = 
 \left[ \begin {array}{ccccc} 1&0&1&1&1\\ 1&1&0&0&1
\\ 0&1&1&1&0\\ 0&1&1&1&0
\\ 1&0&1&1&1\end {array} \right]
\end{align*}

\begin{align*}
S(Y^T|X) = &\left[ \begin {array}{c|c|c|c|c|c|c|c|c|c|c} 
0~0~0~0~0   &1   &1~1~0~0~1   &1   &1~0~1~1~1   &1   &0~0~0~0~0   &0   &0~0~0~0~0   &0   &0~0~0~0~0   \\ 
0~0~0~0~0   &0   &0~0~0~0~0   &1   &0~1~1~1~0   &1   &1~1~0~0~1   &0   &1~1~0~0~1   &1   &0~0~0~0~0   \\ 
0~0~0~0~0   &0   &0~0~0~0~0   &0   &0~0~0~0~0   &1   &1~0~1~1~1   &1   &0~0~0~0~0   &0   &0~0~0~0~0   \\ 
0~1~1~1~0   &0   &0~1~1~1~0   &1   &0~0~0~0~0   &0   &0~0~0~0~0   &1   &1~0~1~1~1   &1   &0~1~1~1~0   \\ 
1~1~0~0~1   &1   &0~0~0~0~0   &0   &0~0~0~0~0   &0   &0~0~0~0~0   &0   &0~0~0~0~0   &1   &1~1~0~0~1   \end {array} \right]
\\ \noalign{\medskip}
S(X^T|Y) = &\left[ \begin {array}{c|c|c|c|c|c|c|c|c|c|c} 
1~0~0~0~1   &1   &0~0~0~0~0   &0   &0~0~0~0~0   &1   &1~1~1~0~0   &1   &1~1~0~1~0   &1   &1~0~0~0~1   \\ 
0~1~0~1~1   &1   &1~1~0~1~0   &1   &0~0~0~0~0   &0   &0~0~0~0~0   &0   &0~0~0~0~0   &1   &0~1~0~1~1   \\ 
0~0~1~1~0   &0   &0~0~1~1~0   &1   &1~1~1~0~0   &1   &0~0~0~0~0   &1   &0~0~1~1~0   &0   &0~0~1~1~0   \\ 
0~0~0~0~0   &0   &0~0~0~0~0   &1   &1~1~0~1~0   &1   &0~0~1~1~0   &1   &0~0~0~0~0   &0   &0~0~0~0~0   \\ 
0~0~0~0~0   &1   &1~0~0~0~1   &0   &1~0~0~0~1   &1   &0~1~1~0~1   &1   &0~1~0~1~1   &1   &0~0~0~0~0   \end {array} \right]
\end{align*}

\bigskip

Label codes for the dual trellises defined with rows $1,3,5$ of $X$ and rows $2,4$ of $Y$. 

\newcommand{\du}{{\cdot\,}}

\begin{align*}
S(Y_{2,4}^T|X_{1,3,5}) =  &\left[ \begin {array}{c|c|c|c|c|c|c|c|c|c|c} 
\,\du~0~\du~0~\du   &1   &\du~1~\du~0~\du   &1   &\du~0~\du~1~\du   &1   &\du~0~\du~0~\du   &0   &\du~0~\du~0~\du   &0   &\du~0~\du~0~\du\,   \\ 
\,\du~\du~\du~\du~\du   &\du   &\du~\du~\du~\du~\du   &\du   &\du~\du~\du~\du~\du   &\du   &\du~\du~\du~\du~\du   &\du   &\du~\du~\du~\du~\du   &\du   &\du~\du~\du~\du~\du\,   \\ 
\,\du~0~\du~0~\du   &0   &\du~0~\du~0~\du   &0   &\du~0~\du~0~\du   &1   &\du~0~\du~1~\du   &1   &\du~0~\du~0~\du   &0   &\du~0~\du~0~\du\,   \\ 
\,\du~\du~\du~\du~\du   &\du   &\du~\du~\du~\du~\du   &\du   &\du~\du~\du~\du~\du   &\du   &\du~\du~\du~\du~\du   &\du   &\du~\du~\du~\du~\du   &\du   &\du~\du~\du~\du~\du\,   \\ 
\,\du~1~\du~0~\du   &1   &\du~0~\du~0~\du   &0   &\du~0~\du~0~\du   &0   &\du~0~\du~0~\du   &0   &\du~0~\du~0~\du   &1   &\du~1~\du~0~\du\,   \end {array} \right] 
\\ \noalign{\medskip}
S(X_{1,3,5}^T|Y_{2,4}) = &\left[ \begin {array}{c|c|c|c|c|c|c|c|c|c|c} 
\du~\du~\du~\du~\du   &\du   &\du~\du~\du~\du~\du   &\du   &\du~\du~\du~\du~\du   &\du   &\du~\du~\du~\du~\du   &\du   &\du~\du~\du~\du~\du   &\du   &\du~\du~\du~\du~\du   \\ 
0~\du~0~\du~1   &1   &1~\du~0~\du~0   &1   &0~\du~0~\du~0   &0   &0~\du~0~\du~0   &0   &0~\du~0~\du~0   &1   &0~\du~0~\du~1   \\ 
\du~\du~\du~\du~\du   &\du   &\du~\du~\du~\du~\du   &\du   &\du~\du~\du~\du~\du   &\du   &\du~\du~\du~\du~\du   &\du   &\du~\du~\du~\du~\du   &\du   &\du~\du~\du~\du~\du   \\ 
0~\du~0~\du~0   &0   &0~\du~0~\du~0   &1   &1~\du~0~\du~0   &1   &0~\du~1~\du~0   &1   &0~\du~0~\du~0   &0   &0~\du~0~\du~0   \\ 
\du~\du~\du~\du~\du   &\du   &\du~\du~\du~\du~\du   &\du   &\du~\du~\du~\du~\du   &\du   &\du~\du~\du~\du~\du   &\du   &\du~\du~\du~\du~\du   &\du   &\du~\du~\du~\du~\du   \end {array} \right]
\end{align*}

\newpage

\section{A reduced characteristic matrix for the Golay code} \label{S:golay}

The binary Golay code of length $24$ is generated by the following twelve row vectors over $GF(4) = \{ 0,1,a,b \}$ after the concatenation
$0 \mapsto 00, 1 \mapsto 11, a \mapsto 01, b \mapsto 10$. 
\[
\left[ \begin{array}{cccccccccccc}
1 &a &b &1 &b &a &\dt &\dt &\dt &\dt &\dt &\dt \\
a &b &1 &a &1 &b &\dt &\dt &\dt &\dt &\dt &\dt \\ 
\dt &\dt &b &1 &a &1 &b &a &\dt &\dt &\dt &\dt \\
\dt &\dt &a &b &1 &b &a &1 &\dt &\dt &\dt &\dt \\ 
\dt &\dt &\dt &\dt &1 &a &b &1 &b &a &\dt &\dt \\
\dt &\dt &\dt &\dt &a &b &1 &a &1 &b &\dt &\dt \\ 
\dt &\dt &\dt &\dt &\dt &\dt &b &1 &a &1 &b &a \\
\dt &\dt &\dt &\dt &\dt &\dt &a &b &1 &b &a &1 \\ 
b &a &\dt &\dt &\dt &\dt &\dt &\dt &1 &a &b &1 \\
1 &b &\dt &\dt &\dt &\dt &\dt &\dt &a &b &1 &a \\ 
a &1 &b &a &\dt &\dt &\dt &\dt &\dt &\dt &b &1 \\
1 &b &a &1 &\dt &\dt &\dt &\dt &\dt &\dt &a &b
\end{array} \right]
\]
Every even row is a scalar multiple of the preceeding odd row and every odd row is obtained with an obvious symmetry from the preceeding even row. If we apply the concatenation in combination with a transposition of symbols $4k+2$ and $4k+4$, i.e.
\[
\begin{array}{lll}
ab \mapsto 0110 \mapsto 0011,  &b1 \mapsto 1011 \mapsto 1110, &1a \mapsto 1101 \mapsto 1101, \\
ba \mapsto 1001 \mapsto 1100,  &1b \mapsto 1110 \mapsto 1011, &a1 \mapsto 0111 \mapsto 0111,  
\end{array}
\]
then the result is a generator matrix $G$ for the Golay code with optimal minimal spanlength $108 = 12 \cdot 9$. The generator matrix is of the form (1) in \cite{CFV99} and agrees with the generator matrix in (4) of that paper after a permutation $(1\;2)(7\;8)(9\;10)(15\;16)(17\;18)(23\;24).$ 
The reduced characteristic matrix $X$ for $G$ is of the form
\[
X = \left[ \begin{array}{lll}
A &B &C \\ C &A &B \\ B &C &A
\end{array} \right]
\]
with $(A|B|C)$ equal to
{\small
\[
\left[ \begin {array}{cccccccc|cccccccc|cccccccc} 
1&1&0&1&1&1&1&0&1&1&0&0&0&0&0&0&0&0&0&0&0&0&0&0\\
0&1&1&1&1&0&0&0&1&0&0&0&1&0&1&0&1&0&0&0&0&0&0&0\\ 
0&0&1&1&1&1&0&1&1&0&1&1&0&0&0&0&0&0&0&0&0&0&0&0\\ 
0&0&0&1&0&1&1&1&0&1&1&0&0&0&1&0&0&0&1&0&0&0&0&0\\ 
0&0&0&0&1&1&1&0&0&1&1&1&1&1&0&0&0&0&0&0&0&0&0&0\\ 
0&0&0&0&0&1&0&1&0&1&0&0&1&1&0&0&1&0&1&0&1&0&0&0\\ 
0&0&0&0&0&0&1&1&1&0&1&1&0&1&1&1&0&0&0&0&0&0&0&0\\ 
0&0&0&0&0&0&0&1&0&0&1&0&1&0&1&1&0&0&1&0&1&0&1&0
\end{array} \right]
\]
}
The odd rows have spanlength $9$ and the even rows have spanlength $15$.

\section{Nonattacking rooks for function fields} \label{S:rooks}

Minimal spans for characteristic generators have an analogue in discrepancies for function fields.

\bigskip

Given a field $K$ of algebraic functions and two rational points (places), a function in $K$ has span $(i,j)$ if it has a pole of order $j$ at the first point, no other poles, and a zero of order $i$ at the second point. For each $j \in \mathbb{Z}$, let $\sigma(j) \in \mathbb{Z}$ be maximal such that there exists a function with span $(i,j)$. This defines a pair $(\sigma(j),j)$. The pairs fill the plane with an infinite set of nonattacking rooks. let $n > 0$ be minimal such that there exists a function with $(i,j)=(n,n)$. Then the pattern is periodic with period $n$. 

\bigskip

For the function field $F(x,y)$, defined with equation $y^8+y = x^{10}+x^3$ over the field of eight elements, $n=13$. There exist functions with minimal spans 
\begin{multline*}
(i,j) \in \{ (0,0), (1,8), (2,16), (3,10), (4,18), (5,12), (6,20), \\ (7,28), (8,22), (9,30),(10,24),(11,32),(12,40) \}.
\end{multline*}
Figure~6 pictures the minimal spans as nonattacking rooks on a chess board of size $13$
with labeling $\{ 0, 1, \ldots, 11, 12 \} \times \{ 1, 2, \ldots, 12, 0 \}$ modulo $13$. 

\bigskip


\begin{figure}[h]
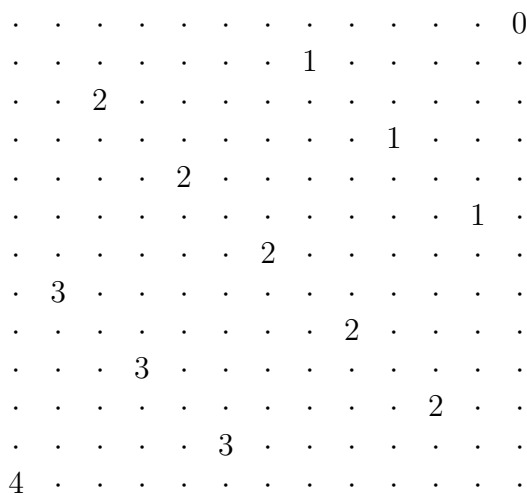
 \label{F:Sz}
\begin{center}
$\begin{array}{ccccccccccccc}
\cdot &\cdot &\cdot &\cdot &\cdot &\cdot &\cdot &\cdot &\cdot &\cdot &\cdot &\cdot &0\\
\cdot &\cdot &\cdot &\cdot &\cdot &\cdot  &\cdot &1 &\cdot &\cdot &\cdot  &\cdot &\cdot \\
\cdot &\cdot &2 &\cdot &\cdot &\cdot &\cdot &\cdot &\cdot &\cdot &\cdot  &\cdot &\cdot \\
\cdot &\cdot &\cdot &\cdot &\cdot &\cdot &\cdot &\cdot &\cdot &1 &\cdot  &\cdot &\cdot \\
\cdot &\cdot &\cdot &\cdot &2 &\cdot &\cdot &\cdot &\cdot &\cdot &\cdot  &\cdot &\cdot  \\
\cdot &\cdot &\cdot &\cdot &\cdot &\cdot &\cdot &\cdot &\cdot &\cdot  &\cdot &1 &\cdot \\
\cdot &\cdot &\cdot &\cdot &\cdot &\cdot &2 &\cdot &\cdot &\cdot &\cdot  &\cdot &\cdot \\
\cdot &3 &\cdot &\cdot &\cdot &\cdot &\cdot &\cdot &\cdot &\cdot &\cdot  &\cdot &\cdot \\
\cdot &\cdot &\cdot &\cdot &\cdot &\cdot &\cdot &\cdot &2 &\cdot &\cdot  &\cdot &\cdot \\
\cdot &\cdot &\cdot &3 &\cdot &\cdot &\cdot &\cdot &\cdot &\cdot &\cdot  &\cdot &\cdot \\
\cdot &\cdot &\cdot &\cdot &\cdot &\cdot &\cdot &\cdot &\cdot &\cdot &2  &\cdot &\cdot \\
\cdot &\cdot &\cdot &\cdot &\cdot &3 &\cdot &\cdot &\cdot &\cdot &\cdot  &\cdot &\cdot \\
4 &\cdot &\cdot &\cdot &\cdot &\cdot &\cdot &\cdot &\cdot &\cdot &\cdot &\cdot &\cdot
\end{array}$
\end{center}
\caption{Nonattacking rooks for the Suzuki curve over the field of eight elements.}
\end{figure}

\end{document}